\newif\iffocs
\newif\ifconference
\newcommand{\jakub}[1]{}
\newcommand{\vasek}[1]{}
\newtheorem{theorem}{Theorem}[section]
\newtheorem{corollary}[theorem]{Corollary}
\newtheorem{definition}[theorem]{Definition}
\newtheorem{lemma}[theorem]{Lemma}
\newtheorem{proposition}[theorem]{Proposition}
\newtheorem{fact}[theorem]{Fact}
\newtheorem{claim}[theorem]{Claim}
\newtheorem{result}[theorem]{Result}
\theoremstyle{remark}
\renewcommand{\tilde}{\widetilde}
\renewcommand{\hat}{\widehat}
\renewcommand{\P}{\mathrm{\mathbf{P}}}
\newcommand{\E}{\mathrm{\mathbf{E}}}
\newcommand{\Var}{\mathrm{\mathbf{Var}}}
\newcommand{\eps}{\epsilon}
\newcommand{\fA}{\mathcal{A}}
\newcommand{\fD}{\mathcal{D}}
\newcommand{\fF}{\mathcal{F}}
\newcommand{\fI}{\mathcal{I}}
\newcommand{\fS}{\mathcal{S}}
\renewcommand{\phi}{\varphi}
\newcommand{\R}{\mathbb{R}}
\newcommand{\tmu}{\tilde{\mu}}
\newcommand{\tsigma}{\tilde{\sigma}}
\newcommand{\hmu}{\hat{\mu}}
\newcommand{\hsigma}{\hat{\sigma}}
\newcommand{\tN}{\tilde{N}}
\newcommand{\hN}{\hat{N}}
\newcommand{\tq}{\tilde{q}}
\newcommand{\algorithmstyle}[1]{\renewcommand{\algocf@style}{#1}}
\newcommand{\bmean}{\textsc{Block Mean}\xspace}
\newcommand{\stepmean}{\textsc{Step-Mixture Mean}\xspace}
\newcommand{\mean}{\textsc{Mean}\xspace}
\newcommand{\freq}{\textsc{Frequency}\xspace}
\newcommand{\quantile}{\textsc{Quantile}\xspace}
\newcommand{\median}{\textsc{Median}\xspace}
\newcommand{\mquantile}{\textsc{Mixture Quantile}\xspace}
\newcommand{\distks}{\textsc{Distribution learning (KS)}\xspace}
\newcommand{\bdistks}{\textsc{Block Distribution learning (KS)}\xspace}
\newcommand{\mks}{\textsc{Mixture learning (KS)}\xspace}
\newcommand{\bfreq}{\textsc{Block Frequency}\xspace}
\newcommand{\stepfreq}{\textsc{Step-Mixture Frequency}\xspace}
\newcommand{\minfty}{\textsc{Mixture learning ($\ell_\infty$)}\xspace}
\newcommand{\stepinfty}{\textsc{Step-Mixture learning ($\ell_\infty$)}\xspace}
\newcommand{\distinfty}{\textsc{Distribution learning ($\ell_\infty$)}\xspace}
\newcommand{\bdistinfty}{\textsc{Block Distribution learning ($\ell_\infty$)}\xspace}
\newcommand{\bquantile}{\textsc{Block Quantile}\xspace}
\newcommand{\stepquantile}{\textsc{Step-Mixture Quantile}\xspace}
\date{}
\title{Instance-Optimality in
  I/O-Efficient Sampling and Sequential Estimation\thanks{Jakub Tětek and Mikkel Thorup were supported by the VILLUM Foundation grant 16582. Shyam Narayanan was supported by the NSF Graduate Research Fellowship, and a Google Fellowship. Jakub Tětek and Václav Rozhoň were supported by the Ministry of Education and Science of Bulgaria (support for INSAIT, part of the Bulgarian National Roadmap for Research Infrastructure).\\ The work was done while Jakub Tětek was affiliated with BARC, University of Copenhagen. The work was partially done while Václav Rozhoň was visiting MIT.}
}
\author{
\IEEEauthorblockN{Shyam Narayanan}
\IEEEauthorblockA{
\small MIT\\
\small \texttt{shyamsn@mit.edu}\\
}
\and
\IEEEauthorblockN{Václav Rozhoň}
\IEEEauthorblockA{
\small INSAIT, Sofia University\\``St. Kliment Ohridski''\\
\small \texttt{vaclavrozhon@gmail.com}\\
}
\and
\IEEEauthorblockN{Jakub Tětek}
\IEEEauthorblockA{
\small INSAIT, Sofia University\\``St. Kliment Ohridski''\\
\small \texttt{j.tetek@gmail.com}\\
}
\and
\IEEEauthorblockN{Mikkel Thorup}
\IEEEauthorblockA{
\small BARC, University of Copenhagen\\
\small \texttt{mikkel2thorup@gmail.com}\\
}
}
\author{
Shyam Narayanan\\
\texttt{shyamsn@mit.edu}\\
MIT
\and
    Václav Rozhoň\\
    \texttt{vaclavrozhon@gmail.com}\\
    INSAIT, Sofia University\\ ``St. Kliment Ohridski''
    \and
	Jakub Tětek\\
	\texttt{j.tetek@gmail.com}\\
	INSAIT, Sofia University\\ ``St. Kliment Ohridski''
    \and
        Mikkel Thorup\\
        \texttt{mikkel2thorup@gmail.com}\\
        BARC, Univ. of Copenhagen
}
\begin{document}

%
\maketitle

\begin{abstract}
Suppose we have a memory storing $0$s and $1$s and we want to estimate the frequency of $1$s by sampling. We want to do this I/O-efficiently, exploiting that each read gives a block of $B$ bits at unit cost; not just one bit.  If the input consists of uniform blocks: either all 1s or all 0s, then sampling a whole block at a time does not reduce the number of samples needed for estimation. On the other hand, if bits are randomly permuted, then getting a block of $B$ bits is as good as getting $B$ independent bit samples.  However, we do not want to make any such assumptions on the input. Instead, our goal is to have an algorithm with instance-dependent performance guarantees which stops sampling blocks as soon as we know that we have a probabilistically reliable estimate. We prove our algorithms to be instance-optimal among algorithms oblivious to the order of the blocks, which we argue is the strongest form of instance optimality we can hope for. We also present similar results for I/O-efficiently estimating mean with both additive and multiplicative error, estimating histograms, quantiles, as well as the empirical cumulative distribution function.

We obtain our above results on I/O-efficient sampling by reducing to corresponding problems in the so-called sequential estimation. In this setting, one samples from an unknown distribution until one can provide an estimate with some desired error probability. Sequential estimation has been considered extensively in statistics over the past century. However, the focus has been mostly on parametric estimation, making stringent assumptions on the distribution of the input, and thus not useful for our reduction. In this paper, we make no assumptions on the input distribution (apart from its support being a bounded set). Namely, we provide non-parametric instance-optimal results for several fundamental problems: mean and quantile estimation, as well as learning mixture distributions with respect to $\ell_\infty$ and the so-called Kolmogorov-Smirnov distance.

All our algorithms are simple, natural, and practical, and some are even known from other contexts, e.g., from statistics in the parameterized setting. The main technical difficulty is in analyzing them and proving that they are instance optimal.
\end{abstract}

\vasek{Vasek to jakub -- I would start the abstract with something like this (see below). this is intended just as a food for thought}

\iffocs
\else
 \thispagestyle{empty}

 \thispagestyle{empty}

%
%

\newpage
 \thispagestyle{empty}
\tableofcontents
\newpage

\clearpage
\setcounter{page}{1}
\fi


%
%

\section{Introduction}
Suppose we have computer memory that stores a sequence of $0$s and $1$s. How do we best estimate the frequency of 1s in this memory? We would like to exploit the fact that every time we access the memory, we get a whole cache-line of $B$ bits (and not just one bit). Unfortunately, not much can be said about this problem if one adopts the lens of worst-case analysis: After all, on inputs where each block contains only zeros or only ones, sampling an entire block offers no advantage over sampling a single bit. 
However, while we cannot do better \emph{in general}, we can do much better almost always. We would like to have algorithms with beyond-worst-case theoretical guarantees, but we also would like to not make assumptions on what the input looks like. To do this, we use the notion of \emph{instance-optimality}. We use this notion to investigate the above counting problem, as well as approximately computing quantiles, histograms, and empirical cumulative distribution functions in the I/O setting. Using instance optimality, we prove strong beyond-worst-case theoretical guarantees about our algorithms, even when the problems are not interesting from the worst-case perspective.

One of the main foci of this paper is showing that we may solve all the mentioned problems by reducing them to problems in the statistical setting of \emph{sequential estimation}. 
In this setting, the goal is to create adaptive estimators/algorithms that make as few samples as possible, stopping as soon as a desired level of accuracy is reached.
We obtain all our results on I/O-efficient computation by reducing the problems to the sequential estimation setting and then solving the respective problems that we reduced to.
%
There are thus two closely related parallel stories running through this paper: that of sequential estimation and that of estimation in the external memory setting. 

For example, our bit-counting problem reduces to the following problem: Given a distribution on $[0,1]$, estimate the mean to within additive $\epsilon$ with probability, say, $2/3$.
The standard solution is to make $\Theta(1/\epsilon^2)$ samples and use the sample mean as the estimate. 
While this algorithm is optimal from the standard, worst-case, perspective, in this paper we challenge the assumption that the worst-case perspective is the ``right'' one for this kind of problems. In fact, one can often do significantly better, with $O(1/\epsilon)$ samples sufficing in some very natural situations. 

We investigate this and other estimation problems and present algorithms that are optimal in the sense of \emph{instance-optimality}, meaning that \emph{we are as efficient as possible on any instance}  (see \Cref{def:instance_optimality}), inspired by the work of \citet{dagum2000optimal} who gave an analogous result for mean estimation with \emph{relative} error. Namely, we devise instance-optimal algorithms for mean estimation (with additive error, in contrast to \cite{dagum2000optimal}), as well as for estimating quantiles and learning mixture distributions, with respect to $\ell_\infty$ and the Kolmogorov-Smirnov\footnote{The Kolmogorov-Smirnov distance is defined as $KS(X_1,X_2) = \sup_x |P[X_1 \leq x] - P[X_2 \leq x]|$.} (KS) error metrics. We also show that our algorithm for additive mean estimation can be used by a simple reduction to solve the relative version of the problem.

All our algorithms are very simple, with the main technical difficulty being in proving that they are instance optimal. For example, for the simplest problem of estimating the mean of a distribution on $[0,1]$, we show that the following algorithm is instance optimal:

\vspace{1em}
\algorithmstyle{plain}
\begin{minipage}{\linewidth}
\hspace{.3cm} 
\begin{algorithm}[H]
\label{alg:mean_informal}
$S_1 \leftarrow$ Get $O(1/\epsilon)$ samples\\
Let $\tilde{\sigma}^2$ be the empirical variance of $S_1$\\
$S_2 \leftarrow$ Get $O(1/\epsilon + \tilde{\sigma}^2/\epsilon^2)$ samples\\
Return the sample mean of $S_2$
\end{algorithm}
\end{minipage}
\vspace{1em}
\algorithmstyle{ruled}

Interestingly, a very similar two-phase algorithm has already appeared 70 years ago in the statistics literature, namely in the works of \citet{stein1945two} and \citet{cox1952estimation}. However, these papers work in a significantly different (parametric) setting and their analysis of the algorithm is fundamentally different. 
We instead use the framework of instance optimality to analyze this algorithm. We stress that instance optimality is essentially the strongest guarantee that one can prove about an algorithm: it states that \emph{no correct algorithm} can beat the performance of the algorithm at hand \emph{on any possible input} by more than a constant factor. 

\paragraph{Additive errors.}
We note that the above simple algorithm only works for additive
errors which are the main focus of this paper. Multiplicative errors were considered in  \citet{dagum2000optimal}; we show in \Cref{sec:mean_multiplicative} that our algorithm for additive error can be easily modified to achieve also the same guarantee as \citet{dagum2000optimal}. Apart from mean estimation, we also extend our basic 
additive algorithm to the estimation of medians, quantiles, and learning distributions in $\ell_\infty$ and KS distance.  

%

\paragraph{Sequential estimation.} 
Sequential estimation in statistics focuses on developing algorithms that stop sampling once the collected data suffices to provide a reliable estimate, often requiring fewer samples than in the worst case.
This area of statistics has seen a lot of interest over the past century \cite{sequential_survey}. 
However, almost all existing approaches fall short of being instance-optimal, being either overly reliant on specific assumptions about sample distributions (i.e., they are parametric) or lacking in proven optimality. A more detailed comparison is presented in \Cref{sec:related_work}. 

Using techniques from theoretical computer science, more specifically from distribution testing
, we get (nearly) instance-optimal sequential estimators 
for the following problems: mean estimation (both with additive and multiplicative error), quantile estimation, and learning mixture distributions w.r.t.\ the $\ell_\infty$ and Kolmogorov-Smirnov (KS) distance. 

We stress that our algorithms make no assumptions on the input distribution.
This is important, because without this, we cannot make the reductions to/from sequential estimation work. Indeed, this is why the known results in sequential estimation are not sufficient for our purposes.

\paragraph{I/O-efficient sampling.} One application of our results on sequential estimation, and indeed the original motivation for considering the above setting, is I/O-efficient sampling. 

Suppose we have block-based access to the memory. Namely, assume we are in the standard external memory setting\footnote{We ignore the size of the cache as it is not relevant to our applications. Instead, our focus is on the block size $B$.}: whenever we perform a read operation, we get a block of memory of size $B$ at a unit cost. 
Our goal is to get as much benefit from seeing the whole block, as possible. 

Notice that for many problems, accessing an entire block of $B$ items does not help \emph{in the worst case}. 
For example, in the case of estimating the number of $1$'s in a $0/1$ string to within additive $\pm \eps$, if each block consists of only $1$'s or only $0$'s, we get no benefit from getting the whole block and we need $\Theta(1/\epsilon^2)$ samples. This is the same number as if the block size is $1$.

On the other hand, if the blocks look random-like, the fraction of $1$'s in a block is roughly the same as the global fraction of $1$'s. 
If the blocks are large enough, it turns out that for random-like blocks it suffices to take a much smaller sample of $O(1/\epsilon)$ blocks to produce a correct estimate. 
These many blocks are also necessary as we have to check the possibility of $\eps$ fraction of blocks having a significantly different fraction of $1$'s than the rest. The much better sample complexity of $O(1/\epsilon)$ on this instance is not revealed if we view the problem in the framework of worst-case complexity.


The standard remedy to this issue is to use parameterized complexity \cite{cygan2015parameterized}, i.e., to find a reasonable parameter of an input that distinguishes the hard and the easy inputs above. The general problem with parameterized complexity is that one must separately come up in an ad-hoc fashion with a parameter for each considered problem and it is questionable whether the parameter captures practical instances. 

In the spirit of this paper, we approach this problem with the lens of instance optimality, i.e., we want to use parametrization where the parameter is the input itself. 
However, as we also prove in \Cref{sec:model_motivation}, the standard notion of instance optimality cannot be achieved due to silly counterexamples that we cover later. 
Thus, we instead use a standard variant of instance optimality known as the \emph{order-oblivious instance optimality}\cite{afshani2017instance} where we ignore the order of the blocks in the input. 
We provide more motivation behind this setting and justify it in detail in \Cref{sec:model_motivation}. 

Fortunately, it turns out that there is a strong link between the models of sequential estimation and I/O efficient sampling algorithms. 
We later prove \Cref{thm:io_transfer_informal} that claims under mild technical assumptions the equivalence of instance optimality in the two settings. This provides a general way of transferring instance optimal sequential estimation algorithms to order-oblivious instance optimal I/O efficient algorithms, and vice versa. 

This equivalence between the two settings motivates the general strategy of this paper: We provide (nearly) instance optimal algorithms in the model of sequential estimation for a number of problems: approximate counting, estimating the average, quantiles, histograms, or the empirical distribution. Using the equivalence result between the models, we get I/O-efficient algorithms for all these problems that satisfy a very strong notion of (nearly) order-oblivious instance optimality. 
In other words, our I/O-efficient algorithms get essentially all possible benefit of sampling long blocks from memory that any algorithm could get.




\paragraph{Experimental evaluation: our approach is practical.} While the focus of this paper is the theoretical understanding of the two investigated settings and their relation, we present a brief experiment in \Cref{sec:experiments}, demonstrating the practical value of our work.


\begin{figure*}[t]
  \centering
  \includegraphics[width=\linewidth]{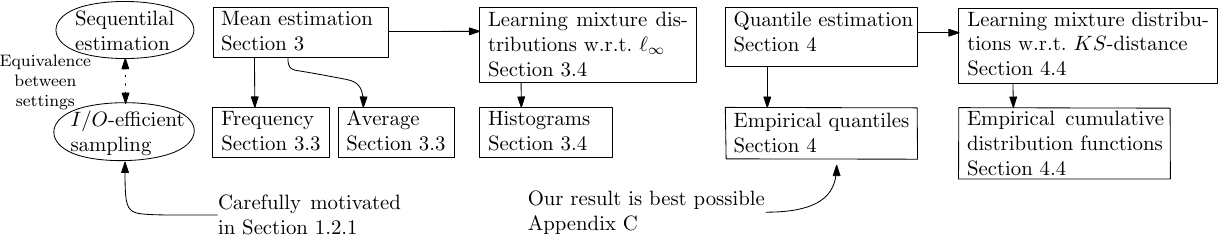}
  \caption{Our results and the implications between them. We use the equivalence between the I/O and sequential estimation settings (see \Cref{thm:io_transfer}) to get the implications going from the top (in the sequential estimation setting) to the bottom (on I/O-efficient sampling).}
  \label{fig:paper_structure}
\end{figure*}

\paragraph{How to read this paper.}
In \Cref{sec:intro_results}, we discuss our results as well as our techniques. In \Cref{sec:model_motivation}, we discuss I/O-efficient computation and we motivate the notion of order-oblivious instance optimality. 

\Cref{sec:preliminaries} is preliminaries, and we suggest that on the first reading, the reader only skims through it. In \Cref{sec:means}, we discuss mean estimation, counting, distribution learning (w.r.t.\ $\ell_\infty$), and histograms in both distributional and I/O settings. In \Cref{sec:quantiles}, we discuss quantile estimation and distribution learning (w.r.t.\ KS distance), again in both settings. In \Cref{sec:experiments}, we perform a basic experiment, observing the practicality of our approach.


\subsection{Our results and techniques} \label{sec:intro_results}
In this section, we discuss our results on sequential estimation and I/O-efficient sampling, as well as the techniques that we use to get our results. Throughout this section, we use $\eps$ and $\delta$ to denote the additive error of an algorithm and the algorithm's failure probability, respectively.


\subsubsection{Results on sequential estimation}
Suppose that we are given sampling access to an unknown distribution $D$ and we want to estimate a parameter $\theta(D)$. In the sequential estimation setting, we have a desired bound $\eps$ on the error and seek to achieve this error with at least a desired probability $1-\delta$, while minimizing the (expected) sample complexity of the algorithm. We discuss in \Cref{sec:related_work} the statistical literature on sequential estimation, as well as why most of those results cannot be easily turned into efficient algorithms.

We start by investigating one of the simplest such problems, mean estimation. We prove the following result:
\begin{result}
\label{res:mean}
There is an instance-optimal sequential estimator for estimating the mean of a random variable with a bounded range, with additive error.
\end{result}
We also show in \Cref{sec:mean_multiplicative} that this can be extended to mean estimation with multiplicative error, thus obtaining an alternative proof of the result of \cite{dang2023optimalitymeanestimationworstcase}.
We also show how one can estimate the quantile of a distribution, constructing a nearly instance-optimal algorithm. This is the most technically challenging part of this paper, especially the proof of the optimality of the factor that we lose which can be found in \Cref{sec:gap}. 

\begin{result}
\label{res:quantile}
We give an instance optimal, up to a factor of $O\left( 
1+\frac{\log \left( \delta^{-1} \log \eps^{-1} \right)}{\log \delta^{-1}}
\right)$,  sequential estimator for estimating the $q$-th quantile of a random variable. Moreover, this factor is the smallest factor possible. 
%
\end{result}
The error $\eps$ is here measured as the difference between $q$ and the closest value $q'$ for which the returned value is the $q'$-th quantile \emph{exactly}. This is analogous to the error measure standard in, for example, streaming algorithms for quantiles.


Using these two results, we can use standard reductions to construct nearly instance-optimal algorithms for learning distributions with respect to two different metrics. Namely, we consider learning a distribution w.r.t.\ $\ell_\infty$ and the Kolmogorov-Smirnov distance. 

\begin{result}
\label{res:dist}
We give instance-optimal, up to a factor of $O\left( 
\log\log\eps^{-1} \cdot \left( 1+\frac{\log \eps^{-1}}{\log \delta^{-1}}\right)
\right)$, sequential estimators for learning distributions w.r.t.\ the $\ell_\infty$ distance and the Kolmogorov-Smirnov distance.
%
\end{result}

In fact, we solve problems that are a bit more general than those discussed in \Cref{res:quantile,res:dist}. Instead of solving the estimation/learning problems on a distribution we have a sampling access to, we solve the problems on a \emph{mixture distribution}. Here, instead of having access to a distribution over some set $S$, we have a distribution over distributions on $S$. 
%
Our task is to solve the task (e.g. estimate the $q$-th quantile) on the mixture distribution $M(D)$ defined as the ``weighted average'' of sampled distributions (see \Cref{def:mixture}). 
We solve this more abstract problem mostly because we need this for our order-oblivious I/O-efficient algorithms. We discuss details in \Cref{sec:techniques}. 

\subsubsection{Results on I/O-efficient estimation}
\label{sec:results_io}
We now discuss our results on $I/O$-efficient estimation. Our results rely on the notion of \emph{order-oblivious instance-optimality} \cite{afshani2017instance}, which we introduce and motivate in \Cref{sec:model_motivation}. Roughly speaking, an algorithm satisfying this notion is instance-optimal if the input is viewed as a multiset. 
We need to use this weaker notion of optimality, as there are silly counterexamples that prevent us from achieving instance optimality.


We observe that there is a strong connection between the sequential estimation setting and the setting of I/O-efficient computation. 
For example, consider the problem of estimating the mean in the I/O model where each read retrieves $B$ consecutive values. This problem is suspiciously similar to the problem of finding a sequential estimator for the block-mean problem where the input distribution is over blocks of $B$ numbers and we want to estimate the overall mean. 

We prove that, up to some technical assumptions, the two settings of sequential estimation and I/O-efficient computation are equivalent. We discuss this result later in \Cref{sec:techniques}. Combining this with slightly more general versions of \Cref{res:mean,res:quantile,res:dist}, we get the following result with minimal additional effort:

\begin{result}
\label{res:io}
We give (near) order-oblivious instance-optimal algorithms for the following problems: I/O-efficient estimation of frequencies, means, histograms, empirical quantiles, and empirical cumulative distribution functions.   
\end{result}





\subsubsection{Our techniques}
\label{sec:techniques}

We start by sketching the techniques that go into proving our simplest result: an optimal sequential estimator for the mean of a distribution on a bounded range (\Cref{res:mean}). We only sketch here the upper bound -- while we give a proof of the lower bound for the sake of completeness in this paper, a proof appeared in \cite{dagum2000optimal,dang2023optimalitymeanestimationworstcase} and we do not give the intuition for the proof here.
We then go on to discuss the problem of estimating quantiles. Finally, we discuss our general equivalence result between the sequential estimation setting and I/O-efficient estimation.

\paragraph{Mean estimation upper bound (\Cref{res:mean}).} 
We have access to independent samples $X_1, X_2, \cdots$ from a distribution supported on $[0,1]$ and we want to estimate its mean.
We use the algorithm from page \pageref{alg:mean_informal}. Recall that it has two phases. 
In the first phase, we use $T_1 = O(1/\eps)$ samples to get an estimate $\tilde{\sigma}^2$ of the variance of the distribution. 
In the second phase, we use $T_2 = O(1/\eps + \tilde{\sigma}^2/\eps^2)$ samples to estimate the mean. 
We will prove that the algorithm returns a correct estimate with constant probability and its expected sample complexity is 
\begin{align}
\label{eq:complexity}
O\left(\frac{1}{\eps} + \frac{\sigma^2}{\eps^2}\right).     
\end{align}

Note that if we used $T_2' = \sigma^2/\eps^2$ samples in the second phase instead of $T_2$, we could use Chebyshev's inequality to conclude that with constant probability, the final estimate satisfies $\hat\mu = \mu \pm O\left( \sigma / \sqrt{T_2'} \right) = \mu \pm O(\eps)$, as needed. Such an algorithm would also have the desired sample complexity. The issue of course is that we do not know the value of $\sigma^2$ and we thus have to use $T_2$ samples instead of $T_2'$. 

However, we can observe that our estimate $\tilde\sigma^2$ of $\sigma^2$ is unbiased. This implies that the expected sample complexity of our algorithm is equal to \eqref{eq:complexity}. Furthermore, we are next going to show that with constant probability, $T_2 = \Omega(T_2')$. In particular, we next prove that with constant probability,  
\begin{align}
    \label{eq:want}
  1/\eps + \tilde{\sigma}^2/\eps^2 \ge \frac12 \cdot \sigma^2/\eps^2.   
\end{align}
Whenever we have $\sigma^2 \le \eps$, we can rewrite this condition as $\sigma^2/\eps^2 \le 1/\eps$ and conclude that \cref{eq:want} holds. We thus next assume that $\sigma^2 \ge \eps$. 
Recall that $\tilde\sigma^2$ is the empirical variance of the samples $X_1, \cdots, X_{T_1}$ defined as 
\begin{align}
\label{eq:slit}
\tilde\sigma^2 = \frac{1}{T_1 - 1} \sum_{i = 1}^{T_1} (X_i - \tilde\mu)^2
\end{align}
Routine calculations justify that we may proceed as if the right-hand side of \cref{eq:slit} operated with the true mean, $\mu$, instead of its estimate $\tilde\mu$; in particular, we can use Chebyshev's inequality to obtain that with constant probability, we have



$$
\tilde\sigma^2 = \sigma^2 \pm O\left( \sqrt{\frac{\Var\left[ (X_1-\mu)^2 \right]}{T_1}}\right).
$$ 
Using the fact that our distribution is supported on $[0,1]$, we can compute that $\Var[(X_1-\mu)^2] \le \Var[(X_1-\mu)] = \sigma^2$. Using this bound and the assumption $\sigma^2 \ge \eps$, we conclude that with constant probability, 
\begin{align}
\label{eq:bore}
\tilde\sigma^2 = \sigma^2 \pm O\left( \frac{\sigma}{\sqrt{1/\eps}}\right). 
\end{align}
Moreover, by choosing $T_1 = O(1/\eps)$ large enough we make sure that the constant factor in the standard deviation of $\tilde\sigma^2$ in \cref{eq:bore} is small enough to let us conclude that with constant probability, $\tilde\sigma^2 \ge \sigma^2)$. \Cref{eq:want} is thus proven. 

\medskip \noindent
\emph{Parametrization vs. instance-optimality.} As a small aside, one can ask why we are thinking of this algorithm as an instance-optimal algorithm and not as ``an optimal algorithm with respect to the parametrization of input by its variance''. First, notice that since our lower bound presented next holds for every distribution, our analysis additionally shows that parametrizing the sample complexity by the variance is the best possible parametrization. Moreover, in other problems we consider, it is less clear at first glance whether the parameter we consider is interesting. Thus it is preferable to think of the algorithm from page \pageref{alg:mean_informal} as being instance-optimal, and not just ``optimal if we parametrize by variance''.  \vasek{V: not sure what this paragraph is doing here}

\paragraph{Quantile estimation upper bound (\cref{res:quantile})} The above-described upper and lower bounds are for the simple case of mean estimation with additive error. Some more complications arise with estimating quantiles. 
We will next present the intuition for estimating quantiles of mixture distributions. Recall that in mixture distributions, each sample is itself a distribution. 

Our plan to estimate the $q$-th quantile of mixture distribution is to use an algorithm very similar to the algorithm from page \pageref{alg:mean_informal}. However, it is not clear what the right analog of the variance $\sigma^2$ and its estimate $\tilde\sigma^2$ should be in this case. 

Here is the right parameter: First, let $N \in [0, 1]$ be the number at the $q$-th quantile in the input distribution $D$. Consider the quantity 
\[
\sigma_q^2 = \Var_{X \sim D}[q_X(N)]
\]
where $q_X(N)$ stands for the quantile of the number $N$ in the distribution $X$ sampled from $D$. The parameter $\sigma_q^2$ is the right analog of the variance $\sigma^2$ from the mean estimation problem. 
\vasek{check notation}

To approximate $\sigma_q^2$, we can sample distributions from $D$, compute the empirical $q$-th quantile $\tilde{N}$, and then compute the empirical estimate $\tilde\sigma_q^2$ analogously to how we obtain the empirical estimate $\tilde\sigma^2$ in the mean estimation scenario.


Up to the fact that we need the above definition $\sigma_q^2$ and $\tilde\sigma_q^2$ instead of $\sigma^2$ and $\tilde\sigma^2$, our basic algorithm for quantile estimation is analogous to our mean estimation algorithm. 

While this approach results in an instance-optimal sequential estimator in the case of continuous distributions, the case of discrete distributions hides an additional difficulty that we need to address. Consider the following example: Suppose that the $10\%$ and $90\%$ quantiles of the input distribution $D$ are equal. We claim that the median of $D$ can be estimated up to arbitrarily small error $\eps$ in the number of samples that is \emph{not dependent on $\eps$}. Indeed, even if $\eps$ is tiny, we can learn, say, the $25\%$ and $75\%$ quantiles of $D$ up to an error of $15\%$. Then, we can observe that they are equal, which implies that the obtained estimate is also equal \emph{exactly} to the $50\%$ quantile of $D$. 

By first checking for such cases, we manage to improve our basic quantile estimation algorithm on some distributions. Note that this check has to be performed on ``all scales", meaning that our final algorithm performs this check iteratively for smaller and smaller values of $\bar{\eps}$: The parameter $\bar{\eps}$ ranges from constant to the desired accuracy of $\eps$.
The need to consider all the different $O(\log \eps^{-1})$ scales and the fact that we have to union bound the error probability over them, is why the sample complexity of our final algorithm incurs an additional multiplicative factor of size roughly $\log\log\eps^{-1}$. 

\paragraph{Quantile estimation lower bound (\cref{res:quantile})}
Our lower bound for the quantile estimation problem relies in a black-box fashion on the mean estimation lower bound.
Specifically, for any input distribution $D$, we use that lower bound to construct $D_1$ that is hard to distinguish from $D$ such that the $q$-th quantile $N$ of $D$ is the $(q-2 \eps)$-th quantile of $D_1$. 
Similarly, we construct $D_2$ such that $N$ will be $(q+2\eps)$-th quantile of $D_2$. Assuming the original distribution $D$ is continuous, the sets of correct answers for $D_1$ and $D_2$ are then disjoint, as needed. If the distribution $D$ is not continuous, additional care is needed. 

The additional multiplicative factor of roughly $\log\log\eps^{-1}$ that we needed in our quantile estimation algorithm does not appear in our lower bound. That is, our quantile estimation algorithm is instance-optimal only up to a factor of roughly $\log\log\eps^{-1}$. 
This is not just a weakness of our analysis: in \cref{sec:gap}, we show that this small gap between the upper and lower bound cannot be improved in general. 
In particular, we carefully analyze the problem of estimating the median of distributions of the form $\bar{D}(1/2 + \bar{\eps})$: This is a distribution parameterized by $\bar{\eps}$ that samples $0$ with probability $1/2 + \bar{\eps}$ and $1$ otherwise. We show that for a constant $\delta > 0$, sufficiently small $\eps > 0$, and any correct $(\eps, \delta)$-estimator $A$, we can find a value $\bar{\eps} > \eps$ such that $A$ needs at least $\Omega(1/\bar{\eps}^2 \cdot \log\log\eps^{-1})$ samples to terminate on $\bar{D}(1/2 + \bar{\eps})$. On the other hand, for every distribution $\bar{D}(1/2 + \bar{\eps})$, there exists a specific $(\eps, \delta)$-estimator that terminates in just $O(1/\bar{\eps}^2)$ steps on this distribution.

\paragraph{Equivalence between the I/O and sequential estimation settings.} 

As we discussed in \Cref{sec:results_io}, instance-optimal algorithms for sequential estimation are intimately related to order-oblivious instance-optimal algorithms for I/O efficient estimation. Namely, the connection is roughly in that an instance-optimal algorithm in the I/O settings is equivalent to having an instance-optimal algorithm in the sequential estimation setting for a mixture version of the problem. Recall that with the mixture version of the problem, each sampled item from the input distribution $D$ is itself a distribution, and we want to solve the problem (e.g.\ approximate the median) on the mixture distribution $M(D)$.
On an intuitive level, sampling a distribution from $D$ in the sequential estimation setting then corresponds to sampling a block in the I/O setting.

We prove a general equivalence theorem whose informal version goes as follows. 

\begin{theorem}[Informal version of \Cref{thm:io_transfer}]
\label{thm:io_transfer_informal}
    The following holds under mild technical conditions.  Whenever we have an instance-optimal algorithm for a problem on mixture distributions in the setting of sequential estimation, we can also get an order-oblivious instance-optimal algorithm for the same problem in the model of I/O efficient algorithms, and vice versa.  
\end{theorem}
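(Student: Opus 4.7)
The plan is to establish the two directions of the equivalence via a natural correspondence between I/O inputs and mixture distributions. Given an I/O input $I$ consisting of $N/B$ blocks of size $B$, define the associated mixture distribution $D_I$ to be the uniform distribution over the $N/B$ blocks of $I$, where each block is identified with the uniform distribution over its $B$ items. Then $M(D_I)$ is exactly the uniform distribution over all $N$ items of $I$, and any statistic of the input (frequency, mean, quantile, empirical CDF) equals the corresponding statistic of $M(D_I)$. This identification is a bijection between multisets of blocks and finitely supported empirical mixture distributions, which is why the weaker notion of order-oblivious instance optimality is the right match for the fully permutation-invariant sequential setting.

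For the direction from sequential estimation to I/O, let $A$ be an instance-optimal sequential estimator for the mixture version of the problem, with expected sample complexity $T(D)$ on input $D$. Define an I/O algorithm $A'$ that simulates $A$: whenever $A$ requests a fresh sample from the input mixture, $A'$ reads a uniformly random block of $I$ and passes the associated uniform-over-$B$-items distribution to $A$. Since sampling a uniformly random block of $I$ is distributionally identical to sampling from $D_I$, the output of $A'$ on $I$ has the same distribution as $A$ on $D_I$, so correctness transfers. For optimality, any order-oblivious I/O algorithm $A''$ on $I$ can be converted into a sequential estimator for $D_I$ by answering its block reads with independent samples from $D_I$; under the mild assumption that $A''$'s behavior on a random permutation matches its behavior on i.i.d.\ block samples up to constants (which holds as long as the number of reads is $o(N/B)$, so sampling without replacement is close to sampling with replacement), this yields a lower bound of $\Omega(T(D_I))$ on the expected cost of $A''$, matching the cost of $A'$.

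The reverse direction, from I/O to sequential, uses the dual simulation. Given an order-oblivious instance-optimal I/O algorithm $A'$, construct a sequential estimator $A$ that on input mixture $D$ simulates $A'$ on a virtual input of blocks generated on demand from $D$: whenever $A'$ requests a block, $A$ draws a fresh distribution $X \sim D$ together with $B$ samples from $X$, and feeds them as a block. The sample complexity of $A$ is $B$ times the number of block reads of $A'$, and the correctness guarantee transfers from $A'$'s guarantee on the empirical input. Optimality follows from the same bijection: any sequential estimator for the mixture problem on $D$ can, via i.i.d.\ samples, be turned into an I/O algorithm on the corresponding empirical input, and the lower bound in the I/O world lifts back.

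The main obstacle will be carefully formulating and justifying the \emph{mild technical assumptions}. These include (i) handling the gap between sampling blocks with and without replacement in the I/O setting, which is controlled when the number of block reads is much smaller than $N/B$; (ii) handling the fact that each realized I/O block is a concrete multiset of $B$ items, whereas a sample from $D$ in the mixture setting is the abstract distribution, so the I/O-to-sequential simulation must instantiate each sampled distribution by its own $B$ samples and the reduction is only valid for statistics that depend only on items observed; (iii) ensuring that the algorithms considered do not exploit absolute positions of items inside blocks, so order-obliviousness is actually preserved by the simulation; and (iv) arranging that the virtual input size $N$ used in the simulation is chosen large enough that the empirical distribution over $N/B$ i.i.d.\ draws from $D$ is indistinguishable from $D$ itself to any algorithm running within the budget $T(D)$. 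Once these points are pinned down, the two simulations are lossless up to constants and yield the claimed equivalence.
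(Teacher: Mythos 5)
Your high-level plan — exhibit two reductions, one in each direction, and argue they preserve complexity — does match the paper's strategy, but there is a concrete misunderstanding of the mixture model that breaks your I/O-to-sequential direction and cascades through several of your remaining claims.

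In the paper's setup the sequential problem $P_{dist}$ lives over $Dist_B(S)$, the set of distributions whose probabilities are multiples of $1/B$. A single sample from the input mixture distribution $D$ is therefore itself a distribution in $Dist_B(S)$, which is in \emph{deterministic} bijection with a multiset (block) of $B$ items: a distribution putting mass $a/B$ on element $s$ corresponds to the block with $a$ copies of $s$. Your I/O-to-sequential reduction instead draws $X \sim D$ and then draws $B$ i.i.d.\ samples from $X$ to form a block. These are not the same object: the empirical multiset of $B$ i.i.d.\ draws from $X$ is a \emph{random} block whose distribution differs from the block that encodes $X$. In particular, this resampling inflates the within-block variance and therefore changes instance-dependent quantities such as $\sigma_q^2$; since the entire point of the theorem is that the reductions must map each instance to an instance of the \emph{same} complexity up to constants, introducing this extra noise can ruin instance-optimality (not merely correctness). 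A related symptom is your claim that ``the sample complexity of $A$ is $B$ times the number of block reads of $A'$'': in the mixture sequential model, one sample is one distribution, i.e., one block, so the factor $B$ should not appear at all. The paper avoids both issues by restricting $D$ to $Dist_B(S)$ and using the deterministic block $\leftrightarrow$ distribution bijection.

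Two further, smaller gaps. First, your correctness-transfer step for the I/O-to-sequential direction needs more than the collision argument: when you run an I/O algorithm on a finite sample from $D$, you solve the problem on the \emph{empirical} distribution $D_n$, and you must argue that a correct answer for $D_n$ is (with high probability) a correct answer for $D$. This is exactly what the paper's consistency condition (\Cref{def:consistent}) provides, and it is not the same as the with/without-replacement issue you list in point (i). Second, your ``bijection between multisets of blocks and finitely supported empirical mixture distributions'' does not by itself justify instance-optimality transfer, because the image of that bijection is only the empirical distributions; the paper stresses that the two reductions are \emph{not} inverses of each other, and that one must separately show the round-trip $R_{I/O}(R_{dist}(I))$ (resp.\ $R_{dist}(R_{I/O}(D))$) has the same complexity as $I$ (resp.\ $D$), which they do via the collision/without-replacement argument you partially identify in points (i) and (iv). Finally, you must also close the loop on parameter loss: each reduction degrades $(\eps,\delta)$ to $(2\eps,2\delta)$, and the paper's uniform-boundedness condition (\Cref{def:bounded}) is needed precisely to show this cannot change the optimum by more than a constant, a point your proposal does not address.
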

In order to prove this claim, we show two reductions: one that turns an algorithm in the I/O setting into one in the sequential estimation setting, and a second reduction in the other direction. The reductions work as follows.

Given an algorithm $A_{I/O}$ in the I/O model, we define a new algorithm $A_{dist}$ in the sequential estimation setting by simulating $A_{I/O}$: we (implicitly) construct a long input consisting of many samples from the input distribution; we then execute $A_{I/O}$ on this input. (Some technical issues arise here, leading to the theorem having mild technical assumptions.) 
On the other hand, given an algorithm $A_{dist}$ in the sequential estimation setting, we construct $A_{I/O}$ by randomly sampling blocks and feeding them into $A_{dist}$.

While this would already suffice to prove equivalence between the two settings in the worst-case sense, it is not enough in the context of instance-optimality. What we need to show is that when we use one of the above reductions, we reduce to an instance that has exactly (up to a constant) the same complexity -- otherwise we could reduce to some instance that is much harder than the instance we wanted to solve, in which case the resulting algorithm would not be instance-optimal.

If, hypothetically,  the reductions were inverses of each other, this would suffice. Namely, if we could reduce instance $x_{I/O}$ to $x_{dist}$ but also $x_{dist}$ to $x_{I/O}$, then any algorithm that solves one of the instances may solve the other in the same complexity. This means that the optimum for these two instances is the same. If this is the case for all instances in the two settings, then any instance-optimal algorithm in one setting is also instance-optimal in the other.

Sadly, the reductions are not inverses of each other, as we need. That is, if we take an input in one of the settings, we reduce it to an instance in the other setting and then use the other reduction to get an instance in the first setting, we will end up with a different instance than the one we started with. We show that, nevertheless, the complexity of the two instances is the same, and this suffices to prove the equivalence.

\subsection{Defining and motivating order-oblivious instance-optimality} \label{sec:model_motivation}


This section is devoted to the discussion of order-oblivious instance optimality in the context of I/O-efficient sampling. In particular, we argue why the guarantee of order-oblivious instance optimality is the ``right'' guarantee to strive for. 
We first give an informal definition of (order-oblivious) instance optimality. 

\begin{definition}[Informal definition of Instance-optimality and Order-oblivious instance optimality] \label{def:optimality_informal}
We say an algorithm is correct for $\eps,\delta$ if its error is at most $\eps$ with probability $\geq 1-\delta$ on any input.

We say that a correct algorithm $A$ is \emph{instance-optimal} (\Cref{def:instance_optimality}) if there is no possible input $I$ such that some correct algorithm would be faster on $I$ than $A$ (by an arbitrarily large constant).

An algorithm is \emph{order-oblivious instance optimal} (\Cref{def:oo-instance-optimality}) if the above holds if we measure the complexity of $A$ on a specific input as the worst case over all permutations of that input.
\end{definition}

%
%
%
%

\subsubsection{Motivation behind the definition}
%
Our main desideratum is to utilize to the fullest extent possible each block that we read. 
Unfortunately, if we use worst-case complexity as our complexity measure, we cannot get any benefit from reading the whole block instead of just one bit.
\begin{claim}
\label{cl:nvm}
Suppose we have a $0/1$ string of length $n$ divided into blocks, each containing $B$ bits. Suppose we want to estimate the fraction of $1$'s with success probability $2/3$. Then $\Theta(1/\epsilon^2)$ I/O's are both necessary and sufficient in the worst case if $B \leq o(\epsilon^2 n)$.
\end{claim}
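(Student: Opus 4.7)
\emph{Upper bound.} The plan is to read $\Theta(1/\epsilon^2)$ uniformly random blocks (or equivalently $\Theta(1/\epsilon^2)$ single bits within randomly chosen blocks) and output the empirical fraction of $1$s across the sampled bits. Each bit read is an unbiased Bernoulli sample of the true density $p$, so a Chernoff/Hoeffding bound gives additive error $\leq \epsilon$ with probability at least $2/3$. Here reading full blocks only helps, not hurts: we may always discard the extra $B-1$ bits. This gives the $O(1/\epsilon^2)$ half.

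\emph{Lower bound.} I would use Yao's minimax principle. Define two hard input distributions $\mathcal{D}_0$ and $\mathcal{D}_1$: in each, the string has exactly $n/B$ blocks, every block is either all $0$s or all $1$s, and the fraction of all-$1$ blocks is $1/2 - \epsilon$ under $\mathcal{D}_0$ and $1/2 + \epsilon$ under $\mathcal{D}_1$; the positions of the all-$1$ blocks are chosen uniformly at random. Any algorithm achieving additive error $< \epsilon$ with probability $\geq 2/3$ must distinguish $\mathcal{D}_0$ from $\mathcal{D}_1$ with constant advantage. Since every block is uniform, a read of any block returns exactly one bit of information; an adaptive algorithm making $k$ I/Os therefore observes $k$ samples from a hypergeometric distribution on $n/B$ items with bias $1/2 \pm \epsilon$. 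The assumption $B \leq o(\epsilon^2 n)$ guarantees $n/B = \omega(1/\epsilon^2)$, so for any $k = o(1/\epsilon^2)$ sampling without replacement is indistinguishable (in total variation) from sampling with replacement, reducing us to distinguishing $\mathrm{Bern}(1/2-\epsilon)^{\otimes k}$ from $\mathrm{Bern}(1/2+\epsilon)^{\otimes k}$. A standard Hellinger/KL computation then yields $k = \Omega(1/\epsilon^2)$.

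\emph{Main obstacle.} The tricky point is that the algorithm is adaptive and may request specific block indices, so one must argue that on the hard distribution the block indices it queries carry no useful information beyond the block's content. This follows from the symmetry of the hard distribution under random permutation of block positions: we can couple the algorithm's (possibly adaptive) sequence of queries with a fixed set of blocks chosen in advance, since only the \emph{number} of distinct queried blocks matters, not which ones. Once this symmetrization is in hand, the argument reduces to the classical Bernoulli distinguishing lower bound, and combining with the matching upper bound yields the claimed $\Theta(1/\epsilon^2)$ bound.
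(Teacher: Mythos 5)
Your proposal matches the paper's proof sketch: both construct two hard inputs where every block is monochromatic (all-$0$s or all-$1$s), with the fraction of all-$1$ blocks being $1/2 \pm \epsilon$, and reduce to the $\Omega(1/\epsilon^2)$ lower bound for distinguishing two Bernoulli-like distributions (the paper cites this as a folklore result). Your write-up is simply more explicit than the paper's one-paragraph sketch --- in particular you correctly spell out why the hypothesis $B \le o(\epsilon^2 n)$ is needed (so that $n/B = \omega(1/\epsilon^2)$ and sampling without replacement is indistinguishable from sampling with replacement over $o(1/\epsilon^2)$ queries) and you address adaptivity via symmetrization --- but it is the same argument.
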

\begin{proof}[Proof sketch]
Consider two possible inputs. In both of them, each block either contains only zeros or only ones. In the first input, the fraction of zero blocks is $1/2 - \eps$. In the second input, this fraction is $1/2 + \eps$. An algorithm for estimating a mean can be turned into an algorithm that distinguishes the two distributions over the input blocks. 
Folklore lower bound \cite{canetti1995lower} says that distinguishing the two inputs requires $\Omega(1/\eps^2)$ samples. 
\end{proof}

This means that we aim for a more fine-grained complexity measure than the worst-case complexity. 
Sadly, the notion of instance-optimality is too strong and it is impossible to achieve, as we now sketch. 

\begin{claim}
\label{cl:mvn}
Suppose we have a $0/1$ string of length $n$ and let $B = \omega(1)$ be the block size. Then there is no algorithm that would be instance-optimal up to constant factors for the problem of estimating the fraction of $1$'s.
\end{claim}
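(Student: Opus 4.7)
The plan is to show that for every correct algorithm $A$, one can exhibit an input $I^\star$ and a correct algorithm $B^\star$ whose complexity on $I^\star$ is arbitrarily smaller than $A$'s, thereby ruling out instance optimality up to any constant factor. The core observation is that standard instance optimality permits $B^\star$ to be chosen \emph{after} $I^\star$ is fixed, so $B^\star$ can be artificially tailored to exploit $I^\star$'s specific ordering of blocks. This is precisely the ``silly'' phenomenon that the order-oblivious variant is meant to neutralize.

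Concretely, I would take $I^\star$ to be an input with a highly structured block ordering, for example the input in which all $n/B$ blocks are identical, so that the first block alone already determines the overall fraction of 1's exactly. The tailored algorithm $B^\star$ would then read a constant number of blocks at the beginning of the input and perform a short check that they all agree; if the check succeeds, $B^\star$ outputs the fraction of 1's in the first block, and if it fails, $B^\star$ falls back on running a standard (slow-but-correct) algorithm, so that correctness on other inputs is inherited from the fall-back.

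The main obstacle is ensuring that $B^\star$ is correct on \emph{all} inputs, not just on $I^\star$, while still being cheap on $I^\star$. The key ingredient is the hypothesis $B = \omega(1)$: since each block carries $\omega(1)$ bits, a single read already gives a stringent signature, and the check that $B^\star$ performs can be made both short (a constant number of reads) and reliable enough that whenever it accepts, the fraction of 1's of the input is guaranteed to be within $\eps$ of $I^\star$'s fraction with probability at least $1-\delta$. Inputs that fail the check are correctly handled by the fall-back branch.

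Finally, I would conclude by contrasting $B^\star$ with $A$ on $I^\star$. On the one hand, $B^\star$ uses only $O(1)$ I/Os on $I^\star$. On the other hand, any correct $A$ has to perform $\omega(1)$ reads on $I^\star$, because $A$ also has to remain correct on the many adversarial permutations of the same block multiset, which force $\Omega(1/\eps^2)$ reads in the worst case by \Cref{cl:nvm} and which $A$ cannot distinguish from $I^\star$ using only $O(1)$ reads. Consequently, $T_A(I^\star)/T_{B^\star}(I^\star)$ can be made unbounded, so no algorithm is $C$-instance-optimal for any constant $C$. This counterexample disappears under order-oblivious instance optimality, since $B^\star$'s complexity would then be measured as the worst case over permutations of $I^\star$, completely nullifying its ordering-based advantage.
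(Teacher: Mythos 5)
Your intuition about the source of the obstruction — that standard instance optimality lets the competitor algorithm be tailored to the input, and that order obliviousness kills that advantage — is the right one, and it is also the paper's intuition. But the concrete construction you propose does not actually establish the claim, for two reasons.

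First, $B^\star$ as described is not a correct algorithm. Reading a \emph{constant} number of blocks and checking that they agree cannot certify anything within $\eps$. Take $I'$ that agrees with $I^\star$ on a $1-10\eps$ fraction of blocks and has the remaining $10\eps$ fraction flipped (say all-$0$ instead of all-$1$). With $O(1)$ random reads you miss all modified blocks with probability $(1-10\eps)^{O(1)} \to 1$ as $\eps \to 0$, so $B^\star$ accepts $I'$ with overwhelming probability and outputs $I^\star$'s fraction, which is off by $10\eps > \eps$. The hypothesis $B = \omega(1)$ does not help: a long block gives a reliable per-block comparison, but the problem is how many blocks one must \emph{sample}, which is governed by the fraction of differing blocks, not the block length. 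Any verifier that is actually correct must read $\Theta(1/\eps)$ blocks, not $O(1)$.

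Second, even with the verifier's complexity corrected to $\Theta(1/\eps)$, fixating on a single input $I^\star$ does not yield the contradiction. For your $I^\star$ (all blocks identical), $OPT(I^\star)$ is also $\Theta(1/\eps)$: any correct algorithm must rule out inputs differing from $I^\star$ on an $\Omega(\eps)$ fraction of blocks, which requires $\Omega(1/\eps)$ reads, and we just saw $O(1/\eps)$ suffices. So the ratio $T(A,I^\star)/OPT(I^\star)$ can be $O(1)$ and no contradiction follows from $I^\star$ alone. The paper's proof instead quantifies over \emph{all} inputs $x'$: for each $x'$ there is a tailored positional verifier $B_{x'}$ with $T(B_{x'}, x') = O(1/\eps)$, hence if $A$ were instance-optimal up to a constant $C$ then $T(A, x') \le C \cdot O(1/\eps)$ \emph{for every} $x'$, i.e., $A$ would have worst-case complexity $O(1/\eps)$ — contradicting the $\Omega(1/\eps^2)$ worst-case lower bound of \Cref{cl:nvm}. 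That two-sided, all-inputs-versus-worst-case comparison is the engine of the proof, and it is what your single-instance argument is missing. (Relatedly, your appeal to ``adversarial permutations of the same block multiset'' is vacuous for the all-identical-blocks $I^\star$, which is invariant under every permutation; the permutation sensitivity only shows up for the paper's positional verifier on a generic $x'$, which is where the order-oblivious relaxation genuinely earns its keep.)
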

\begin{proof}[Proof sketch.]
%
Given a possible (``guessed'') input $x'$, we consider an algorithm that $\Theta(1/\eps)$ times samples an index $i$ and verifies that the $i$-th block of the true input $x$ and the guessed input $x'$ are equal. 
One can prove that if all sampled blocks were in fact equal for $x$ and $x'$, then with constant probability the inputs differ on at most $\eps$ fraction of blocks. This allows us to return the (pre-computed) correct answer for $x'$. 
If the verification fails, we run an arbitrary correct algorithm.

This way, we can, for any possible input $x'$, get a correct algorithm that runs on $x'$ in complexity $O(1/\eps)$. Any instance-optimal algorithm thus has to have a worst-case complexity of $O(1/\eps)$. This is in contradiction with the lower bound of $\Omega(1/\eps^2)$ from \Cref{cl:nvm}.
\end{proof}

Since we cannot get standard instance optimality, we need to relax our requirements somewhat. 
We see above that it is the access to the blocks in a fixed order that causes us problems: 
This leads us to the notion of order-oblivious instance optimality \cite{afshani2017instance} where, intuitively speaking, we view the input as a multiset and ignore the order.

\subsection{Related work} \label{sec:related_work}

We next discuss various related work. 

\subsubsection{I/O-efficient algorithms, distribution testing, and instance-optimality}

\paragraph{I/O-efficient algorithms.}
In computations, reading and writing to memory (I/O) is often the bottleneck. I/O-efficient, also called cache-aware, algorithms are designed to minimize the number of I/O's performed and to make efficient use of cache memory in computer systems.

Since the introduction of this problem to algorithm design by \citet{lam1991cache} and the formalization as a model of computation by \citet{frigo1999cacheoblivious}, this issue has been studied thoroughly. However, the current literature focuses on either $(a)$ data structures (such as $B$-trees, funnel heaps, and others, see \cite{arge2005cacheoblivious} for a detailed discussion), or $(b)$ on algorithms that run in at least linear time (such as selection, sorting, or matrix multiplication, see \cite{demaine2002cacheoblivious} for a detailed discussion).
We cannot hope to cover the whole area of I/O-efficient algorithms; the interested reader should look at the surveys \cite{arge2005cacheoblivious,demaine2002cacheoblivious} and the more up-to-date references in \cite{frigo2012cache}.

As we show in this paper, the fact of being able to read a block at a unit cost is also useful in the world of sublinear-time algorithms. This has been considered by the pioneering paper \cite{andoni2009external} which gives worst-case optimal algorithms for distribution testing on empirical distributions, namely for testing of distinctness, uniformity, and identity. The problem of sampling and I/O-efficient sublinear algorithms was also considered by \cite{ning2020range}, who, however, rely on an assumption that the data comes from a specific distribution. We are not aware of other algorithms relying on block-based memory access that have sublinear I/O complexity.

\paragraph{Learning theory, distribution testing, and instance-optimality.}
While learning theory has mostly focused on worst-case optimality in the past, there is also significant literature on instance-optimality.
Namely, instance-optimality has been studied for multi-armed bandits, with the first work in this direction going to 1985 \cite{lai1985asymptotically}.
More recently, the gap-entropy conjecture \cite{chen2016open} would imply the existence of an instance-optimal algorithm for identifying the best arm. Significant progress to proving this conjecture has been made by \citet{chen2017towards}. This problem has also been considered for some classes of multi-armed bandits, such as contextual bandits \cite{li2022instance}, linear bandits \cite{kirschner2021asymptotically}, or Lipschitz bandits \cite{magureanu2014lipschitz}. An algorithm that is almost instance-optimal is also known for reinforcement learning in Markov decision processes \cite{yin2021towards}. A result that generalizes some of the multi-armed bandit results, as well as reinforcement learning, has recently been given \cite{dong2022asymptotic}; the related work section of this paper is also a good place where an interested reader can learn more about the related literature.

\citet{valiant2016instance} give an instance-optimal algorithm for learning a distribution with respect to the $\ell_1$ norm \cite{valiant2016instance}. Compare this to our result (in the more general I/O model) for learning a distribution w.r.t.\ the $\ell_\infty$ norm from \Cref{cor:counting_ub}. An instance-optimal algorithm for compressed sensing signals drawn from a known prior distribution has been shown by \citet{jalal2021instance}.

Testing whether a discrete distribution is equal to a fixed distribution is a well-studied problem in distribution testing. In their seminal paper, \citet{valiant2017automatic} gave an instance-optimal algorithm for this problem. This is in contrast with the previously known worst-case optimal algorithm \cite{diakonikolas2018sample}.

\paragraph{Two papers closely related to ours}
After the acceptance of our manuscript, we learned about two additional closely related works from the area of instance-optimal sequential estimation.

The first paper from \citet{dagum2000optimal} considers the problem of estimating the mean with multiplicative error for which they provide an instance optimal algorithm (although they do not use the language of instance optimality). In contrast, our
paper focuses on additive error guarantees which allows our basic algorithms (see the algorithm from page \pageref{alg:mean_informal}) to be significantly simpler. Later
we extend our additive error
algorithm to the multiplicative
case, yeilding the same multiplicative guarantee as \citet{dagum2000optimal}.
We also extend our basic 
additive algorithm to the estimation of medians, quantiles, and learning distributions in $\ell_\infty$ and KS distance.

The second closely paper from \citet{dang2023optimalitymeanestimationworstcase} considered the same basic problem of instance-optimal algorithms for the mean estimation. 
However, their setting is slightly different: While in our paper we measure estimators by the number of samples they need given a fixed desired error $\eps$, \citet{dang2023optimalitymeanestimationworstcase} consider the setting where we want to achieve the best possible error rates given a fixed number of samples $n$. Unfortunately, they show that in this setting, instance optimality is not achievable. Their main result, \cite[Theorem 2]{dang2023optimalitymeanestimationworstcase}, is a version of our lower bound for mean estimation from \cref{thm:counting_lb}.  

Both of these papers also prove a lemma (\Cref{thm:counting_lb}) which we use to prove the optimality of our algorithms. The first paper proves it in the case of discrete distributions, while the second also extends it to general distributions. We give a proof in our write-up for the sake of completeness and because the more general statement in \cite{dang2023optimalitymeanestimationworstcase}

\subsubsection{Instance-optimality and sequential analysis in statistics}

\paragraph{Instance-optimality in statistics.} While the TCS community typically considers optimality in the worst case, other notions of optimality are used in statistics. One of the commonly used notions of optimality exactly corresponds to instance optimality. Most notably, in the parametric setting\footnote{That is, the distribution comes from a class parameterized by some parameter $\theta$; we want to estimate $\theta$ or its function, based on a sample}, people often consider the minimum-variance unbiased estimators (MVUE) \cite[7.3.2]{casella2002statistical}. This can be seen as instance optimality if an algorithm is said to be correct if it is unbiased. Notably, any efficient estimator (an unbiased estimator matching the famous Cramér-Rao lower bound) is MVUE. 

One of the major areas of classical statistics is the area theory of maximum likelihood estimation (MLE) \cite[7.2.2]{casella2002statistical}. It is known that, under very general technical conditions, a MLE estimator is asymptotically efficient, which can be seen as a non-uniform version instance optimality\footnote{Non-uniformity here means that the speed of convergence of the $o(1)$ can depend on the precise distribution.} up to a factor of $1+o(1)$ \cite{daniels1961asymptotic}.

\paragraph{Difference in guarantees in statistics and TCS.} Unfortunately, for example, the fact MLE's in general only achieve a \emph{non-uniform} version of instance optimality means that this result does not imply algorithms with meaningful guarantees: the non-uniformity corresponds to allowing the constant in the $\mathcal{O}$ notation to depend on the input size, allowing us to solve any computable problem in (non-uniform) constant time. Sadly, results of this type are very common in statistics. This is one of the reasons why we need to use techniques from computer science to get the desired results.

\paragraph{Sequential estimation.} 
All the above work in statistics assumes that the number of available samples is fixed, and that it cannot be picked adaptively. In this paper, we focus on the setting where we fix the desired accuracy and try to minimize the sample complexity. This is in statistics called sequential estimation, which we now discuss.
%
Our account is by no means comprehensive, as such treatment is more suitable for a book on the topic. The interested reader may wish to see \cite{lai2001sequential} for a survey of some of the known results.

Our work can be seen as revisiting the setting of sequential analysis in the non-asymptotic setting while using modern tools from the concentration of measure and distribution testing. The non-asymptotic sequential setting, not widely considered in the statistics literature, is very important from the perspective of computer science. Namely, the asymptotic results are always for a fixed distribution, and the error parameter going to zero. This corresponds to considering asymptotic under a fixed input. But we want our algorithms to work also for non-constant input sizes! For this reason, the non-asymptotic viewpoint is crucial for us. We contrast our needs with the results in the statistical literature in greater detail below, explaining why the setting assumed by the statisticians is not suitable for our use case.

The problem of sequential analysis is said to stretch at least back to the 17th century when it was considered in the context of the gambler's ruin \cite{ghosh1991brief}. Since then, it found many practical use cases. For example, it was used for quality control for anti-aircraft gunnery in the 2nd World War. We highlight one of the early results in the area: an optimal test that distinguishes between two distributions \citet{wald1945sequential}.  See the text \cite{lai2001sequential} for more detailed treatment. We now focus on the work in the sequential analysis setting on estimation -- usually called sequential estimation.

In his seminal paper, \citet{haldane1945method} gave an estimator that estimates the probability $p$ of a Bernoulli trial in sample complexity $O(p/\epsilon^2)$. More closely related to our work, under the assumption of the distribution being normal, an estimator for the mean has been given by \citet{stein1945two} that gives optimal confidence intervals.
This was later generalized by \citet{cox1952estimation}, who relaxed the parametric assumptions. However, his result still does not imply our result, as it assumes that the variance only depends on the parameter that we are estimating, which is not the case in this paper. Moreover, the sense in which this result is optimal is significantly weaker than our results: it shows optimality up to a constant, where the constant may depend on the given estimation task (but not on the accuracy parameter $\eps$).

Curiously, these two estimators are very similar to the one we use for mean estimation, even though it is a completely different setting (and a significantly different analysis). 

Specifically, the estimator has two phases where the number of samples in the second is determined by the first. Two decades later, this problem was solved in the asymptotic non-parametric setting (instead of non-asymptotic parametric from \cite{stein1945two}) by \citet{chow1965optimal}.

\section{Preliminaries}
\label{sec:preliminaries}

This section collects the basic notation and tools that we use, together with formal definitions of sequential estimation and the I/O estimation model and some general results concerning these models. We encourage the reader to read this section only briefly and return back to it when necessary -- the formal definitions are needed for formally stating our equivalence theorem, but may not be needed for casual reading. 

\subsection{Basic notation}
\jakub{Add notation for quantiles}
We use $[n] = \{1, 2, \dots, n\}$. For a finite set $S$ of reals, we use $\inf(S), \sup(S)$ to denote its infimum and supremum. We use $\hat{\mu}$ for final estimators, in two-phase algorithms, we use $\tilde{\mu}$ for the intermediate estimators. We use the notation $\hmu = \mu \pm \eps$ to denote the fact $\mu - \eps \le \hmu \le \mu + \eps$. Given a distribution $D$ over a set $S$ and a function $f: S \rightarrow S'$, we define a distribution $f \circ D$ to be the distribution we get by sampling from $D$ and applying $f$. Given a set $S$, we use $Dist(S)$ to denote the set of all distributions on $S$. Finally, given a real-valued random variable $X$ and any $k \ge 1$, we use $\|X\|_k$ to denote the $k$th moment of $X$, i.e., $\|X\|_k := \E[|X|^k]^{1/k}$. For two distributions $D_1,D_2$, we use $D_1 \otimes D_2$ to denote the product distribution of $D_1$ and $D_2$. We sometimes avoid floor/ceiling functions to avoid clutter in notation. We define a mixture distribution as follows:
\begin{definition}[Mixture distribution] \label{def:mixture}
Consider any distribution $D$ over $Dist(S)$, i.e., every sample from $D$ is itself a distribution over $S$. We define $M(D)$ to be the \emph{mixture} defined by $D$, defined by $P_{X \sim M(D)}[X \in S'] = \E_{Y \sim D}[P_{X \sim Y}[X \in S']]$ for all measurable subsets $S' \subseteq S$.
\end{definition}

\subsection{Tools}
\label{sec:lb_definitions}

We will use the following approximation of square root. 
\begin{fact}
\label{fact:sqrt}
For any $-1/2 \le \eps \le 1/2$, we have
\begin{align*}
    1 + \eps/2 - \eps^2/2 \le \sqrt{1 + \eps} \le 1 + \eps/2
\end{align*}
\end{fact}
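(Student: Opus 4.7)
The plan is to verify both inequalities by reducing them to polynomial inequalities via squaring, which is valid on the given range since all quantities involved are nonnegative there. Specifically, for $\eps \in [-1/2,1/2]$, we have $1+\eps \geq 1/2 > 0$, so $\sqrt{1+\eps}$ is well-defined and nonnegative, and $1 + \eps/2 \geq 3/4 > 0$, so both sides of the upper bound are nonnegative.

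For the upper bound, I would just expand $(1+\eps/2)^2 = 1 + \eps + \eps^2/4 \geq 1 + \eps$ and appeal to monotonicity of $\sqrt{\cdot}$ on $\R_{\geq 0}$. This step is immediate.

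For the lower bound, I would first observe that $1 + \eps/2 - \eps^2/2 \geq 3/4 - 1/8 = 5/8 > 0$ on the range, so both sides are nonnegative and I can again compare squares. Expanding the square of the left side gives
\begin{equation*}
(1 + \eps/2 - \eps^2/2)^2 = 1 + \eps - \tfrac{3}{4}\eps^2 - \tfrac{1}{2}\eps^3 + \tfrac{1}{4}\eps^4,
\end{equation*}
so the desired inequality $(1+\eps/2-\eps^2/2)^2 \leq 1+\eps$ reduces to $\eps^2(\eps^2 - 2\eps - 3)/4 \leq 0$, i.e., $\eps^2 - 2\eps - 3 \leq 0$. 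On $[-1/2,1/2]$ this is clearly true since $\eps^2 \leq 1/4$ and $-2\eps \leq 1$, giving $\eps^2 - 2\eps - 3 \leq 1/4 + 1 - 3 < 0$.

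There is no real obstacle here; this is a routine calculus fact. An alternative route would be to apply Taylor's theorem with Lagrange remainder to $\sqrt{1+x}$ around $x=0$ and bound the second/third derivatives on the interval, but the direct squaring argument above is cleaner and avoids any case analysis for the sign of $\eps$.
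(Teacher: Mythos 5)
The paper states \Cref{fact:sqrt} without proof, so there is no argument of theirs to compare against. Your proof is correct: both inequalities legitimately reduce to comparisons of squares since all quantities involved are nonnegative on $[-1/2,1/2]$; the expansion $(1+\eps/2-\eps^2/2)^2 = 1 + \eps - \tfrac34\eps^2 - \tfrac12\eps^3 + \tfrac14\eps^4$ is right; and the reduction to $\eps^2(\eps^2-2\eps-3)/4 \le 0$, together with the easy bound $\eps^2 - 2\eps - 3 \le 1/4 + 1 - 3 < 0$ on the interval, closes the argument. (In fact $\eps^2 - 2\eps - 3 = (\eps-3)(\eps+1) \le 0$ for all $\eps \in [-1,3]$, so the lower bound holds on a wider range, but the claim as stated is all that is needed.) This is exactly the kind of routine verification the paper chose to omit.
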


We will need the standard Chebyshev's inequality.
\begin{lemma}[Chebyshev's inequality]
\label{lem:chebyshev}
For any random variable $X$ with mean $\mu$ and standard deviation $\sigma$, we have
\begin{align*}
    \P( |X - \mu| \ge t \sigma) \le 1/t^2
\end{align*}
\end{lemma}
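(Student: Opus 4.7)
The plan is to derive the bound as an immediate consequence of Markov's inequality applied to the non-negative random variable $(X-\mu)^2$. First I would either cite or quickly establish Markov's inequality: for any non-negative random variable $Y$ and any $a > 0$, $\P(Y \ge a) \le \E[Y]/a$. This is standard and follows from the pointwise inequality $a \cdot \mathbf{1}[Y \ge a] \le Y$ by taking expectations of both sides.

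Next I would use the elementary equivalence of events
\begin{equation*}
\{|X - \mu| \ge t \sigma\} = \{(X - \mu)^2 \ge t^2 \sigma^2\},
\end{equation*}
which holds because squaring is monotone on $[0,\infty)$. Applying Markov's inequality with $Y = (X-\mu)^2 \ge 0$ and threshold $a = t^2 \sigma^2$ yields
\begin{equation*}
\P\bigl(|X-\mu| \ge t\sigma\bigr) \;=\; \P\bigl((X-\mu)^2 \ge t^2 \sigma^2\bigr) \;\le\; \frac{\E[(X-\mu)^2]}{t^2 \sigma^2} \;=\; \frac{\sigma^2}{t^2 \sigma^2} \;=\; \frac{1}{t^2},
\end{equation*}
where the penultimate equality uses the very definition $\sigma^2 = \Var(X) = \E[(X-\mu)^2]$.

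There is no substantive obstacle; this is the classical textbook derivation. The only minor cases to acknowledge in passing are $t \le 1$ (where the right-hand side is at least $1$ and the bound is trivial) and $\sigma = 0$ (where $X = \mu$ almost surely, so the left-hand side is $0$ for every $t > 0$ and the inequality holds vacuously, interpreting $t\sigma = 0$ as a strict requirement $|X - \mu| \ge 0$ satisfied with probability $1$; a mild rephrasing or the convention $1/t^2 \ge 0$ handles this). I would structure the written proof as a two-line display: one line invoking Markov on $(X-\mu)^2$, and one line substituting $\E[(X-\mu)^2] = \sigma^2$.
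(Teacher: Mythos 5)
Your proof is correct and is the standard textbook derivation of Chebyshev's inequality via Markov's inequality applied to $(X-\mu)^2$. The paper states this as a well-known lemma without proof, so there is nothing to compare against; your argument, including the remarks on the trivial cases $t \le 1$ and $\sigma = 0$, is exactly right.
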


We will also use the following formula for the variance of sample variance. 
\begin{lemma}[\cite{stackexchange_variance}]
\label{lem:variance_of_variance}
Let $X_1, \dots, X_n$ be i.i.d. samples from a distribution with mean $\mu$, variance $\sigma^2$ and fourth central moment $\mu_4 = \E[(X-\mu)^4]$. Let $\hat\mu = \frac{1}{n} \sum_{i=1}^n X_i$ be the sample mean and  $\hat\sigma^2 = \frac{1}{n-1} \sum_{i = 1}^n (X_i - \hat\mu)^2$ be the sample variance. 
Then both $\hat\mu$ and $\hat\sigma^2$ are unbiased estimators of $\mu$ and $\sigma^2$, respectively. Moreover, 
\[
\Var[\hat\sigma^2] = \frac{1}{n} \cdot \mu_4 - \frac{n-3}{n(n-1)} \cdot \sigma^4. 
\]
\end{lemma}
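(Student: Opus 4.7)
The plan is to reduce everything to a direct second-moment computation, using the standard trick of centering the variables so that $\mu = 0$. Since shifting all $X_i$ by the same constant changes neither $\hat\sigma^2$ nor $\sigma^2$ nor $\mu_4$, we may assume throughout that $\mu = 0$, which makes the bookkeeping substantially cleaner.

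\textbf{Unbiasedness.} This part is routine. Unbiasedness of $\hat\mu$ is immediate from linearity of expectation. For $\hat\sigma^2$, I would use the algebraic identity
\[
\sum_{i=1}^n (X_i - \hat\mu)^2 \;=\; \sum_{i=1}^n X_i^2 \;-\; n\hat\mu^2,
\]
so that after taking expectations one gets $n\sigma^2 - n \cdot \Var[\hat\mu] = n\sigma^2 - \sigma^2 = (n-1)\sigma^2$, and dividing by $n-1$ yields $\E[\hat\sigma^2] = \sigma^2$.

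\textbf{Variance.} Let $S := (n-1)\hat\sigma^2 = \sum_i X_i^2 - n\hat\mu^2$. The goal is to compute $\E[S^2]$, subtract $(\E[S])^2 = (n-1)^2\sigma^4$, and divide by $(n-1)^2$. Expand
\[
\E[S^2] \;=\; \E\!\left[\Big(\sum_i X_i^2\Big)^{\!2}\right] \;-\; 2n\,\E\!\left[\hat\mu^2 \sum_i X_i^2\right] \;+\; n^2\,\E[\hat\mu^4].
\]
Each of the three terms is a sum of products of independent centered variables, so only contributions in which every index appears an even number of times survive. I would compute them in order:
\begin{itemize}
    \item $\E[(\sum_i X_i^2)^2] = n\mu_4 + n(n-1)\sigma^4$ (separating $i=j$ from $i\ne j$).
    \item $\E[\hat\mu^2 \sum_i X_i^2] = n^{-2}\sum_{i,j,k}\E[X_iX_jX_k^2]$; the only surviving index patterns are $(i=j=k)$ and $(i=j\ne k)$, giving $n^{-2}\bigl(n\mu_4 + n(n-1)\sigma^4\bigr)$.
    \item $\E[\hat\mu^4] = n^{-4}\sum_{i,j,k,\ell}\E[X_iX_jX_kX_\ell]$; the surviving patterns are the single pattern with all four equal plus the three ways to pair into two distinct pairs, giving $n^{-4}\bigl(n\mu_4 + 3n(n-1)\sigma^4\bigr)$.
\end{itemize}
Plugging these in and simplifying yields
\[
\E[S^2] \;=\; (n-1)\mu_4 + (n-1)(n^2 - 2n + 3)\,\tfrac{\sigma^4}{\,\ldots\,},
\]
where the exact coefficients are obtained by collecting terms; subtracting $(n-1)^2\sigma^4$ and dividing by $(n-1)^2$ produces the stated formula $\Var[\hat\sigma^2] = \mu_4/n - (n-3)\sigma^4/(n(n-1))$.

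\textbf{Main obstacle and alternative.} The only real difficulty is combinatorial bookkeeping: making sure the coefficients of $\mu_4$ and $\sigma^4$ are assembled correctly after the three expansions cancel against $(n-1)^2\sigma^4$. A cleaner alternative, which avoids most of the index juggling, is to use the U-statistic identity $\sum_{i}(X_i-\hat\mu)^2 = \tfrac{1}{n}\sum_{i<j}(X_i - X_j)^2$, representing $\hat\sigma^2$ as the U-statistic with kernel $h(x,y) = \tfrac12(x-y)^2$. The variance of a degree-2 U-statistic is $\binom{n}{2}^{-1}\bigl(2(n-2)\zeta_1 + \zeta_2\bigr)$, where $\zeta_1 = \Var[\E[h(X_1,X_2)\mid X_1]]$ and $\zeta_2 = \Var[h(X_1,X_2)]$; a short computation gives $\zeta_1 = (\mu_4 - \sigma^4)/4$ and $\zeta_2 = (\mu_4+\sigma^4)/2 - \ldots$, and rearranging produces the same formula. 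Either route is essentially mechanical once the setup is in place.
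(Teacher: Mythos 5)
Your proposal is a valid proof. Note that the paper itself does not prove this lemma; it cites it as a known fact from an external reference (\cite{stackexchange_variance}), so there is no internal proof to compare against. Your centering reduction to $\mu=0$ and the three expansions for $\E[(\sum_i X_i^2)^2]$, $\E[\hat\mu^2\sum_i X_i^2]$, and $\E[\hat\mu^4]$ are all correct, and combining them with $S:=(n-1)\hat\sigma^2$ gives
$\E[S^2]=\frac{(n-1)^2}{n}\mu_4+\frac{(n-1)(n^2-2n+3)}{n}\sigma^4$,
hence $\Var[\hat\sigma^2]=\mu_4/n-(n-3)\sigma^4/(n(n-1))$ after subtracting $(n-1)^2\sigma^4$ and dividing by $(n-1)^2$. (Your displayed intermediate line for $\E[S^2]$ has a garbled coefficient and a dangling ``$\ldots$'', but the three bullet-point computations feeding into it are right, so this is just sloppiness in the write-up, not a mathematical error.) The U-statistic alternative you sketch via $\sum_i(X_i-\hat\mu)^2=\frac1n\sum_{i<j}(X_i-X_j)^2$ is also a standard, correct route and would avoid most of the index-pattern bookkeeping.
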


We will also need standard Chernoff bounds and Bernstein's inequality. 

\begin{lemma}[Chernoff bound]
\label{lem:chernoff}
For any $X = X_1 +  \dots + X_T$ where each $X_i$ is i.i.d. Bernoulli random variable with $\E[X_i] = \mu$, we have
$$
\P( (X - T\mu) > \Delta) \le 2 \exp\left( - 2 \Delta^2 / n\right)
$$
\end{lemma}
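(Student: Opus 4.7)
The plan is to prove this via the standard Chernoff/Hoeffding method, i.e., the exponential moment (MGF) bound combined with Markov's inequality and independence. The statement is a one-sided upper tail bound with a symmetric prefactor of $2$, which suggests that the cleanest route is to actually prove both the upper and lower tail and union bound them, or alternatively prove the one-sided bound without the factor $2$ and absorb the $2$ as slack. I will take the first route since that matches the stated form.

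First I would set up Markov's inequality on the exponentiated deviation: for any $t > 0$,
\begin{equation*}
\P(X - T\mu > \Delta) = \P\bigl(e^{t(X - T\mu)} > e^{t\Delta}\bigr) \le e^{-t\Delta}\,\E\bigl[e^{t(X-T\mu)}\bigr].
\end{equation*}
By independence of the $X_i$ this MGF factorizes as $\prod_{i=1}^T \E[e^{t(X_i - \mu)}]$. Next I would invoke Hoeffding's lemma, which says that for any random variable $Y$ supported in $[a,b]$ with mean $0$, one has $\E[e^{tY}] \le \exp(t^2 (b-a)^2/8)$. Applied to $Y_i := X_i - \mu$ (which is supported in $[-\mu, 1-\mu] \subseteq [0,1]$-shifted, so the range is $1$), this gives $\E[e^{t(X_i-\mu)}] \le e^{t^2/8}$, hence $\P(X - T\mu > \Delta) \le \exp(T t^2 / 8 - t \Delta)$.

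Optimizing over $t > 0$ by setting $t = 4\Delta/T$ yields $\P(X - T\mu > \Delta) \le \exp(-2\Delta^2/T)$. A symmetric argument applied to $-X_i$ gives the matching lower tail bound $\P(X - T\mu < -\Delta) \le \exp(-2\Delta^2/T)$, and taking a union bound provides the stated factor of $2$ (this also matches the likely intent given the notational slip $n$ vs.\ $T$ in the statement, which I would flag and interpret as $T$).

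The only mildly nontrivial step is Hoeffding's lemma itself; I would either cite it as a standard fact or, if a self-contained proof is wanted, prove it by noting that $e^{ty}$ is convex, upper-bounding it on $[a,b]$ by the chord, taking expectation to get a bound depending only on the mean, and then analyzing the resulting function $L(t) = -tp a + \log(1 - p + p e^{t(b-a)})$ via a second-order Taylor expansion of $L$ to show $L(t) \le t^2(b-a)^2/8$. Everything else is routine. I expect no real obstacles; the main care point is just the $[0,1]$ range verification for the Bernoulli centering so that Hoeffding's lemma applies with the correct constant giving the coefficient $2$ in the exponent.
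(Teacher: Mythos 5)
Your proof is correct and is the standard Hoeffding/MGF argument. The paper states this lemma as a known fact and gives no proof, so there is nothing to compare against; your derivation (exponentiate, Markov, factorize by independence, apply Hoeffding's lemma with range $1$, optimize $t = 4\Delta/T$) is the textbook route and the arithmetic checks out. You also correctly flag the two small infelicities in the statement as written: the $n$ in the denominator should be $T$, and a genuine one-sided tail does not need the prefactor $2$ (the paper later applies the bound to one-sided events anyway), which you sensibly account for by either union-bounding over both tails or treating the $2$ as slack.
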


\begin{lemma}{\cite{dubhashi2009concentration}}
    \label{lem:bernstein}
    For any $X = X_1 + \dots + X_T$ where $X_i$ are i.i.d. with $\E[X_i] = \mu$, $\Var[X_i] = \sigma^2$, and with $0 \le X_i \le 1$ almost surely. Then for any $ \Delta > 0$ we have
    \begin{align*}
        \P( |X - T\mu| > \Delta) = \exp\left(- \Omega\left( \min\left( \Delta, \frac{\Delta^2}{T \cdot \sigma^2} \right)\right) \right)
    \end{align*}
\end{lemma}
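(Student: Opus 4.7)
The plan is to follow the classical Chernoff--Cram\'er moment generating function (MGF) argument, tuned to exploit the bounded-variance information rather than just the bounded-range information used in the basic Chernoff bound (\Cref{lem:chernoff}). I will focus on the upper tail $\P(X - T\mu > \Delta)$; the lower tail follows by applying the same argument to the variables $1 - X_i$, which also lie in $[0,1]$ and have the same variance. The first step is the standard Markov/MGF reduction: for any $t > 0$,
\begin{align*}
\P(X - T\mu > \Delta) \le e^{-t\Delta}\, \E[e^{t(X - T\mu)}] = e^{-t\Delta} \prod_{i=1}^T \E[e^{t(X_i - \mu)}],
\end{align*}
where the last equality uses independence of the $X_i$.

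The heart of the proof is a sharp single-variable MGF bound that uses the variance. Because $X_i \in [0,1]$ and $\mu \in [0,1]$, we have $|X_i - \mu| \le 1$ almost surely, so $\E[(X_i - \mu)^k] \le \E[(X_i - \mu)^2] = \sigma^2$ for every integer $k \ge 2$. Expanding the exponential in its Taylor series and using $\E[X_i - \mu] = 0$ gives
\begin{align*}
\E[e^{t(X_i - \mu)}] \;=\; 1 + \sum_{k \ge 2} \frac{t^k \E[(X_i - \mu)^k]}{k!} \;\le\; 1 + \sigma^2(e^t - 1 - t).
\end{align*}
For $t \in (0,1]$ one has $e^t - 1 - t \le t^2$, so $\E[e^{t(X_i - \mu)}] \le 1 + \sigma^2 t^2 \le e^{\sigma^2 t^2}$, which yields the product bound $\E[e^{t(X - T\mu)}] \le e^{T \sigma^2 t^2}$.

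Combining these two steps gives $\P(X - T\mu > \Delta) \le \exp(T \sigma^2 t^2 - t \Delta)$ for every $t \in (0,1]$, and the proof finishes by optimizing in $t$ in two regimes. If $\Delta \le 2 T \sigma^2$, the unconstrained minimizer $t^\star = \Delta/(2 T \sigma^2)$ lies in $(0,1]$ and substituting gives $\exp(-\Delta^2/(4 T \sigma^2))$. If $\Delta > 2 T \sigma^2$ the constraint $t \le 1$ binds; setting $t = 1$ gives $\exp(T\sigma^2 - \Delta) \le \exp(-\Delta/2)$, using the regime assumption. Taking the weaker of the two bounds matches $\exp(-\Omega(\min(\Delta,\, \Delta^2/(T\sigma^2))))$, which is exactly the claimed inequality.

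I do not anticipate a genuine obstacle: the argument is textbook and is really what is cited from \cite{dubhashi2009concentration}. The only mildly delicate point is the piecewise optimization in $t$, which is needed precisely to produce the minimum of a linear and a quadratic tail. An alternative route would be to invoke Bennett's inequality and then weaken $h(u) = (1+u)\log(1+u) - u$ to the same $\min$-form, but the direct MGF computation above seems cleanest and self-contained.
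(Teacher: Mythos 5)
Your proof is correct. The paper does not supply its own proof of this lemma but simply cites \cite{dubhashi2009concentration}; the MGF argument you give (bound $\E[(X_i-\mu)^k]\le\sigma^2$ via $|X_i-\mu|\le 1$, deduce $\E[e^{t(X_i-\mu)}]\le e^{\sigma^2 t^2}$ for $t\in(0,1]$, then optimize $t$ in the two regimes $\Delta\le 2T\sigma^2$ and $\Delta>2T\sigma^2$) is precisely the standard derivation of Bernstein's inequality for bounded variables, so you have in effect reproduced the cited textbook proof.
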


We prove the following result about the maximum of random variables in \Cref{sec:deferred_proofs}.
\begin{restatable}{lemma}{maxlemma} \label{lem:expectation_of_max}
Let $X_1, \dots, X_k$ be independent random variables. It holds that $\E[\max_i X_i] = O(\max_{i} \|X_i\|_{\log k})$.
\end{restatable}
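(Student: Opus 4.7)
The plan is to use a standard $L^p$-norm trick with $p = \log k$, the key observation being that raising to the $p$th power converts the maximum into a sum at the cost of only a constant factor when $p \approx \log k$. Independence is not actually needed for the argument.

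First I would reduce to the nonnegative case by observing $\max_i X_i \le \max_i |X_i|$, so it suffices to bound $\E[\max_i |X_i|]$. Let $Y = \max_i |X_i|$ and set $p := \log k$. By Jensen's inequality (monotonicity of $L^p$ norms in $p$, combined with $\E[Y] \le \E[Y^p]^{1/p}$ when $Y \ge 0$), we have
\begin{align}
\E[Y] \;\le\; \|Y\|_p \;=\; \bigl(\E[\max_i |X_i|^p]\bigr)^{1/p}.
\end{align}

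Next I would bound the maximum inside the expectation by the sum, which is the step where the logarithmic exponent pays off:
\begin{align}
\E\bigl[\max_i |X_i|^p\bigr] \;\le\; \sum_{i=1}^{k} \E\bigl[|X_i|^p\bigr] \;\le\; k \cdot \max_i \|X_i\|_p^{\,p}.
\end{align}
Taking $p$th roots gives $\|Y\|_p \le k^{1/p} \cdot \max_i \|X_i\|_p$. Plugging in $p = \log k$ yields $k^{1/p} = e$, so
\begin{align}
\E\bigl[\max_i X_i\bigr] \;\le\; \E[Y] \;\le\; e \cdot \max_i \|X_i\|_{\log k} \;=\; O\bigl(\max_i \|X_i\|_{\log k}\bigr),
\end{align}
which is exactly the claimed bound.

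There is no real technical obstacle here; the only subtlety is choosing the exponent $p = \log k$ so that the $k^{1/p}$ factor collapses to a constant, and remembering that the comparison $\E[Y] \le \|Y\|_p$ requires $Y \ge 0$ (which is why the passage from $\max X_i$ to $\max |X_i|$ is done first). I would also briefly note that independence of the $X_i$ is stated in the lemma but is unnecessary, since the whole argument goes through for arbitrary joint distributions via the union-bound-in-$L^p$ above.
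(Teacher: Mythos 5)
Your proof is correct and follows essentially the same route as the paper's: Jensen's inequality to pass to the $L^p$-norm with $p\approx\log k$, bound the max by the sum inside the expectation, and absorb the $k^{1/p}$ factor as a constant (you get $e$ using natural log; the paper uses $\ell=\log_2 k$ and gets $2$). The one genuine improvement in your write-up is the explicit reduction $\max_i X_i \le \max_i |X_i|$ before applying Jensen and pulling the $p$th power inside the max — the paper's chain writes $(\max_i X_i)^\ell = \max_i X_i^\ell$ and $\max_i X_i^\ell \le \sum_i X_i^\ell$ without comment, which tacitly assumes nonnegativity; your version makes that legitimate. Your remark that independence is unused is also accurate.
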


\paragraph{Mesuring distribution distance}

We will follow the lower bound methodology based on Hellinger distance that was developed in \cite{bar2001sampling}. 

Let $p$ and $q$ be two discrete distributions. In the following definitions we always assume that relevant distributions are over the same support $[t]$, for some integer $t$. Let $p_i := \P(p = i)$ and $q_i = \P(q = i)$. The \emph{squared Hellinger distance} between $p$ and $q$ is defined as
\begin{align}
\label{def:hellinger}
    H^2(p, q) = 1 - \sum_{i=1}^{t} \sqrt{p_i \cdot q_i} = \frac{1}{2} \cdot \sum_{i=1}^t (\sqrt{p_i}-\sqrt{q_i})^2. 
\end{align}

We will need two important facts about the squared Hellinger distance. The first fact nicely bounds the squared Hellinger distance of product distributions. The second fact relates the squared Hellinger distance to the total variation distance.

\begin{fact}[Equation (C.4) in~\cite{canonne2020survey}]
    \label{fact:hellinger_product}
    For two distributions $p, q$ over $[t]$ and $m \ge 1$ any integer, define $p^{\otimes m}$ to be the distribution over $[t]^m$ defined by drawing $m$ independent samples $s_1, \dots, s_m$ from $p$ and returning $(s_1, \dots, s_m) \in [t]^m$ (and define $q^{\otimes m}$ similarly).
    Then, we have 
    \begin{align*}
        H^2\left(p^{\otimes m}, q^{\otimes m}\right) = 1 - \left(1 - H^2(p, q)\right)^m.
    \end{align*}
\end{fact}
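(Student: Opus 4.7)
The plan is to prove this tensorization identity by working with the Bhattacharyya coefficient, which is the natural quantity to tensorize. Recall from the definition in \eqref{def:hellinger} that
\[
H^2(p,q) = 1 - \sum_{i=1}^t \sqrt{p_i q_i}.
\]
I will denote $BC(p,q) := \sum_{i=1}^t \sqrt{p_i q_i}$, so that $H^2(p,q) = 1 - BC(p,q)$. The strategy is to show that $BC$ is multiplicative over product distributions: $BC(p^{\otimes m}, q^{\otimes m}) = BC(p,q)^m$. Once this is established, the desired identity $1 - H^2(p^{\otimes m}, q^{\otimes m}) = (1 - H^2(p,q))^m$ follows immediately.

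The main computation is a direct unfolding of the sum. By definition of the product distribution, $p^{\otimes m}$ assigns mass $\prod_{j=1}^m p_{i_j}$ to the tuple $(i_1, \dots, i_m) \in [t]^m$, and likewise for $q^{\otimes m}$. Therefore
\[
BC(p^{\otimes m}, q^{\otimes m}) = \sum_{(i_1, \dots, i_m) \in [t]^m} \sqrt{\prod_{j=1}^m p_{i_j} \cdot \prod_{j=1}^m q_{i_j}} = \sum_{(i_1, \dots, i_m) \in [t]^m} \prod_{j=1}^m \sqrt{p_{i_j} q_{i_j}}.
\]
The key step is to interchange the sum over tuples with the product over coordinates, which is valid because all terms are nonnegative and each factor depends only on one coordinate. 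This factorizes the sum as
\[
\prod_{j=1}^m \left( \sum_{i_j = 1}^t \sqrt{p_{i_j} q_{i_j}} \right) = BC(p,q)^m.
\]
Substituting $BC = 1 - H^2$ into both sides then yields $1 - H^2(p^{\otimes m}, q^{\otimes m}) = (1 - H^2(p,q))^m$, which rearranges to the claim.

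There is no real obstacle here: the identity is purely algebraic, and the only ingredient beyond elementary manipulations is the observation that $\sqrt{\cdot}$ distributes over products, which is what lets the tuple-indexed sum factor into a product of single-index sums. If one wishes to handle general (non-discrete) distributions, the same argument goes through with densities with respect to a common dominating measure and Fubini's theorem replacing the sum interchange, but for the statement as given over $[t]$ the finite-sum version suffices.
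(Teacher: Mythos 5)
Your proof is correct. Note that the paper does not actually prove this statement—it cites it as a known fact from Canonne's survey (Equation (C.4))—so there is no paper proof to compare against. Your argument via the Bhattacharyya coefficient $BC(p,q)=\sum_i\sqrt{p_iq_i}$ and its multiplicativity under products, $BC(p^{\otimes m},q^{\otimes m})=BC(p,q)^m$, is precisely the standard derivation of the tensorization identity, and the algebra (factoring the tuple-indexed sum into a product of single-coordinate sums) is carried out correctly.
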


\begin{fact}[Theorem C.5 in~\cite{canonne2020survey}]
    \label{fact:hellinger_implies_l1}
    For any two distributions $p,q$ over some discrete set $[t]$, 
    \begin{align*}
        1 - d_{TV}(p, q) \le 1 - H^2(p, q) \le \sqrt{1 - d_{TV}(p, q)^2}.
    \end{align*} 
\end{fact}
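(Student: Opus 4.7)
My plan is to rewrite both inequalities in terms of the Bhattacharyya coefficient $BC(p,q) := \sum_i \sqrt{p_i q_i}$, which equals $1 - H^2(p,q)$ by the definition in \eqref{def:hellinger}, and to use the standard identities $\sum_i \min(p_i, q_i) = 1 - d_{TV}(p,q)$ and $\sum_i \max(p_i, q_i) = 1 + d_{TV}(p,q)$. The latter two follow immediately from $\min(p_i, q_i) + \max(p_i, q_i) = p_i + q_i$ (so the sum is $2$) and $\max(p_i, q_i) - \min(p_i, q_i) = |p_i - q_i|$ (so the signed sum is $2 d_{TV}(p,q)$). After this reduction, the target becomes
\[
\sum_i \min(p_i, q_i) \;\le\; \sum_i \sqrt{p_i q_i} \;\le\; \sqrt{\Big(\sum_i \min(p_i, q_i)\Big)\Big(\sum_i \max(p_i, q_i)\Big)}.
\]

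For the lower bound, I would argue termwise. For any nonnegative reals $a, b$, the AM--GM-type inequality $\min(a,b) \le \sqrt{ab}$ holds (since $\min(a,b)^2 = \min(a,b) \cdot \min(a,b) \le \min(a,b) \cdot \max(a,b) = ab$). Summing over $i$ with $a = p_i$ and $b = q_i$ gives $1 - d_{TV}(p,q) = \sum_i \min(p_i, q_i) \le \sum_i \sqrt{p_i q_i} = 1 - H^2(p,q)$, which is exactly the left-hand inequality.

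For the upper bound, I would apply Cauchy--Schwarz to the factorization $\sqrt{p_i q_i} = \sqrt{\min(p_i, q_i)} \cdot \sqrt{\max(p_i, q_i)}$:
\[
\sum_i \sqrt{p_i q_i} \;\le\; \sqrt{\sum_i \min(p_i, q_i)} \cdot \sqrt{\sum_i \max(p_i, q_i)} \;=\; \sqrt{(1 - d_{TV}(p,q))(1 + d_{TV}(p,q))} \;=\; \sqrt{1 - d_{TV}(p,q)^2}.
\]
Combined with $1 - H^2(p,q) = \sum_i \sqrt{p_i q_i}$, this yields the right-hand inequality.

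There is no real obstacle: both bounds reduce to a pointwise comparison followed by a one-line sum (respectively Cauchy--Schwarz). The only thing to keep in mind is that the argument is stated for discrete distributions, which matches the hypothesis of the fact; an analogous proof would go through in the continuous case by replacing sums with integrals against a dominating measure, but that is not needed here.
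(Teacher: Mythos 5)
Your proof is correct. The paper does not prove this statement; it simply cites it as Theorem C.5 of Canonne's survey, so there is no internal proof to compare against. For the record, your argument is the standard one: writing $1-H^2 = \sum_i \sqrt{p_iq_i}$ as the Bhattacharyya coefficient, using the termwise bound $\min(p_i,q_i)\le\sqrt{p_iq_i}$ for the lower inequality and Cauchy--Schwarz applied to $\sqrt{p_iq_i}=\sqrt{\min(p_i,q_i)}\cdot\sqrt{\max(p_i,q_i)}$ for the upper one, together with $\sum_i\min(p_i,q_i)=1-d_{TV}$ and $\sum_i\max(p_i,q_i)=1+d_{TV}$. All steps are justified and the argument is complete.
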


Finally, we use bounds on total variation distance to rule out \emph{distinguishers} between two distributions. Formally, a $1-\delta$ distinguisher between $p$ and $q$ is a function assigning a label \texttt{p} or \texttt{q} to every element in the support of $p$ (or $q$). We require that if we sample from $p$, the distinguisher outputs \texttt{p} with probability at least $1-\delta$ and the same for $q$. The following is a standard fact about total variation distance. 

\begin{fact}
    \label{fact:l1}
    Let $p$ and $q$ be any two distributions with $d_{TV}(p, q) \le 1 - \delta$. Then, there is no $1-\delta$ distinguisher between $p$ and $q$. 
\end{fact}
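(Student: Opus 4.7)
The plan is to argue by contraposition: suppose that a $1-\delta$ distinguisher $f$ between $p$ and $q$ exists, and deduce a lower bound on $d_{TV}(p,q)$ that contradicts the hypothesis. The key idea is that a distinguisher is essentially a witness for the variational characterization of total variation distance, so converting the distinguisher into a single measurable event is all that is needed.

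Concretely, I would define the event $A := \{x : f(x) = \texttt{p}\}$. The distinguisher's guarantee immediately gives $p(A) \ge 1-\delta$ (the distinguisher outputs \texttt{p} on samples from $p$ with probability at least $1-\delta$), and, by passing to complements, the analogous guarantee on samples from $q$ reads $q(A^c) \ge 1-\delta$, i.e.\ $q(A) \le \delta$. Applying the variational definition $d_{TV}(p,q) = \sup_S |p(S) - q(S)|$ with $S = A$ yields
\[
d_{TV}(p,q) \;\ge\; p(A) - q(A) \;\ge\; (1-\delta) - \delta,
\]
which contradicts the assumed upper bound on $d_{TV}(p,q)$ and thus rules out the existence of such an $f$.

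I do not expect any real obstacle: the proof is a single application of the variational definition of total variation distance, once the distinguisher has been recast as the indicator of the set $A$. The only subtlety is to be careful about the convention used for the constant ``$1-\delta$'' in the definition of the distinguisher versus the hypothesis on $d_{TV}$; should the statement require a tighter constant than the one produced above, a standard rescaling of $\delta$ (or a symmetrization of the distinguisher's errors) suffices to align the two, without changing the structure of the argument.
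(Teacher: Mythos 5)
The paper offers no proof of this Fact; it is cited as a ``standard fact'' and used as a black box. Your approach---define $A := \{x : f(x) = \texttt{p}\}$ and apply the variational characterization $d_{TV}(p,q) = \sup_S |p(S)-q(S)|$---is precisely the standard argument.

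However, the alleged contradiction in your main step does not actually hold. You derive $d_{TV}(p,q) \ge (1-\delta) - \delta = 1-2\delta$, and since $1-2\delta \le 1-\delta$ for every $\delta \ge 0$, this is \emph{consistent} with the hypothesis $d_{TV}(p,q) \le 1-\delta$, not contradictory to it. You flag exactly this possibility in your closing paragraph, and you are right to: your argument in fact proves that $d_{TV}(p,q) < 1-2\delta$ rules out a $(1-\delta)$-distinguisher, which is the correct and sharp form of the claim. The paper's version, read literally, is slightly too strong: for $p = (1/2,1/2,0)$, $q = (0,1/2,1/2)$ one has $d_{TV}(p,q) = 1/2 = 1-\delta$ with $\delta = 1/2$, and the labeling $f(1)=f(2)=\texttt{p}$, $f(3)=\texttt{q}$ is a $(1-\delta)$-distinguisher (accuracy $1$ on $p$ and $1/2$ on $q$). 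So a rescaling of $\delta$ by a constant factor is not an optional tightening but necessary. In the paper's use of this Fact (ruling out a distinguisher in the proof of \Cref{lem:counting_lb_2}), the target bounds are of the form $\Omega(\cdot \log \delta^{-1})$, so replacing $\delta$ by $\delta/2$ is absorbed into the constant and the imprecision is harmless. Your caveat and proposed fix are correct; the only issue is that the body of the proof asserts a contradiction that, with the stated constants, is not there.
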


\subsection{Sequential estimation setting}
\label{sec:sampling_definitions}

We now formally define the model for estimating a parameter of a distribution by sampling and the notion of instance optimality in this setting. As we said above, this formalism is needed to state our equivalence theorem, but may not be necessary for casual reading, meaning that the reader may wish to skip this section.

First, we define a parameter of a distribution. The set $K$ below is typically the set of real numbers. 
\begin{definition}[A distribution parameter]
    Given sets $S$ and $K$, we say that a function $\theta : Dist(S) \rightarrow K$ is a distribution parameter. 
\end{definition}

We next define a distance function that measures how far an estimate of a parameter is from the truth.  

\begin{definition}[Distance function]
\label{def:quasimetric}
    Given a distribution parameter $\theta : Dist(S) \rightarrow K$, a \emph{distance function for $\theta$} is a family $d$ of functions $d_D: K \rightarrow \R$ for $D \in Dist(S)$. Each $d_D$ needs to satisfy that for every $\hat\theta \in K$ we have $d_D(\hat\theta) \ge 0$ and that $d_D(\theta(D)) = 0$.  
\end{definition}

We can now formally define a \emph{sequential estimation problem}. A \emph{sampling algorithm} is any process that has sampling access to an input distribution $D$; the algorithm samples elements from $D$ until it decides to finish and return a value. 

\begin{definition}[Distribution estimation problem]
\label{def:dist_problem}
    A sequential estimation problem $P$ is formally a triple $P = (S, \theta, d )$ where $\theta: Dist(S) \rightarrow K$ is a distribution parameter and $d$ is a distance function for $\theta$.
    
    The task is to estimate $\theta$ for any distribution in $Dist(S)$ using as few samples from an input distribution $D$ as possible. 
    Formally, we say that an algorithm is $(\eps,\delta)$-correct for $P$ if for any distribution $D \in Dist(S)$, it holds for the output $\hat\theta$ of the algorithm that $P[d_D(\hat\theta) > \eps] < \delta$.
\end{definition}

Let us now list the problems that we consider in this paper. 
\begin{enumerate}
    \item \mean: This problem is defined relative to a set $S \subseteq \R$ being any subset of reals. The function $\theta$ maps a distribution $D$ over $S$ to its mean $\mu(D)$ and $d_D$ maps an estimate $\hat\mu$ to $|\hat\mu - \mu(D)|$.
    \item \freq: This is a special version of the \mean problem with $S = \{0, 1\}$.\footnote{Note that an instance optimal algorithm $A$ for \mean over $S$ is not automatically instance optimal for \mean over sets $S' \subseteq S$; for $S'$, the set of correct algorithms that $A$ needs to be competitive with is potentially larger. Issues like this are the reason why we try to keep this section formal. }
    \item \distinfty: $S$ is an arbitrary finite set, $\theta$ is identity and $d_D$ maps $D'$ to $\ell_\infty(D, D')$. 
    \item \quantile: This problem is defined relative to a number $q \in [0,1]$. We have $S = \R$, $\theta$ maps $D$ to its $q$-th quantile $Q_D(q)$ and $d_D$ maps an estimate $\hat N$ to  $\max(q-P_{X \sim D}(X \leq \hat N), P_{X \sim D}(X < N) - q)$.  
    \item \distks: We have $S = \R$, $\theta$ is identity and $d_D$ maps $D'$ to the Kolmogorov-Smirnov distance between $D$ and $D'$, i.e., $d_D(D') = \sup_{t \in \R} | P_{X \sim D}(x \le t) - P_{X \sim D'}(x \le t)|$. 
        
 \end{enumerate}



Let $Dist_B(S)$ be the set of distributions on $D$ where all probabilities are multiples of $1/B$.
We also define a $B$-\textsc{Mixture} version of each of the above problems. These problems are analogous to \distinfty, \quantile, and \distks, but the input distribution over a set $S$ is replaced by an input distribution $D$ over a set $\fS = Dist_B(S)$ and we want to solve the original problem for the mixture distribution $M(D)$. We leave out the ``$B$-'' to denote the limiting case where the probabilities may be arbitrary. 

With a slight abuse of notation, we denote the set of the $B$-\textsc{Mixture} problems for all $B \in \mathbb{N}$ as the \textsc{Step-Mixture} version of the problem. Indeed, this is technically not a problem in the above sense, but rather a set of problems. This will, however, not be an important distinction for us.


Finally, we also define variants of the problems with multiplicative error where we divide the value of $d_D(x)$ by $\theta(D)$.

\paragraph{Instance optimality}
Next, we define formally what it means that an algorithm for an estimation problem is instance optimal. We use $T(A_{\eps, \delta}, D)$ to denote the expected number of samples that a sampling algorithm $A$ makes on an input distribution $D$ for input precision parameters $\eps, \delta > 0$. For a sequential estimation problem $P$ and an input distribution $D$ we define: 
\begin{align} \label{def:opt}
    OPT_{\eps, \delta}(P, D) = \inf_{A \text{ is $(\eps,\delta)$-correct for $P$}} T(A_{\eps, \delta}, D).
\end{align}

\begin{definition}[Instance optimality of estimators]
\label{def:instance_optimality}
For any sequential estimation problem $P$ from \Cref{def:dist_problem} we say that an algorithm $A_{\eps, \delta}$ is an instance optimal $(\eps,\delta)$-estimator for $P$ if:
\begin{enumerate}
    \item Correctness: For any $\eps, \delta > 0$, we have that $A_{\eps, \delta}$ is an $(\eps, \delta)$-correct for $P$. 
    \item Competivity: There is a constant $C$ such that for every distribution $D$ and every $\eps, \delta > 0$, we have $T(A_{\eps, \delta}, D) \le C \cdot OPT_{\eps, \delta}(P, D)$.  
\end{enumerate}
We say that $A$ is instance optimal for $P$ if for $\eps, \delta > 0$, the algorithm $A_{\eps, \delta}$ parametrized by $\eps,\delta$ is instance optimal $(\eps,\delta)$-estimator. 


\end{definition}

If we replace the constant $C$ by a function $C(\eps, \delta)$, we get algorithms that are \emph{instance optimal up to} $C(\eps, \delta)$.

\paragraph{Enhanced probability amplification}

Next, we state a useful theorem that enables us to amplify the success probability of an algorithm while also amplifying the concentration of the running time of the algorithm. Our theorem is based on the standard median amplification trick but it in addition bounds the tails of the complexity which comes useful in some applications.  

%

In order for the amplification theorem to work, we need a mild technical assumption about the problem. We say that a sequential estimation problem $P = (S, \theta, d_D)$ is \emph{monotonic} if the image of $\theta$ is $\R$ and if $d_D(\hat\theta)$ is non-decreasing in $|\hat\theta - \theta(D)|$. 
In the next lemma, we use the notation $\mathbf{T}(A, D)$ for the random variable representing the sample complexity of $A$ on $D$ (whose mean is $T(A,D)$). 

\begin{restatable}{lemma}{amplification}\label{lem:prob_amplification}
Suppose that an algorithm $A$ is $(\eps, 1/10)$-correct for a monotonic sequential estimation problem $P$. 
Then for any $\delta > 0$, there is an $(\eps, \delta)$-correct algorithm $A'$ for $P$. 
Moreover, we have for any distribution $D$ that
\begin{align*}
T(A', D) &\le \|\mathbf{T}(A', D)\|_{O(\log\delta^{-1})}
= O\left(T(A,D) \cdot  \log \delta^{-1} \right)\,.\end{align*}
\end{restatable}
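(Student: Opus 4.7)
The plan is to use an enhanced median amplification in which the $k$ copies of $A$ are run in parallel with early stopping. Fix $k = C \log \delta^{-1}$ for a sufficiently large absolute constant $C$ and set $r = 7k/10$. Define $A'$ to advance $k$ independent copies of $A$ in round-robin (one sample per copy per round), halt as soon as $r$ of them have terminated, and output the median of the $r$ returned values.

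For correctness, let $X$ count the copies whose returned $\hat\theta$ satisfies $d_D(\hat\theta) \le \eps$. Each copy is correct independently with probability at least $9/10$, so a Chernoff bound (using that $C$ is large) gives $X \ge 8k/10$ except with probability at most $\delta$. Since only $k - r = 3k/10$ copies fail to terminate in time, at most that many correct outputs can be excluded, so conditional on $X \ge 8k/10$ at least $5k/10$ of the $r = 7k/10$ terminated outputs are correct---a strict majority since $5/7 > 1/2$. Monotonicity of $P$ implies that the set $\{\hat\theta : d_D(\hat\theta) \le \eps\}$ is an interval around $\theta(D)$, and a strict majority of terminated outputs lying in an interval forces the median to lie there too. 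Hence $A'$ is $(\eps,\delta)$-correct.

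For the running-time bound, write $T := T(A,D)$, let $T_1,\dots,T_k$ be the i.i.d.\ sample complexities of the copies, and let $T_{(1)} \le \dots \le T_{(k)}$ be the order statistics. Under round-robin interleaving the total sample count of $A'$ is at most $\sum_i \min(T_i, T_{(r)}) \le k \cdot T_{(r)}$, so it suffices to control $\norm{T_{(r)}}_p$ for $p = O(\log \delta^{-1})$. By Markov, $\P[T_i > x] \le T/x$ for every $x>0$, and the event $\{T_{(r)} > x\}$ requires at least $k - r + 1 = 3k/10 + 1$ of the $T_i$ to exceed $x$; a union bound (or Chernoff) gives $\P[T_{(r)} > x] \le \binom{k}{3k/10+1}(T/x)^{3k/10+1} \le (cT/x)^{3k/10}$ for an absolute constant $c$. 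Substituting into $\E[T_{(r)}^p] = \int_0^\infty p x^{p-1} \P[T_{(r)} > x]\, dx$ and splitting at $x = cT$, the contribution from $x \le cT$ is at most $(cT)^p$, while the tail contribution converges and is again $O(T^p)$ provided $3k/10 > p$, i.e., $C$ is chosen large enough relative to the hidden constant in $p = O(\log \delta^{-1})$. This yields $\norm{T_{(r)}}_p = O(T)$, hence $\norm{\mathbf{T}(A',D)}_p \le k \cdot \norm{T_{(r)}}_p = O(T \log \delta^{-1})$; the inequality $T(A',D) \le \norm{\mathbf{T}(A',D)}_p$ is Jensen's ($L^1 \le L^p$).

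The main obstacle is simultaneously calibrating $k$, $r$, and $C$ so that (i) the pigeonhole bound on correct terminated copies yields a strict majority, (ii) the tail integral for $T_{(r)}^p$ converges ($3k/10 > p$), and (iii) the Chernoff bound behind correctness beats the failure probability $\delta$. The choice $k = C\log \delta^{-1}$, $r = 7k/10$ with $C$ sufficiently large satisfies all three conditions simultaneously, so no further technical work is needed beyond the tail-integral computation.
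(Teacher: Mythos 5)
Your proof is correct and follows essentially the same strategy as the paper's: run $\Theta(\log\delta^{-1})$ parallel copies in round-robin, stop once a large constant fraction have terminated, return their median; correctness follows from a Chernoff bound plus the monotonicity (interval) property, and the moment bound on the running time comes from a tail-integral estimate on the relevant order statistic of the per-copy sample complexities. The only cosmetic difference is the choice of the stopping fraction ($7/10$ versus the paper's $9/10$), and your writeup is a bit more explicit about the tail integral for $\norm{T_{(r)}}_p$ than the paper's.
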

The proof is deferred to \Cref{sec:deferred_proofs}.

\subsection{I/O estimation model}
\label{sec:io_definitions}

We next formally define the I/O estimation model, complexity, and estimation problems. Again, this formalism is needed to state our equivalence theorem, but may not be necessary for casual reading.

The definitions are analogous to definitions from \Cref{sec:sampling_definitions}. In the following definition,     $\Pi_n$ stands for the set of all permutations on $\{1, \cdots, n\}$ elements and a concrete permutation $\pi \in \Pi_k$, we use $\pi(I)$ for an input consisting of $k$ blocks, to denote the input that we get by shuffling the blocks of an input $I$ by $\pi$. We also use the notation $\fD(I)$ for a sequence $I$ of length $n$ that denotes the uniform distribution over the elements of $I$.  

We start by defining the I/O model of computation. Essentially, the definition says that we use Word-RAM but input reads are performed in blocks of $B$ words.
\begin{definition}[I/O model]
\label{def:io_model}
    The \emph{I/O model} is an extension of the standard Word-RAM model parametrized by an additional parameter $B$, the length of a block. 
    The input of length $n$ consists of $\lceil n/B \rceil$ blocks where each block contains a sequence of $B$ words. 
    The algorithm can access the input by a READ($i$) operation that takes an integer $i$ between $1$ and $\lceil n / B \rceil$ and loads the $B$ words at position $i$ of the input to the working memory. 
\end{definition}

\begin{definition}[I/O complexity] \label{def:io_complexity}
    The \emph{I/O complexity} $T'(A, I)$ of an algorithm $A$ on input $I$ is the expected number of its READ operations on $I$. 
    The \emph{order-oblivious I/O complexity} is defined as $T(A, I) = \max_{\pi \in \Pi_{\lceil n/B \rceil}} T'(A, \pi(I))$. 
\end{definition}

\begin{definition}[I/O estimation problem] \label{def:io_problem}
    An \emph{I/O estimation problem} $P$ is formally a quadruple $P = (B, S, \theta, d)$ where $B$ is an integer, $S$ an arbitrary set, $\theta$ a distribution parameter, and $d$ is a distance function for $\theta$.
    
    The input to $P$ is an arbitrary sequence $I$ of blocks of $B$ values from $S$, with at most one block being allowed to have fewer than $B$ elements.
The goal is to construct an $(\eps, \delta)$-correct estimator defined as an estimator such that for its output $\hat\theta$ we have $P[ d_{\fD(I)}(\hat\theta) > \eps] < \delta$. 
\end{definition}

Note that any I/O-estimation is permutation-invariant in the sense that the distance function does not depend on the permutation.

From now on, we will work only with the order-oblivious I/O complexity and, in particular, define the optimal complexity $OPT_{\eps, \delta}$ with respect to the order-oblivious complexity exactly as in the sequential estimation setting (\Cref{def:opt}). Using this, we now define order-oblivious instance optimality. The fact that we use the order-oblivious complexity is crucial to make our equivalence theorem (below) work and to ensure that such instance-optimal algorithms even exist.
\begin{definition} \label{def:oo-instance-optimality}
An algorithm $A_{\eps, \delta}$ in the I/O model is \emph{order-oblivious instance optimal} $(\eps,\delta)$-estimator if
\begin{enumerate}
    \item Correctness: For any $\eps, \delta > 0$, we have that $A_{\eps, \delta}$ is an $(\eps, \delta)$-correct on $P$. 
    \item Competitivity: There is a constant $C$ such that for every input $I$ and every $\eps, \delta > 0$, we have $T(A_{\eps, \delta}, I) \le C \cdot OPT_{\eps, \delta}(P, I)$.  
\end{enumerate}
\end{definition}

For each problem in the sequential estimation setting, we define its \textsc{Block} variant in the I/O estimation setting. We define this as the distributional problem on a distribution obtained by sampling a uniform block and then sampling a uniform element from that block.  Formally, if we have problem $P_{dist} = (S, \theta, d)$, we define a problem in the I/O setting for any $B \in \mathbb{N}$ as $(B, S, \theta, d')$ where $d'_{\mathcal{D}(I)} = d_D$ for $D$ a uniformly random element of a uniformly random block of $I$, where $I \in S^B$ is an input.

\paragraph{Equivalence between the I/O and sequential estimation settings}

We will next state the formal version of \Cref{thm:io_transfer_informal}. Before we do that, we need to specify mild technical conditions under which the equivalence between instance optimality in the two settings holds \footnote{Even milder conditions would suffice, but we are not aware of any reasonable problems where our requirements fail. }. These technical conditions are necessary because of some technicalities related to ``mismatches'' between our two models -- while the sequential estimation setting works with an arbitrary distribution, the I/O estimation model works only with an implicit uniform distribution over $n$ memory blocks. 

For the following definition, we use a notation as follows. For a distribution $D$, letting $X_1, \cdots, X_n \sim D$, we use $D_n$ to denote the uniform distribution on $\{X_1, \cdots, X_n\}$. We note that $D_n$ is itself a distribution over distributions. We naturally extend our notation, e.g., $d_{D_n}(\hat\theta)$ for fixed $\hat\theta$ is a random variable. We also use the notation $\fI(D_n)$ to denote the external memory input we get by arbitrarily listing the support of $D_n$ as an array of length $n$. For a set of values $x^i$, each associated with a sequence of random variables $X_1^i, X_2^i, \dots$, we say that \emph{$X^i_j \rightarrow x^i$ for all $i$ uniformly in probability} if for every $\eps$ there is $n$ such that for all $i$ we have $\P(|X^i_n - x^i| > \eps) < \eps$.

\begin{definition}[Consistency of a problem]
\label{def:consistent}
We say that a sequential estimation problem $P = (S, \theta, d_D)$ is consistent if for $n \rightarrow \infty$, it holds 
\begin{enumerate}
    \item For every  $\hat\theta$ we have $d_{D_n}(\hat\theta) \rightarrow d_{D}(\hat\theta)$ uniformly in probability.  \label{item:2}
\end{enumerate}
Consistency of an I/O estimation problem is defined analogously.
\end{definition}

The second technical condition ensures that the sample complexity does not have ``big jumps". \jakub{I don't like this name. I would assume it means something very different.}
\begin{definition}[Uniform boundedness of the sample complexity]
\label{def:bounded}
We say that the sample complexity of a sampling problem $P$ is uniformly bounded if for every $\eps, \delta > 0$ there is a constant $C_{\eps, \delta} > 0$ such that for all distributions $D$ we have $OPT_{\eps, \delta}(P,D) \le C_{\eps, \delta} \cdot OPT_{2\eps, 2\delta}(P,D)$. 
The uniform boundedness of an I/O estimation problem $P$ is defined analogously, replacing $OPT(P,D)$ with $OPT(P,I)$. 
\end{definition}

We can now state equivalence (up to some technical assumptions) between the sequential and I/O estimation models. This is the claim that justifies why we mostly focus on the sequential estimation setting. 

\begin{restatable}{theorem}{transtheorem}{\upshape (Equivalence between I/O model and sequential estimation)}{\bf.}
\label{thm:io_transfer}
Consider any pair of problems $P_{dist}, P_{I/O}$ where $P_{dist} := (Dist_B(S), \theta, d_D)$ is a sequential estimation problem and $P_{I/O} = (B, S, \theta, d_D)$ is an I/O estimation problem. \jakub{There is a mistake here. Clearly, we can't use $d_D$ in both, as we are using this on different sets. But I would skip fixing this for now maybe... It's very much a surface-level issue}
    \begin{enumerate}
    \item Assume that $P_{dist}$ is consistent per \Cref{def:consistent} and its sample complexity is bounded per \Cref{def:bounded}. Then, an instance optimal $(\eps, \delta)$-estimator (up to $c_{\eps,\delta}$) for $P_{dist}$ implies an order-oblivious instance optimal $(2\eps, 2\delta)$-estimator (up to $c_{\eps,\delta}$) for $P_{I/O}$.
        \item Assume that $P_{I/O}$ is consistent per \Cref{def:consistent} and its sample complexity is bounded per \Cref{def:bounded}. Then, an order-oblivious instance optimal $(\eps, \delta)$-estimator (up to $c_{\eps,\delta}$) for $P_{I/O}$ implies an instance optimal $(2\eps, 2\delta)$-estimator (up to $c_{\eps,\delta}$) for $P_{dist}$.    
    \end{enumerate}
\end{restatable}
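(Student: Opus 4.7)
The plan is to prove the two directions separately via explicit complexity-preserving reductions between algorithms. The key correspondence is between an I/O input $I$ (a sequence of $n/B$ blocks of $B$ elements each) and the distribution $\fD(I) \in Dist(Dist_B(S))$ which is uniform over the blocks of $I$; conversely, $n$ iid samples from $D \in Dist(Dist_B(S))$ form an I/O input, with the random ordering being harmless thanks to order-obliviousness. Correctness with respect to $\fD(I)$ in the sequential setting coincides by definition with correctness on $I$ in the I/O setting, while for large $n$ the empirical $D_n$ approximates $D$ in the sense of Definition \ref{def:consistent}, allowing us to pass between correctness on $D$ and on $D_n$ at the cost of the stated $(2\eps,2\delta)$ slack.

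\paragraph{Direction 1 (sequential $\Rightarrow$ I/O).} Given an instance optimal $A_{dist}$, I would define $A_{I/O}$ on an input $I$ as follows: each time $A_{dist}$ requests a sample, read a uniformly random unread block of $I$ and return the uniform distribution on its $B$ entries, viewed as an element of $Dist_B(S)$. Order-obliviousness makes this operationally equivalent to reading blocks in the given order after a random shuffle, so the samples fed to $A_{dist}$ are iid from $\fD(I)$. Correctness of $A_{I/O}$ on $I$ is inherited from correctness of $A_{dist}$ on $\fD(I)$. For complexity, $T(A_{I/O},I)=T(A_{dist},\fD(I))\le c_{\eps,\delta}\cdot OPT_{dist}(P_{dist},\fD(I))$. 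I then show $OPT_{dist}(P_{dist},\fD(I))\le OPT_{I/O}(P_{I/O},I)$ via the dual construction: any $(\eps,\delta)$-correct I/O algorithm $A'$ yields a sequential algorithm tailored to $\fD(I)$, simply by answering each of its READ requests with a fresh iid sample from $\fD(I)$, which costs the same number of samples as $A'$ did reads on $I$.

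\paragraph{Direction 2 (I/O $\Rightarrow$ sequential).} Given an order-oblivious instance optimal $A_{I/O}$, I would define $A_{dist}$ on a distribution $D$ by simulating $A_{I/O}$ lazily on an imaginary unbounded input: maintain a hash table indexed by block-index; when $A_{I/O}$ issues READ$(i)$, return the cached block if $i$ is known, otherwise draw a fresh iid sample from $D$, cache it at position $i$, and return it. By order-obliviousness this is equivalent to running $A_{I/O}$ on an iid input of any sufficiently large length $n$. Choosing $n$ large enough via the consistency condition, the output is $(2\eps,2\delta)$-correct for $D$. For complexity, the expected number of samples equals $\E[T(A_{I/O},I_n)]$ for $I_n\sim D^n$, which by instance optimality of $A_{I/O}$ is at most $c_{\eps,\delta}\cdot\E[OPT_{I/O}(P_{I/O},I_n)]$. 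A dual reduction (any sequential algorithm for $P_{dist}$ gives an I/O algorithm on $\fD(I_n)$ by sampling random blocks) bounds $OPT_{I/O}(P_{I/O},I_n)\le OPT_{dist}(P_{dist},D_n)$, and uniform boundedness (Definition \ref{def:bounded}) then gives $OPT_{dist}(P_{dist},D_n)\le c_{\eps,\delta}\cdot OPT_{dist}(P_{dist},D)$.

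\paragraph{Main obstacle.} The heart of the argument is the chain of inequalities relating $OPT_{dist}$ on $D$ with $OPT_{I/O}$ on a typical empirical input $I_n$; this is precisely where the two reductions fail to be strict inverses, since composing them shifts the input from $D$ to $D_n$. Consistency of the problem is invoked to preserve correctness across this shift (which is why we lose the $(2\eps,2\delta)$ factors), while uniform boundedness is invoked to preserve optimal complexity up to a constant. A secondary subtlety is that in direction 2 we do not know in advance how large $n$ must be for the consistency-based arguments to kick in; this can be handled by a standard doubling schedule combined with the amplification lemma (Lemma \ref{lem:prob_amplification}) to union-bound failure probability across phases while inflating the expected sample complexity only by a constant factor.
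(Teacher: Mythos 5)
Your high-level plan — two complexity-preserving reductions between the models, with the reductions failing to be inverses and a $D \leftrightarrow D_n$ shift being the main obstacle — is the same as the paper's. However, there are two concrete gaps in the way you propose to close the argument, and they point to the one piece of technical content the paper actually has to prove (its ``reductions do not change complexity'' lemma), which your proposal hand-waves away or mis-assigns to the wrong hypothesis.

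First, in Direction 1 you claim $OPT_{dist}(P_{dist},\fD(I))\le OPT_{I/O}(P_{I/O},I)$ and justify it by saying that the sequential simulation of an I/O algorithm $A'$ ``costs the same number of samples as $A'$ did reads on $I$.'' This is not true: feeding $A'$ fresh iid draws from $\fD(I)$ means you are running $A'$ on a random input $I_n\sim\fD(I)^{\otimes n}$ (with lazy caching), whose block multiset is \emph{not} $I$, so the expected number of reads is $\E[T(A',I_n)]$, which can differ substantially from $T(A',I)$. Worse, the inequality itself can fail: an I/O algorithm is allowed to read the entire input (at most $\lceil n/B\rceil$ reads) and then answer exactly, so $OPT_{I/O}(I)\le |I|$; if $|I|$ is small and $\eps$ is small, $OPT_{dist}(\fD(I))=\Omega(1/\eps)$ exceeds this. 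The paper avoids asserting this inequality. Instead it argues by contradiction: if $A^{dist}(R_{dist}(\cdot))$ were not order-oblivious instance optimal, a faster $B^{I/O}$ on some $I_C$ would yield, via $B^{I/O}(R_{I/O}(\cdot))$, a sequential algorithm that beats $A^{dist}$ on $\fD(I_C)$ — and the comparison is driven by showing $T(B^{I/O}, R_{I/O}(R_{dist}(I_C))) \approx T(B^{I/O}, I_C)$, which is exactly the round-trip statement your proposal does not prove.

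Second, in Direction 2 you invoke uniform boundedness (Definition~\ref{def:bounded}) to get $OPT_{dist}(P_{dist},D_n)\le c_{\eps,\delta}\cdot OPT_{dist}(P_{dist},D)$. That is a misuse of the hypothesis: uniform boundedness only compares $OPT$ at different precision/confidence parameters \emph{on the same distribution}; it says nothing about changing the distribution from $D$ to its empirical version $D_n$. In the paper, this step is handled by the second half of the round-trip lemma: for $n$ large enough (specifically, so large that the $\le T(A,\cdot)/\delta$ samples taken see no collision among the $n$ iid draws), an algorithm querying $D_n = R_{dist}(R_{I/O}(D))$ can be coupled with one querying $D$ up to a $\delta$-probability event, and an event of probability $\delta$ cannot change the expected complexity asymptotically because (WLOG) the algorithm aborts after $T(A,\cdot)/\delta$ samples. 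Uniform boundedness is used by the paper only to absorb the $(\eps,\delta)\to(4\eps,4\delta)$ degradation coming from the consistency step, not to move between $D$ and $D_n$.

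To fix your write-up you would need to (i) replace the direct $OPT$ inequality in Direction~1 by the contradiction argument, and (ii) replace the appeal to uniform boundedness in Direction~2 by an explicit coupling/collision argument showing $T(A,D_n)\approx T(A,D)$ for every algorithm $A$, and correspondingly $T(A, R_{I/O}(R_{dist}(I)))\approx T(A,I)$ for every I/O algorithm, using the order-oblivious symmetry to reduce to sampling without replacement plus the Markov truncation $T(A,\cdot)/\delta$. Your observation that a doubling schedule plus Lemma~\ref{lem:prob_amplification} handles the unknown $n$ in Direction~2 is a reasonable extra consideration, but it is orthogonal to the two gaps above.
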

We postpone the proof to \Cref{sec:deferred_proofs}.

\section{Mean Estimation and Counting} \label{sec:means}

In this section, we present an instance-optimal algorithm for the problem of estimating mean, i.e., the problem \mean, and the related problems. 
First, we present the algorithm for \mean in the sequential estimation setting in \Cref{sec:mean_upper}. The matching lower bound is proven in \Cref{sec:mean_lower}. \Cref{sec:mean_block} uses our general theorems to extend our algorithms to the block version of the problem and to the I/O estimation model. In \Cref{subsec:histogram}, we discuss the problem of learning a histogram and in \Cref{sec:mean_multiplicative} we show how our guarantee of small additive error can be replaced by the guarantee of a small multiplicative error. 


\subsection{Upper bound}
\label{sec:mean_upper}
In this subsection, we show an algorithm for mean estimation, i.e., the problem \mean, demonstrating an $O\left(\frac{1}{\eps} + \frac{\sigma^2}{\eps^2}\right)$ upper bound. 
The algorithm is described in \Cref{alg:counting_ones}, and works as follows. First, we get $O(1/\eps)$ samples from $D$ to get a preliminary estimate $\tilde{\mu}, \tilde{\sigma}^2$ of the mean and the variance of $D$. Then, using a fresh batch of $O\left(\frac{1}{\eps} + \frac{\tilde{\sigma}^2}{\eps^2}\right)$ samples, we compute a final estimate $\hat{\mu}$ using the sample mean.

\begin{algorithm}
\caption{Estimate $\mu(D)$ by $\hat{\mu}$ such that $\E[(\hat{\mu} - \mu(D))^2] \leq O(\epsilon^2)$} \label{alg:counting_ones}
$T_1 \leftarrow \frac{1}{\eps}$ \\
$X_1, \cdots, X_{T_1} \leftarrow$ get $T_1$ samples from $D$\\
Compute $\tilde{\mu} \leftarrow \frac{1}{T_1} \sum_{i=1}^{T_1} X_i$ \label{line:empirical_variance}\\
$\tilde{\sigma}^2 \leftarrow \frac{1}{T_1-1} \sum_{i=1}^{T_1} (X_i-\tilde{\mu})^2$\\
$T_2 \leftarrow \frac{1}{\eps} + \frac{\tilde{\sigma}^2}{\eps^2}$\\
$X_{T_1+1}, \cdots, X_{T_1+T_2} \leftarrow$ get $T_2$ new samples from $D$\\ 
\Return{$\hat{\mu} = \frac{1}{T_2} \sum_{i=T_1+1}^{T_2} X_i$} \\
\end{algorithm}


\begin{theorem} \label{thm:mean_estimation_ub}
Consider any distribution supported on $[0,1]$ with mean $\mu$ and variance $\sigma^2$. \Cref{alg:counting_ones} returns an unbiased estimate $\hat{\mu}$ of $\mu$ with standard deviation of $\sqrt{\E[(\hat{\mu}-\mu)^2]} \leq \sqrt{6} \cdot \epsilon$. Its expected sample complexity is $\frac{\sigma^2}{\epsilon^2} + \frac{2}{\epsilon}$.
\end{theorem}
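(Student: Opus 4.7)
\emph{Unbiasedness.} Condition on the first batch, which determines $T_2$. Given $T_2$, the estimate $\hat\mu$ is the average of $T_2$ fresh i.i.d.\ samples from $D$ that are independent of the first batch, so $\E[\hat\mu \mid T_2] = \mu$. Taking expectations gives $\E[\hat\mu] = \mu$. This takes one or two lines.

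\emph{Sample complexity.} Total samples are $T_1 + T_2 = \tfrac{2}{\eps} + \tfrac{\tilde\sigma^2}{\eps^2}$. By \Cref{lem:variance_of_variance} the sample variance $\tilde\sigma^2$ is unbiased for $\sigma^2$, so $\E[T_1 + T_2] = \tfrac{2}{\eps} + \tfrac{\sigma^2}{\eps^2}$.

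\emph{Mean squared error (the main obstacle).} Again conditioning on $T_2$,
\begin{align*}
\E[(\hat\mu - \mu)^2] \;=\; \E\bigl[\Var[\hat\mu \mid T_2]\bigr] \;=\; \E\!\left[\tfrac{\sigma^2}{T_2}\right].
\end{align*}
The hard part is bounding $\E[\sigma^2 / T_2]$ by $6\eps^2$, because $T_2$ depends on the random quantity $\tilde\sigma^2$ and we only have $T_2 \geq 1/\eps$ unconditionally. I will split according to whether $\tilde\sigma^2$ is close to $\sigma^2$:
\begin{itemize}
\item On the event $\{\tilde\sigma^2 \geq \sigma^2/2\}$, we have $T_2 \geq \tilde\sigma^2/\eps^2 \geq \sigma^2/(2\eps^2)$, so $\sigma^2/T_2 \leq 2\eps^2$, contributing at most $2\eps^2$.
\item On the complementary event $\{\tilde\sigma^2 < \sigma^2/2\}$, we only use the worst-case bound $T_2 \geq 1/\eps$, giving $\sigma^2/T_2 \leq \sigma^2 \eps$. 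I bound the probability of this event using Chebyshev's inequality (\Cref{lem:chebyshev}) applied to $\tilde\sigma^2$. By \Cref{lem:variance_of_variance}, $\Var[\tilde\sigma^2] \leq \mu_4 / T_1$, and since the support is contained in $[0,1]$ we have $|X-\mu|\leq 1$ and hence $\mu_4 = \E[(X-\mu)^4] \leq \E[(X-\mu)^2] = \sigma^2$. Therefore $\Var[\tilde\sigma^2] \leq \sigma^2 \eps$, and Chebyshev gives $\Pr[\tilde\sigma^2 < \sigma^2/2] \leq 4\Var[\tilde\sigma^2]/\sigma^4 \leq 4\eps/\sigma^2$. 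Multiplying by the conditional bound gives contribution $\leq \sigma^2\eps \cdot 4\eps/\sigma^2 = 4\eps^2$.
\end{itemize}
Adding the two contributions yields $\E[(\hat\mu-\mu)^2] \leq 6\eps^2$, i.e.\ standard deviation at most $\sqrt{6}\,\eps$.

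The one subtlety to handle cleanly is the case $\sigma^2 = 0$ (then $\tilde\sigma^2 = 0$ a.s.\ and the bound is trivial) and the case where the probability bound $4\eps/\sigma^2$ exceeds $1$; in the latter case $\sigma^2 \leq 4\eps$ and a direct estimate $\sigma^2/T_2 \leq \sigma^2\eps \leq 4\eps^2$ already suffices, so the argument goes through. I expect the bookkeeping of these edge cases and the constant $\sqrt{6}$ to be the only slightly delicate step; everything else is routine.
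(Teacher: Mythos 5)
Your proof is correct and follows essentially the same route as the paper: unbiasedness by conditioning on the first batch, sample complexity from unbiasedness of $\tilde\sigma^2$, and the MSE bound by splitting on the event $\{\tilde\sigma^2 < \sigma^2/2\}$ using Chebyshev with $\Var[\tilde\sigma^2] \le \eps\sigma^2$ (which follows from $\mu_4 \le \sigma^2$ on $[0,1]$). The edge cases you flag at the end are not a real issue — both contributions $2\eps^2$ and $\P(\mathcal{E}_1)\cdot\sigma^2\eps \le (4\eps/\sigma^2)\cdot\sigma^2\eps = 4\eps^2$ hold unconditionally since you only ever upper-bound a probability by a possibly-larger quantity — and indeed the paper does not treat them separately.
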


In the following proof, we use $\mu, \sigma^2$ to represent the true mean and variance of the distribution $D$. We also use $\E_1, \E_2$ whenever we want to stress that the randomness is only over the first or the second phase of the algorithm. 

\begin{proof}

We start by bounding the expected query complexity.
The algorithm performs $\tilde\sigma^2/\eps^2 + 2/\eps$ samples. As $\tilde\sigma^2$ is an unbiased estimator of $\sigma^2$, we conclude that the expected sample complexity is $\sigma^2/\eps^2 + 2/\eps$, as desired. 

We next prove correctness. We first argue that the estimate $\hat\mu$ is unbiased. Note whatever values $x_1, \dots, x_{T_1}$ were sampled in the first phase, we have 
\[
\E[X_i | X_1 = x_1, \dots, X_{T_1} = x_{T_1}] = \mu
\]
for each $T_1 + 1 \le i \le T_1 + T_2$. Hence, $\E[\hat{\mu}|X_1 = x_1, \dots, X_{T_1} = x_{T_1}] = \mu$, so by the law of total expectation $\E[\hat\mu] = \mu$. 


We now bound the variance. As $\E[\hat{\mu}]=\mu$, we have
\[\Var[\hat{\mu}] = \E[(\hat{\mu}-\mu)^2] = \E_{1}[\E_{2}[(\hat{\mu}-\mu)^2 | T_2]],\]
where the last line follows from the law of total expectation.
Conditioned on $T_2 = t_2$%
, we have that $\E_2[(\hat{\mu}-\mu)^2|T_2 = t_2] = \frac{\sigma^2}{t_2}$, so we can write
\begin{equation} \label{eq:mean-ub-eq1}
\Var[\hat{\mu}] = \E_1\left[\frac{\sigma^2}{T_2}\right] = \E_1\left[\frac{\sigma^2}{\frac{1}{\eps} + \frac{\tilde{\sigma}^2}{\eps^2}}\right].    
\end{equation}

Next, we need to bound $\tilde{\sigma}^2$. Using the formula from \Cref{lem:variance_of_variance}, we have 
\begin{align*}\Var[\tilde{\sigma}^2] =& \frac{1}{T_1} \E_{X \sim D} [(X-\mu)^4]  - \frac{T_1-3}{T_1(T_1-1)} \cdot \E_{X \sim D} [(X-\mu)^2]^2 \leq \frac{1}{T_1} \E_{X \sim D} [(X-\mu)^4].\end{align*}
Since we always have $|X-\mu| \le 1$, we have $(X-\mu)^4 \le (X-\mu)^2,$ and we can further bound 
\[
\Var[\tilde{\sigma}^2]  \le \frac{1}{T_1} \cdot \E_{X \sim D} [(X-\mu)^2] = \eps \cdot \sigma^2.
\]

Therefore, by Chebyshev's inequality \Cref{lem:chebyshev} and using that $\tilde{\sigma}^2$ is unbiased by \Cref{lem:variance_of_variance}, it holds that 
\[\P\left(\tilde{\sigma}^2 < \frac{\sigma^2}{2}\right) \le \frac{\eps \cdot \sigma^2}{\sigma^4/4} = \frac{4 \eps}{\sigma^2}.\]

Let $\mathcal{E}_1$ be the event that $\tilde{\sigma}^2 < \frac{\sigma^2}{2}$. In this event, we have 
$$\frac{\sigma^2}{(1/\eps)+(\tilde{\sigma}^2/\eps^2)} \le \sigma^2 \cdot \eps.$$ Otherwise, 
$$\frac{\sigma^2}{(1/\eps)+(\tilde{\sigma}^2/\eps^2)} \le \frac{\sigma^2}{(\sigma^2/2)/\eps^2} \le 2 \eps^2. $$
Therefore,
\begin{align} \label{eq:mean-ub-eq2}
    \E\left[\frac{\sigma^2}{\frac{1}{\eps}+\frac{\tilde{\sigma}^2}{\eps^2}}\right] 
   \le& \P(\mathcal{E}_1) \cdot \sigma^2 \eps + (1-\P(\mathcal{E}_1)) \cdot 2 \eps^2 \nonumber \\ \leq& \frac{4 \eps}{\sigma^2} \cdot \sigma^2 \eps + 2 \eps^2 \le 6 \eps^2.
\end{align}

Combining Equations \eqref{eq:mean-ub-eq1} and \eqref{eq:mean-ub-eq2} finishes the proof. 
\end{proof}

By Chebyshev's inequality, we can give a straightforward conversion to a constant success probability algorithm. We can amplify its probability to $1-\delta$ by \Cref{lem:prob_amplification}. Finally, on an input distribution over any set $S$, we can first rescale the set to $S'$ with $\inf(S') = 0, \sup(S') = 1$ and then run our algorithm with $\eps' = \eps / (\sup(S) - \inf(S))$. We thus get the following corollary. 

\begin{corollary} \label{cor:counting_ub}
    There is an $(\eps,\delta)$-estimator for the problem \mean over any set $S$ in the sequential estimation setting such that for any distribution $D$, its sample complexity is
    \[
    O\left(\left(\frac{\sup(S) - \inf(S)}{\eps}+\frac{\sigma(D)^2}{\eps^2}\right) \log \delta^{-1} \right)\,.
    \] 
\end{corollary}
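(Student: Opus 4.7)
The plan is to follow the three steps sketched in the paragraph immediately preceding the corollary: (i) turn the bounded-variance guarantee of \Cref{thm:mean_estimation_ub} into a constant-success-probability $(\eps,\delta)$-estimator using Chebyshev, (ii) reduce a general bounded support $S$ to $[0,1]$ by rescaling, and (iii) apply \Cref{lem:prob_amplification} to amplify the success probability to $1-\delta$.

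First I would reduce to the case $S = [0,1]$. Given a distribution $D$ over a bounded set $S$ with $a = \inf(S)$ and $b = \sup(S)$, consider the pushforward $D'$ of $D$ under the affine map $x \mapsto (x-a)/(b-a)$; this is a distribution on $[0,1]$ with variance $\sigma(D')^2 = \sigma(D)^2/(b-a)^2$. Running an $(\eps',\delta)$-estimator for \mean on $D'$ with $\eps' = \eps/(b-a)$ and applying the inverse affine transformation to its output yields an $(\eps,\delta)$-estimator for \mean on $D$. The sample complexity translates to
\begin{equation*}
\frac{\sigma(D')^2}{(\eps')^2} + \frac{1}{\eps'} \;=\; \frac{\sigma(D)^2}{\eps^2} + \frac{b-a}{\eps},
\end{equation*}
which matches the bound stated in the corollary (up to the $\log\delta^{-1}$ factor). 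So it suffices to prove the corollary for $S = [0,1]$ with the bound $O\left((1/\eps + \sigma^2/\eps^2)\log\delta^{-1}\right)$.

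Next I would obtain a constant-success-probability algorithm on $[0,1]$. Run \Cref{alg:counting_ones} with parameter $\eps_0 := \eps/\sqrt{60}$. By \Cref{thm:mean_estimation_ub}, the output $\hat\mu$ is unbiased with $\E[(\hat\mu-\mu)^2] \le 6\eps_0^2$, so by Chebyshev's inequality (\Cref{lem:chebyshev}),
\begin{equation*}
\P\bigl(|\hat\mu - \mu| > \eps\bigr) \;\le\; \frac{6\eps_0^2}{\eps^2} \;=\; \frac{1}{10}.
\end{equation*}
The expected sample complexity is $O(\sigma^2/\eps_0^2 + 1/\eps_0) = O(\sigma^2/\eps^2 + 1/\eps)$, since the $\sqrt{60}$ factor is absorbed into the $O(\cdot)$. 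This produces an $(\eps, 1/10)$-correct algorithm for \mean on $[0,1]$ with the desired expected sample complexity.

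Finally, I would apply \Cref{lem:prob_amplification} to boost the success probability to $1-\delta$. Since \mean is a monotonic sequential estimation problem in the sense required by the lemma (the distance function $d_D(\hat\mu) = |\hat\mu - \mu(D)|$ is nondecreasing in $|\hat\mu - \mu(D)|$, and the parameter image is $\R$), the lemma yields an $(\eps,\delta)$-correct algorithm whose expected sample complexity is $O(\log\delta^{-1})$ times that of the base algorithm, giving the claimed bound. I do not expect a serious obstacle here: every ingredient is already in place, and the only point requiring a moment of care is verifying that the affine rescaling preserves both the problem (an estimator for the rescaled mean transports back to an estimator for the original mean with the same error bound) and the bound on variance, which is immediate.
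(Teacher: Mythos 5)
Your proof follows exactly the three-step outline the paper sketches in the paragraph preceding the corollary — affine rescaling to $[0,1]$, Chebyshev (with $\eps_0 = \eps/\sqrt{60}$) to get an $(\eps,1/10)$-estimator from \Cref{thm:mean_estimation_ub}, and then \Cref{lem:prob_amplification} to boost to $1-\delta$ at a $\log\delta^{-1}$ cost — and all the details check out. This is the paper's intended argument, just written out more explicitly.
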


\subsection{Lower bound}
\label{sec:mean_lower}

Next, we prove a matching lower bound to \Cref{cor:counting_ub}. A version of the following result for discrete distributions was proven in \cite{dagum2000optimal} and this was later extended to general distributions in \cite{dang2023optimalitymeanestimationworstcase}. We include a proof for the sake of completeness.

\begin{proposition}
\label{thm:counting_lb}
For every $(\eps, \delta)$-estimator $A$ for the problem \mean over a set $S$ and any distribution $D$, we have
\begin{align*}
T(&A_{\eps, \delta}, D) = \Omega\left( \left( \frac{\sup(S) - \inf(S)}{\eps} + \frac{\sigma(D)^2}{\eps^2}\right) \cdot \log 1/\delta \right)    . 
\end{align*}
\end{proposition}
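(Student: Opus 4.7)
I plan to prove the two claimed lower bounds separately and take their maximum:
\begin{equation*}
T(A_{\eps,\delta}, D) = \Omega\Bigl(\tfrac{\sup(S)-\inf(S)}{\eps}\log\tfrac{1}{\delta}\Bigr) \quad\text{and}\quad T(A_{\eps,\delta}, D) = \Omega\Bigl(\tfrac{\sigma(D)^2}{\eps^2}\log\tfrac{1}{\delta}\Bigr).
\end{equation*}
After rescaling we may assume $S\subseteq[0,1]$ with $\sup(S)-\inf(S)=1$. Moreover the second bound is dominated by the first whenever $\sigma^2=O(\eps)$, so when proving it we may freely assume $\sigma^2\ge C\eps$ for a sufficiently large absolute constant $C$. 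Both bounds follow the same \emph{twin distribution} template: given $D$, I construct a distribution $D'$ on $S$ with $|\mu(D')-\mu(D)|>2\eps$, which makes the $\eps$-correct output regions for $D$ and $D'$ disjoint, so that any $(\eps,\delta)$-correct $A$ acts as a $1-\delta$ distinguisher between $D$ and $D'$ (cf.\ \Cref{fact:l1}). The sample complexity of $A$ on $D$ is then lower bounded using the small divergence between $D$ and $D'$.

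For the first bound, I use the ``mass-shift'' twin $D'=(1-c\eps)\,D+c\eps\cdot\delta_y$, where $y\in\{0,1\}$ and $c$ is an absolute constant chosen so that $|\mu(D')-\mu(D)|\geq 4\eps$. I couple the $D$- and $D'$-runs of $A$ by using the same $D$-samples in both and introducing i.i.d.\ $\mathrm{Bernoulli}(c\eps)$ coins $C_i$ for the $D'$-run that replace the $i$-th sample by $y$ whenever $C_i=1$. On the event $E=\{C_1=\cdots=C_{T_D}=0\}$, with $T_D$ the stopping time under $D$, the two runs are pathwise identical, and the disjointness of the correct regions forces $P(E)\leq 2\delta$ (else $A$ would be wrong on one of $D,D'$ with probability $>\delta$). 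Since $P(E)=\E_D[(1-c\eps)^{T_D}]$ and $t\mapsto(1-c\eps)^t$ is convex, Jensen's inequality gives $(1-c\eps)^{\E_D[T_D]}\leq 2\delta$, i.e., $\E_D[T_D]=\Omega(\log(1/\delta)/\eps)$.

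For the second bound, I use the reweighting twin $P_{D'}(x)=P_D(x)\cdot(1+\eta(x))$ with $\eta(x)=(c\eps/\sigma^2)(x-\mu)$ for a small constant $c$. Under $\sigma^2\ge C\eps$ and $|x-\mu|\leq 1$ we have $|\eta|\leq c/C\leq 1/2$, so $P_{D'}$ is nonnegative; the identities $\E_D[X-\mu]=0$ and $\E_D[X(X-\mu)]=\sigma^2$ yield $\sum_x P_{D'}(x)=1$ and $\mu(D')=\mu(D)+c\eps$, and choosing $c$ large makes the correct regions disjoint. A Taylor expansion through \Cref{fact:sqrt} then yields $H^2(D,D')=\tfrac{1}{8}\E_D[\eta^2]+O(\E_D[\eta^4])=\Theta(\eps^2/\sigma^2)$, and an analogous expansion of $\log(1+\eta)$ gives $\mathrm{KL}(D\|D')=\Theta(\eps^2/\sigma^2)$. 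For the sequential step, I apply Wald's identity to the log-likelihood ratio: the transcripts $\tau_D,\tau_{D'}$ of $A$ satisfy
\begin{equation*}
\mathrm{KL}(\tau_D\|\tau_{D'})=\E_D\!\Bigl[\sum_{i=1}^{T}\log\tfrac{P_D(X_i)}{P_{D'}(X_i)}\Bigr]=\E_D[T]\cdot \mathrm{KL}(D\|D'),
\end{equation*}
since the summands are i.i.d.\ with mean $\mathrm{KL}(D\|D')$ and $T$ is a stopping time. The data processing inequality combined with Bretagnolle--Huber applied to $A$'s (binary-distinguishing) output gives $\mathrm{KL}(\tau_D\|\tau_{D'})\geq\log(1/(4\delta))$, and therefore $\E_D[T]\geq\Omega(\log(1/\delta)\cdot\sigma^2/\eps^2)$.

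The main obstacle is honoring the fact that $T$ is a \emph{random} stopping time rather than a fixed sample budget, since the Hellinger/TV tools collected in \Cref{fact:hellinger_product,fact:hellinger_implies_l1,fact:l1} are stated for product distributions of fixed length. The Jensen-via-coupling argument resolves this for the first bound and Wald's identity for KL resolves it for the second. An alternative proof staying purely within the Hellinger framework would use $\E_D[(1-H^2(D,D'))^T]\ge(1-H^2(D,D'))^{\E_D[T]}$ (Jensen on the Bhattacharyya coefficient of the transcript distributions) together with \Cref{fact:hellinger_implies_l1,fact:l1} to convert to TV and rule out $1-\delta$ distinguishers; either route delivers the desired $\log(1/\delta)$ factor with matching constants.
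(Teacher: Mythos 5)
Your proof is correct and uses the same two ``twin distribution'' constructions as the paper (a small-probability mass shift to an endpoint for the $1/\eps$ term, and the additive-tilt reweighting $P_{D'}(x)=P_D(x)(1+\eta(x))$ for the $\sigma^2/\eps^2$ term), but it handles the indistinguishability step by a genuinely different and arguably cleaner route. The paper bounds $H^2(D,D')$, tensorizes via \Cref{fact:hellinger_product} to $H^2(D^{\otimes T},D'^{\otimes T})$ for a \emph{fixed} budget $T$, and then passes to total variation via \Cref{fact:hellinger_implies_l1,fact:l1}; this implicitly treats $T$ as a fixed sample size and needs a rounding step (the paper rounds to a grid of spacing $\eps/5$) because its Hellinger machinery is stated for discrete distributions. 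You instead attack the expected stopping time directly: for the first term, the Bernoulli-coin coupling with $P(E)=\E_D[(1-c\eps)^{T}]$ plus Jensen's inequality on the convex map $t\mapsto(1-c\eps)^t$ lower-bounds $\E_D[T]$ with no budget truncation; for the second term, Wald's identity $\mathrm{KL}(\tau_D\|\tau_{D'})=\E_D[T]\cdot\mathrm{KL}(D\|D')$ together with data processing and Bretagnolle--Huber on $A$'s binary output does the same. This buys you two things: (i) the random stopping time $T$ is handled without the tacit ``fixed-budget then Markov-truncate'' step the paper glosses over, and (ii) the KL/Wald route applies verbatim to continuous $D$, making the paper's final rounding reduction unnecessary (your $\E_D[\eta^2]=c^2\eps^2/\sigma^2$ and $\E_D[-\log(1+\eta)]$ are integrals and converge for any $D$ supported on $[0,1]$). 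The alternative you sketch at the end---Jensen on the Bhattacharyya coefficient $\E_D[(1-H^2)^T]\ge(1-H^2)^{\E_D[T]}$---is essentially what one would insert to repair the paper's Hellinger argument for random $T$, so you have both the paper's approach (made precise) and a self-contained KL version. Two small book-keeping points to tidy up when writing this out: make the constants $c$ and $C$ explicit so that both $|\eta|\le 1/2$ and the $4\eps$ mean separation hold simultaneously (e.g.\ $c\ge 4$, $C\ge 8c$), and state Wald's identity for the log-likelihood ratio in the form that covers almost-surely finite stopping times under $\E_D[T]<\infty$ (which you may assume, else the bound is vacuous).
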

Specifically, we prove the following lemma which implies this proposition. We state this lemma separately as we will need this in \Cref{sec:quantiles}.
\begin{lemma} \label{lem:counting_lb_2}
For every distribution $D$ with support $S$, there exists a distribution $D'$ such that $(1)$ it holds $|\mu(D) - \mu(D')| \geq \eps$ and $(2)$ any algorithm that can distinguish the two distributions with probability $1-\delta$ has to use 
\[\Omega\left( \left( \frac{\sup(S) - \inf(S)}{\eps} + \frac{\sigma(D)^2}{\eps^2}\right) \cdot \log 1/\delta \right)\]
samples.
\end{lemma}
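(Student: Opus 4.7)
The plan is to prove the two terms in the lower bound separately, obtaining one distribution $D'_1$ that establishes the $\Omega\bigl((\sup(S)-\inf(S))/\epsilon\bigr)$ bound and a distinct distribution $D'_2$ that establishes the $\Omega(\sigma(D)^2/\epsilon^2)$ bound. We then pick whichever of $D'_1,D'_2$ yields the larger bound as our $D'$. In both cases the $\log(1/\delta)$ factor will come from the standard tensorization of Hellinger distance: if $H^2(D,D')\le\alpha$, then by \Cref{fact:hellinger_product} we have $H^2(D^{\otimes m},D'^{\otimes m})=1-(1-\alpha)^m$, so making this less than $1-4\delta^2$ (which is needed by \Cref{fact:hellinger_implies_l1,fact:l1} to rule out a $(1-\delta)$-distinguisher) forces $m=\Omega(\log(\delta^{-1})/\alpha)$. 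WLOG we rescale so that $\inf(S)=0,\sup(S)=1$, so $\sigma(D)\le 1/2$.

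For the first term, I would follow the sketch in the introduction: pick a point $v\in\{0,1\}$ on the opposite side of $\mu(D)$ from $\mu(D)$ itself (e.g.\ $v=1$ if $\mu(D)<1/2$, else $v=0$), and define $D'_1=(1-2\epsilon)D+2\epsilon\cdot\delta_v$, where $\delta_v$ is the point mass at $v$. Then $|\mu(D)-\mu(D'_1)|\ge\epsilon$ by choice of $v$. The Hellinger distance is easy to bound: $D$ and $D'_1$ agree up to a $(1-2\epsilon)$ rescaling on $S\setminus\{v\}$, and a direct computation using the definition of $H^2$ from \eqref{def:hellinger} gives $H^2(D,D'_1)=O(\epsilon)$. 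Combined with the tensorization argument, this yields the $\Omega(\epsilon^{-1}\log\delta^{-1})$ lower bound.

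For the second and more delicate term, I would use the exponential-tilt-style construction sketched in the introduction. Define $D'_2$ by the density ratio
\begin{equation}
\frac{dD'_2}{dD}(x) \;=\; 1+\frac{\epsilon}{\sigma^2}\bigl(x-\mu(D)\bigr).
\end{equation}
We may assume $\epsilon\le\sigma(D)^2$, since otherwise the first lower bound already dominates the second; under this assumption, for $x\in[0,1]$ we have $|x-\mu(D)|\le 1$ and $\epsilon/\sigma^2\le 1/|x-\mu(D)|$ loosely—one checks more carefully that the density ratio lies in $[0,2]$, so $D'_2$ is a legitimate probability distribution. A direct integration shows $\int dD'_2=1$ and $\mu(D'_2)=\mu(D)+\epsilon$. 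The heart of the proof is bounding $H^2(D,D'_2)$. Using $1-\sqrt{1+t}\le -t/2+t^2/2$ from \Cref{fact:sqrt} with $t=(\epsilon/\sigma^2)(x-\mu(D))$, one gets
\begin{equation}
H^2(D,D'_2) \;=\; \int\Bigl(1-\sqrt{1+t(x)}\Bigr)\,dD(x) \;\le\; \frac{1}{2}\,\frac{\epsilon^2}{\sigma^4}\int(x-\mu(D))^2\,dD(x) \;=\; \frac{\epsilon^2}{2\sigma^2},
\end{equation}
since the linear-in-$t$ term vanishes by definition of $\mu(D)$. Plugging $\alpha=\epsilon^2/(2\sigma^2)$ into the tensorization argument yields $m=\Omega(\sigma^2/\epsilon^2\cdot\log\delta^{-1})$.

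The main obstacle I anticipate is the verification that the tilted distribution $D'_2$ is well-defined (nonnegativity of the density and normalization) on a \emph{general}, possibly continuous, distribution—the introduction sketch is phrased for discrete $D$, but the same computation goes through for arbitrary $D$ on $[0,1]$ by interpreting $dD'_2/dD$ as a Radon--Nikodym derivative. A secondary subtlety is the case $\sigma^2<\epsilon$: here one cannot use the density ratio in \eqref{def:Dp} directly (it may become negative), but in this regime the first lower bound $\Omega(\epsilon^{-1}\log\delta^{-1})$ already dominates $\sigma^2\epsilon^{-2}\log\delta^{-1}$, so no new construction is needed. Finally, to deduce \Cref{thm:counting_lb} from \Cref{lem:counting_lb_2}, observe that any $(\epsilon,\delta)$-estimator for \mean gives (by thresholding the estimator around the midpoint of $\mu(D)$ and $\mu(D')$) a $(1-\delta)$-distinguisher between $D$ and $D'$, so \Cref{fact:l1} together with the Hellinger bounds yields the claimed sample-complexity lower bound.
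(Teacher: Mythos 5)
Your proposal is correct and follows the paper's overall strategy: split into the $\Omega\bigl((\sup S-\inf S)/\eps\bigr)$ and $\Omega(\sigma^2/\eps^2)$ terms, construct the tilted distribution $p^{D'}_x = p^D_x\bigl(1+\tfrac{\eps}{\sigma^2}(x-\mu)\bigr)$ for the second term, bound $H^2(D,D')\le O(\eps^2/\sigma^2)$, and convert this into a sample lower bound via Hellinger tensorization (\Cref{fact:hellinger_product}) and the TV/distinguisher bound (\Cref{fact:hellinger_implies_l1,fact:l1}). Two details differ from the paper's route, both minor. First, for the $\Omega(1/\eps\cdot\log\delta^{-1})$ term you bound $H^2(D,D'_1)=O(\eps)$ and tensorize, while the paper uses a more elementary coupling: with $D'=(1-\eps)D+\eps\delta_v$, any $T=\Theta(\eps^{-1}\log\delta^{-1})$ samples from $D'$ all fall in the ``$D$-part'' with probability $(1-\eps)^T\ge\delta$, so no distinguisher can succeed. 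Both give the same bound; your version is more uniform (one tool throughout), the paper's is more elementary. Second, and more substantively, the paper handles a general (possibly continuous) $D$ by first proving the lemma for finite-support distributions and then rounding $D$ to a finite grid of spacing $\eps/5$, transferring both the mean-gap and the indistinguishability across the rounding; you instead interpret the tilt as a Radon–Nikodym derivative and carry the Hellinger computation out directly in the general case. Your measure-theoretic route is clean but implicitly invokes versions of \Cref{fact:hellinger_product,fact:hellinger_implies_l1} for non-discrete distributions, which the paper states only for discrete ones; the paper's rounding reduction avoids this by staying within its stated toolkit. Finally, to apply \Cref{fact:sqrt} as stated (valid for $|t|\le 1/2$) you should assume $\sigma^2\ge 2\eps$ rather than $\sigma^2\ge\eps$, which costs nothing since the first term dominates anyway in the intermediate regime — the paper uses precisely this threshold.
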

\begin{proof}
We first focus on distributions with finite support. We then show below that this can be used to prove the claim also for distributions with infinite support.
We use $\mu, \sigma$ for $\mu(D), \sigma(D)$. 
We assume without loss of generality that $\inf(S) = 0, \sup(S) = 1$; we can then get the general result by scaling. 
For any input distribution $D$ over $S$, we construct a new distribution $D'$ over $S$ satisfying
\begin{align*}
|\mu(D) - \mu(D')| = \Omega(\eps).  
\end{align*}
Thus, any algorithm that correctly solves the problem on $D$ for some $\eps_0 = \Omega(\eps)$ can be turned into a distinguisher (see \Cref{sec:lb_definitions} for definition) between $D$ and $D'$. We then show that these two distributions are hard to distinguish, thus proving the lower bound.


    \textbf{The first lower bound $\Omega(1/\eps \cdot \log 1/\delta)$: }
    The distribution $D'$ is constructed as follows: with probability $1 - \eps$, sample from $D$. In the remaining $\eps$ probability event, if $\mu(D) < \frac{1}{2}$ we return $1$, and if $\mu(D) \ge \frac{1}{2}$ we return $0$.
    On one hand, we have $\mu(D') < \mu(D) - \Omega(\eps)$, or $\mu(D') > \mu(D) + \Omega(\eps)$ in the alternative case. 
    On the other hand, any algorithm that samples at most $\frac{ \log_2 1/\delta}{10\eps}$ samples from $D'$ returns only samples from $D$ with probability
    \begin{align*}
        \left(1 - \eps \right)^{\frac{ \log_2 1/\delta}{10\eps}} \ge \delta. 
    \end{align*}
    Thus, it cannot distinguish between $D$ and $D'$ with $1-\delta$ success probability.  

    \textbf{The second lower bound $\Omega(\sigma^2/\eps^2 \cdot \log 1/\delta)$: }
    We assume $\sigma^2 > 2\eps$, otherwise the first lower bound dominates. 
    For any value $x \in S$, consider the deviation from the mean
    \begin{align*}
     d_x = x - \mu.   
    \end{align*}
    In the distribution $D'$ over $S$, we 
    replace the probability $p_x$ of sampling a value $x \in S$ as follows.
    \begin{align*}
        p_x^{D'} := p_x^D \cdot \left(1 + \frac{\eps}{\sigma^2}d_x \right).
    \end{align*}
    
    \textbf{$D'$ is well-defined:} To see that the construction of $D'$ is well defined, recall our assumption that $\sigma^2 > 2\eps$, so $\frac{\eps}{\sigma^2} < \frac{1}{2}$. 
    Moreover, $|d_x| = |x-\mu| < 1$, and we thus have $p_x^{D'} \ge p_x^D \cdot \left( 1 - \frac{1}{2} \cdot 1\right) \ge 0$. 
    Moreover, 
    using that 
    \begin{align}
        \label{eq:handy}
    \sum_{x \in S} p_x^D \cdot (x-\mu) = \sum_{x \in S} p_x^D x - \sum_{x \in S} p_x^D \mu = \mu - \mu = 0, 
    \end{align}
    we conclude that
\[\sum_{x \in S} p_x^{D'} = \sum_{x \in S} p_x^D + \frac{\eps}{\sigma^2} \cdot \sum_{x \in S} p_x^D \cdot (x-\mu) = \sum_{x \in S} p_x^D = 1,\]
i.e., the probabilities of $D'$ are indeed normalized to sum up to $1$. 

    \textbf{$\mu(D')$ is far from $\mu(D)$:} Next, we need to check that $\mu(D') > \mu(D) + \Omega(\eps)$. By construction of $D'$ we have:
    \begin{align*}
    \mu(D') - \mu(D) &= \sum_{x \in S} (p_x^{D'}-p_x^D) x 
     = \frac{\eps}{\sigma^2} \cdot \sum_{x \in S} p_x^D \cdot (x - \mu) \cdot x 
    \end{align*}
    Using \Cref{eq:handy} once more, we can rewrite this as 
    \begin{align*}
        \mu(D') - \mu(D) &=
         \frac{\eps}{\sigma^2} \cdot \sum_{x \in S} p_x^D \cdot (x - \mu) \cdot (x - \mu) 
         = \frac{\eps}{\sigma^2} \cdot \sigma^2 = \eps.
    \end{align*}

    \textbf{Hardness of distinguishing: }
    Finally, we argue that it is impossible to distinguish $D'$ from $D$ with $o(\sigma^2/\eps^2 \cdot \log 1/\delta)$ samples. To do so, we need to recall the Hellinger distance approach introduced in \Cref{sec:lb_definitions}.
    We can compute
\begin{align}
\label{eq:hellinger-bash-1}
    H^2(D, D') &= \frac{1}{2} \sum_{x \in S} \left(\sqrt{p_x^D}-\sqrt{p_x^{D'}}\right)^2 = \sum_{x \in S} p_x^D \cdot \left(\sqrt{1+\frac{\eps}{\sigma^2} d_x} - 1\right)^2.
\end{align}
    Because $\frac{\eps}{\sigma^2} < \frac{1}{2}$ and $|d_x| \le 1$, we have for $y := \frac{\eps}{\sigma^2} d_x$ that $|y| \le 1/2$. So, by \Cref{fact:sqrt}, $|\sqrt{1+y}-1| \le \frac{y}{2}+\frac{y^2}{2} \le y,$ which means $\left(\sqrt{1+\frac{\eps}{\sigma^2} d_x} - 1\right)^2 \le (\frac{\eps}{\sigma^2} d_x)^2 = \frac{\eps^2}{\sigma^4} \cdot d_x^2$.
    Therefore, we can bound the right-hand side of \eqref{eq:hellinger-bash-1} as at most
\[\sum_{x \in S} p_x^D \cdot \frac{\eps^2}{\sigma^4}  d_x^2 = \frac{\eps^2}{\sigma^4} \cdot \sigma^2 = \frac{\eps^2}{\sigma^2}.\]
    
    By tensoring properties of the dot product (\Cref{fact:hellinger_product}), we conclude that if we consider an independent sample of $T = \sigma^2/(10\eps^2)\cdot \log 1/\delta$ elements from $D$ or $D'$, the dot product between the corresponding product distributions satisfies 
    \begin{align*}
        H^2(D^{\otimes T}, {D'}^{\otimes T}) \le 1-\left(1-\frac{\eps^2}{\sigma^2}\right)^{T} \le 1-2 \sqrt{\delta}.
    \end{align*}
    The above inequality implies via \Cref{fact:hellinger_implies_l1} that the total variation distance between $D^{\otimes T}$ and ${D'}^{\otimes T}$ is at most $1-\delta$, which rules out a $(1-\delta)$-distinguisher algorithm via  \Cref{fact:l1}.

\paragraph{Distributions with infinite support}
Finally, we discuss how to reduce the case of general distributions to the case of distributions with finite support. We define a distribution $D_r$ by rounding $D$ to the nearest multiple of $\eps/5$ and we denote by $r$ the function that performs this rounding.  By the above proof, there exists a distribution $D_r'$ such that $(1)$ we cannot distinguish $D_r$ and $D_r'$ with probability $1-\delta$ using a smaller number of samples than the claimed lower bound, which we will denote by $LB$, and $(2)$ $|E[D_r] - E[D_r']| \geq 2\eps$.

We now define a distribution $D'$ by first sampling $y = D_r'$ and then sampling and returning a value $X$ from the conditional distribution $X \sim D | r(X) = y$. We now claim that $(a)$ $|E[D] - E[D']| \geq \eps$ and $(b)$ we cannot distinguish $D,D'$ using $o(LB)$ samples. This will imply the theorem.

The rounding changes each value by at most $\eps/10$ and it thus also cannot change the expectation by more than $\eps/10$. Since $r(D) = D_r$, $r(D') = D_r'$ and $|E[D_r] - E[D_r']| \geq 2\eps$, the claim $(a)$ follows.

Suppose that we can distinguish $D,D'$ using $k = o(LB)$ samples. We use this distinguisher to also distinguish $D_r,D_r'$ using $k$ samples. Consider the $k$ samples from either $D_r$ or $D_r'$. For each of the samples $Y$, define a sample $X \sim D | r(X) = Y$. If $Y \sim D_r$ then $X \sim D$, and if $Y \sim D_r'$ then $X \sim D'$. This allows us to use the distinguisher for $D,D'$ to also distinguish with probability $1-\delta$ the distributions $D_r,D_r'$ using $k$ samples. Since we know this is not possible, it must also not be possible to distinguish $D,D'$, proving claim $(b)$.
\end{proof}



\subsection{Instance-optimal I/O-efficient mean estimation}
\label{sec:mean_block}

We now discuss how our upper and lower bounds \Cref{cor:counting_ub,thm:counting_lb} yield (order-oblivious) instance optimal algorithms. 
First, we note that our results yield instance optimal algorithms in the sequential estimation setting. 

\begin{theorem}
    \label{thm:mean_dist}
    There are instance optimal algorithms for problems estimating means and frequencies, as well as their \textsc{Step-Mixture} variants (formally, problems \mean, \stepmean, \freq, \stepfreq) in the sequential estimation setting. 
\end{theorem}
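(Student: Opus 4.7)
The cases \mean and \freq follow immediately by combining results already in hand. The algorithm of \Cref{cor:counting_ub} is an $(\eps,\delta)$-estimator with expected sample complexity $O\bigl((\tfrac{\sup(S)-\inf(S)}{\eps} + \tfrac{\sigma(D)^2}{\eps^2})\log \delta^{-1}\bigr)$ on every input $D$, and \Cref{thm:counting_lb} gives a matching $\Omega$-bound on every $D$. The ratio is a universal constant, so per \Cref{def:instance_optimality} the algorithm is instance optimal; since \freq is just \mean restricted to $S=\{0,1\}$, the same reasoning disposes of it.

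For \stepmean and \stepfreq, the plan is to reduce to \mean on the induced ``distribution of block means''. A single sample from $D \in Dist(Dist_B(S))$ is itself a distribution $Y \in Dist_B(S)$, so $\mu(Y)$ is computable from that sample at no further cost. The mapped values $\mu(Y_1), \mu(Y_2), \dots$ are then i.i.d.\ draws from the distribution $D' := \mu \circ D$, supported in $[\inf(S), \sup(S)]$ and satisfying $\mu(D') = \mu(M(D))$ (immediate from \Cref{def:mixture}) and $\sigma(D')^2 = \Var_{Y \sim D}[\mu(Y)]$. Feeding these to the algorithm of \Cref{cor:counting_ub} yields an $(\eps,\delta)$-estimator for \stepmean with expected sample complexity $O\bigl((\tfrac{\sup(S)-\inf(S)}{\eps} + \tfrac{\sigma(D')^2}{\eps^2})\log \delta^{-1}\bigr)$.

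For the matching lower bound I would lift the Hellinger-based argument of \Cref{lem:counting_lb_2} directly to the mixture level. Given $D$, construct $D''$ over the same atoms by reweighting
\[
p_Y^{D''} = p_Y^D \cdot \Bigl(1 + \tfrac{\eps}{\sigma(D')^2}\bigl(\mu(Y) - \mu(D')\bigr)\Bigr),
\]
under the assumption $\sigma(D')^2 \ge 2\eps(\sup(S)-\inf(S))$ (else the linear $\Omega(1/\eps)$ term already dominates and is supplied by the same ``add an $\eps$-mass Dirac at an extreme point of $S$'' construction as in \Cref{lem:counting_lb_2}). Since $|\mu(Y) - \mu(D')| \le \sup(S) - \inf(S)$, the computations in \Cref{lem:counting_lb_2} carry over verbatim with ``$x$'' replaced by ``$\mu(Y)$'' and give $\mu(M(D'')) - \mu(M(D)) = \eps$ together with $H^2(D, D'') \le \eps^2/\sigma(D')^2$; \Cref{fact:hellinger_product,fact:hellinger_implies_l1,fact:l1} then forbid any $(1-\delta)$-distinguisher from using fewer than $\Omega(\sigma(D')^2/\eps^2 \cdot \log \delta^{-1})$ samples from $D$.

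The main point to be careful about is that this lower bound really operates at the level of distributions-over-distributions: an algorithm for \stepmean sees the entire sampled atom $Y$, not just $\mu(Y)$, so a priori it might exploit more structure than a mean estimator for $D'$. But since $D$ and $D''$ share the same support and differ only in their weights on that support, distinguishing them from block-samples reduces to distinguishing two probability measures on the common ground set $Dist_B(S)$, and the Hellinger-to-TV bounds apply verbatim irrespective of what per-sample information the algorithm extracts. This matches the upper bound and yields instance optimality for \stepmean and (by specialization to $S=\{0,1\}$) \stepfreq.
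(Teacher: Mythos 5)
Your handling of \mean and \freq is exactly the paper's. For \stepmean and \stepfreq, your upper bound (map each sampled $Y$ to $\mu(Y)$ and feed the resulting i.i.d.\ sequence to the \mean estimator) is also the paper's. Where you diverge is the lower bound. The paper argues by contradiction and \emph{reduction}: it supposes some $(\eps,\delta)$-correct \stepmean algorithm $A$ beats the reduction-based estimator on a distribution $D$, and manufactures from $A$ an $(\eps,\delta)$-correct \mean algorithm by \emph{lifting} each observed mean $d'\sim D'$ back to a sample $d$ drawn from $D$ conditioned on $\mu(d)=d'$ (and lifting values outside the support of $D'$ to the corresponding point mass). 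Since marginally $d\sim D$, that lifted algorithm has $A$'s sample complexity on $D$ while being correct for \mean, contradicting \Cref{cor:counting_ub,thm:counting_lb}. This is a black-box use of the \mean lower bound, so it inherits for free the infinite-support treatment already done in \Cref{lem:counting_lb_2}. You instead re-run the Hellinger perturbation directly at the level of measures on $Dist_B(S)$: the reweighting $p_Y^{D''}=p_Y^D(1+\tfrac{\eps}{\sigma(D')^2}(\mu(Y)-\mu(D')))$ is the right analogue, your mean-shift and $H^2\le \eps^2/\sigma(D')^2$ computations go through under $\sigma(D')^2\ge 2\eps(\sup(S)-\inf(S))$, and your remark that the TV/Hellinger machinery rules out \emph{any} distinguisher---regardless of what per-sample information it extracts from the atoms $Y$---is precisely the point that makes the direct argument sound. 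Both routes are valid; yours is more concrete, the paper's is more modular. The one place yours is not yet fully general is that it presupposes $D$ is discrete on $Dist_B(S)$: for general $D$ you would need to port the rounding device from the end of \Cref{lem:counting_lb_2} (round $\mu(Y)$ to a grid, perturb the rounded law, then lift), whereas the paper's reduction sidesteps this entirely by invoking the already-proved \mean lower bound.
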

\begin{proof}
    For the \mean problem, this follows from \Cref{cor:counting_ub,thm:counting_lb}. The \freq problem is a special case of \mean where the support of the input distribution is $\{0,1\}$. 

    We now argue that there also exist instance-optimal algorithms for the \textsc{Step-Mixture} versions of the two problems. Namely, we claim that we may simply use the algorithm for \mean/\freq to solve these problems by first computing the mean of each sampled distribution and feeding these means into the respective algorithm. Suppose this algorithm is not instance-optimal -- that is, we may for any $C> 0$ find an algorithm $A$ and a distribution $D$ on which $A$ is $C$ times more efficient than the algorithm we described. We now use this to construct an algorithm for \mean/\freq that is $C$ times faster than our instance-optimal algorithm on some distribution, which is a contradiction.
    
    Let $D'$ denote the distribution of the mean of a distribution sampled from $D$. For $d' \sim D'$, we may sample $d$ from the conditional distribution $d \sim D | d = d'$ and feed this into $A$. It holds that the distribution of $d$ is $D$, meaning that this algorithm has the performance of $A$ on $D$. At the same time, the algorithm is correct by the law of total expectation, which implies that $E[D'] = E[M(D)]$. We thus have a correct algorithm that is $C$ times more efficient on some distribution than our instance-optimal algorithm, as we wanted to.
    %
\end{proof}

Second, we note that these results can be extended to the I/O estimation model. 

\begin{theorem}
\label{thm:mean_io}
    There are order-oblivious instance optimal algorithms for estimating means and frequencies (formally, the \bmean and the \bfreq problem)
    in the I/O estimation model. 
\end{theorem}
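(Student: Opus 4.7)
The plan is to derive this theorem from the sequential-estimation result (\Cref{thm:mean_dist}) via the transfer theorem (\Cref{thm:io_transfer}). Specifically, I would identify the I/O problem \bmean with the sequential problem \stepmean: a uniformly random block in $\fI(D_n)$ (viewed as an element of $Dist_B(S)$) is exactly one sample from the mixture distribution, and the target parameter in both settings is the overall mean $\mu(M(D))$. The same correspondence identifies \bfreq with \stepfreq. Since \Cref{thm:mean_dist} already gives instance-optimal $(\eps,\delta)$-estimators for \stepmean and \stepfreq in the sequential setting, it only remains to verify that these two problems satisfy the hypotheses of \Cref{thm:io_transfer}, namely consistency (\Cref{def:consistent}) and uniform boundedness of the sample complexity (\Cref{def:bounded}).

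For consistency, I would fix any $\hat\theta \in \R$ and any mixture distribution $D$ with support in $[0,1]$. Drawing $n$ independent samples $X_1,\ldots,X_n$ from $D$ and forming the empirical mixture $D_n$, we have $\mu(M(D_n)) = \tfrac{1}{n}\sum_i \mu(X_i)$, and $\mu(X_1),\ldots,\mu(X_n)$ are i.i.d.\ bounded random variables with mean $\mu(M(D))$. Hoeffding's inequality therefore gives $|\mu(M(D_n))-\mu(M(D))|\to 0$ at a rate that is uniform in $D$, so $d_{D_n}(\hat\theta)=|\hat\theta-\mu(M(D_n))|$ converges to $d_D(\hat\theta)=|\hat\theta-\mu(M(D))|$ uniformly in probability, which is exactly the condition of \Cref{def:consistent}. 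For uniform boundedness, \Cref{cor:counting_ub} and \Cref{thm:counting_lb} give matching upper and lower bounds of $\Theta\bigl((1/\eps + \sigma^2/\eps^2)\log(1/\delta)\bigr)$ on $OPT_{\eps,\delta}$, so the ratio $OPT_{\eps,\delta}/OPT_{2\eps,2\delta}$ is bounded by a constant depending only on $\eps,\delta$ (in fact an absolute constant up to the $\log$-factor ratio), verifying \Cref{def:bounded}.

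Having verified both hypotheses, I would apply part~1 of \Cref{thm:io_transfer} to convert the instance-optimal sequential estimator for \stepmean (resp.~\stepfreq) into an order-oblivious instance-optimal $(2\eps,2\delta)$-estimator for \bmean (resp.~\bfreq) in the I/O model; by a trivial reparametrization in $(\eps,\delta)$, this yields order-oblivious instance-optimality in the sense of \Cref{def:oo-instance-optimality}.

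The main obstacle is not any single computation but ensuring that the two technical hypotheses really do hold for these problems. The consistency check is the slightly more delicate of the two because the distance function $d_D$ depends on $D$ through $\mu(M(D))$; however, since $\mu$ is a bounded linear functional of the distribution and the support lies in $[0,1]$, standard concentration suffices and the argument goes through cleanly. Everything else is a direct invocation of results already established earlier in the paper.
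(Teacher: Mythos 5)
Your proof takes essentially the same route as the paper: invoke the sequential-setting result (\Cref{thm:mean_dist}) for \stepmean/\stepfreq, check the consistency and uniform-boundedness hypotheses, and then apply the transfer theorem (\Cref{thm:io_transfer}). The paper simply asserts that the two technical conditions hold, while you spell out why (Hoeffding for consistency, the matching upper/lower bounds for uniform boundedness), but this is the same argument.
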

\begin{proof}
    This follows from \Cref{thm:mean_dist} by applying the equivalence between the two settings \Cref{thm:io_transfer}. This theorem can be applied because both the \stepmean and the \stepfreq problem satisfy the technical requirements from \Cref{def:consistent,def:bounded}. 
\end{proof}

\subsection{Learning Distributions in $\ell_\infty$} \label{subsec:histogram}

Next, we show how to use our mean estimation algorithm to get a near-instance-optimal algorithm for learning a mixture distribution with respect to the $\ell_\infty$ norm -- formally this is the \minfty problem. 
We then show that using the equivalence between the two setting, we can use this to get an order-oblivious instance-optimal algorithm for estimating a histogram of an input.

Let us now set up a notation for the \minfty problem. We assume each sample $X$ from the input distribution $D$ is itself a distribution over a discrete set $[s] = \{1, 2, \dots, s\}$. Formally, we view each sample $X$ as a sequence of $s$ numbers $p_X(1), p_X(2), \dots, p_X(s)$ with each $p_{X}(j) \ge 0$ and $\sum_{j = 1}^s p_{X}(j) = 1$. The goal is to estimate the probabilities of the mixture distribution $M(D)$; that is, for each $j$ we want to estimate the value $p_{D}(j) := \E_{X \sim D} [p_{X}(j)]$ and the error of our estimate is measured in $\ell_\infty$.  We will use the notation $\mu_j, \sigma_j$ for the mean and variance of $p_{X}(j)$ for $X \sim D$; note that this is a random variable.

We next show how one can get a near-instance-optimal algorithm for the \minfty problem by reducing it to the \mean problem we can already solve. The reduction is done via \Cref{alg:histogram} that is analyzed next.

\begin{algorithm} 
\caption{Algorithm for learning mixtures (\minfty problem)}
\label{alg:histogram}
Sample $T_0 = O(\frac{1}{\eps} \cdot \log \frac{1}{\eps \delta})$ distributions $X_1, X_2, \dots, X_{T_0}$. \\
Let $J_{\text{large}}$ represent the indices $j \in [s]$ where for some $i \le T_0$, $p_{X_i}(j) \ge \eps/2.$ \\
Let $X_{T_0+1}, \cdots$ be samples from $D$\\
For each $j \in J_{\text{large}}$, run the instance optimal algorithm for \mean from \Cref{cor:counting_ub} on $p_{X_{T_0+1}}(j), p_{X_{T_0+2}}(j), \dots$ with parameters $\eps'=\eps$,  $\delta' = O\left( \frac{\eps^3 \cdot \delta}{\log(1/\eps \delta)}\right)$. \\
When the executions of the above algorithm finish, return the outputs as estimates of $p_{D}(j)$ for $j \in J_{\text{large}}$, and return $0$ as the estimate for all $j \not\in J_{\text{large}}$.
\end{algorithm}

\begin{theorem} \label{thm:histograms}
For any input distribution $D$, Algorithm~\ref{alg:histogram} returns estimates $\hat{p}_j, 1\le j \le s$ such that with probability at least $1-\delta$, we have $|p_{M(D)}(j) - \hat{p}_j| \le \eps$ for each $1 \le j \le s$. 

The algorithm has sample complexity
\[
O\left( \left(\frac{1}{\eps} + \max_{1 \le j \le s} \frac{\sigma_j^2}{\eps^2}\right) \cdot \log \frac{1}{\eps \delta} \right).
\]
\end{theorem}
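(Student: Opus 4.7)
The plan is to analyze Algorithm~\ref{alg:histogram} in three parts: (i) the first-phase sampling identifies every ``heavy'' coordinate $j$ with $p_{M(D)}(j)\ge\eps$ with high probability, so that the estimate $0$ is safe on $[s]\setminus J_{\text{large}}$; (ii) for each $j\in J_{\text{large}}$, the instance-optimal mean estimator from \Cref{cor:counting_ub} returns an $\eps$-accurate estimate of $\mu_j=p_{M(D)}(j)$; (iii) because the parallel mean-estimator calls share one sample stream and differ only in which coordinate they read, the total cost is the maximum over $j$ rather than the sum.

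For part (i), fix any $j$ with $\mu_j\ge\eps$. Since $p_X(j)\in[0,1]$, the reverse Markov inequality gives $\P_{X\sim D}[p_X(j)\ge\eps/2]\ge\eps/2$, so after $T_0=\Theta(\eps^{-1}\log(\eps^{-1}\delta^{-1}))$ independent samples the probability that $j$ fails to enter $J_{\text{large}}$ is at most $(1-\eps/2)^{T_0}\le\operatorname{poly}(\eps\delta)$. Because $\sum_j\mu_j\le 1$ there are at most $1/\eps$ heavy coordinates, and a union bound yields probability $\ge 1-\delta/3$ that $J_{\text{large}}$ contains all of them; when this holds, every $j\notin J_{\text{large}}$ satisfies $\mu_j\le\eps$, so $\hat p_j=0$ is $\eps$-accurate. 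For part (ii), note that each sample $X_i$ contributes at most $2/\eps$ indices to $J_{\text{large}}$ (since the $p_{X_i}(j)$ sum to $1$), giving $|J_{\text{large}}|\le 2T_0/\eps=O(\eps^{-2}\log(\eps^{-1}\delta^{-1}))$. The second-phase samples $X_{T_0+1},X_{T_0+2},\dots$ are independent of $J_{\text{large}}$, so for every fixed $j\in J_{\text{large}}$ the values $p_{X_{T_0+i}}(j)$ are i.i.d.\ with mean $\mu_j$ and take values in $[0,1]$; \Cref{cor:counting_ub} then produces a $\pm\eps$ estimate with failure probability $\delta'$, and the choice of $\delta'$ in the algorithm is such that a union bound over all $j\in J_{\text{large}}$ gives overall failure probability $\le\delta/3$ in this phase. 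Combining with part (i) gives total error probability at most $\delta$.

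For the sample complexity, the key point is that the second phase feeds the same sample stream to every mean estimator in parallel, so the number of second-phase samples drawn is the maximum, not the sum, of the per-$j$ requirements; by \Cref{cor:counting_ub} this is $O((\eps^{-1}+\max_j\sigma_j^2/\eps^2)\log\delta'^{-1})$, and substituting $\log\delta'^{-1}=O(\log(\eps^{-1}\delta^{-1}))$ matches the claimed bound after absorbing the first-phase cost $T_0$ into the $\eps^{-1}\log(\eps^{-1}\delta^{-1})$ term. The main obstacle to watch is the bookkeeping of independence: $J_{\text{large}}$ is itself random, but because it is determined by $X_1,\dots,X_{T_0}$ while the mean estimators run on the disjoint fresh batch $X_{T_0+1},\dots$, the hypotheses of \Cref{cor:counting_ub} hold verbatim for every $j\in J_{\text{large}}$; the only new ingredient is the union bound, which succeeds thanks to the polynomial shrinkage of $\delta'$ relative to $|J_{\text{large}}|$.
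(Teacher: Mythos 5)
Your correctness argument is essentially the same as the paper's (identify heavy coordinates by reverse Markov, bound $|J_{\text{large}}|$, union-bound with the $\delta'$ chosen polynomially small relative to $|J_{\text{large}}|$), and is fine.

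The gap is in the sample-complexity argument. You write that the number of second-phase samples is the maximum of the per-$j$ requirements, and that ``by \Cref{cor:counting_ub} this is $O((\eps^{-1}+\max_j\sigma_j^2/\eps^2)\log\delta'^{-1})$.'' But \Cref{cor:counting_ub} controls only the \emph{expectation} of each $T_j$, and the quantity you need is $\E[\max_{j\in J_{\text{large}}} T_j]$. In general $\E[\max_j T_j]$ can be far larger than $\max_j \E[T_j]$: if each $T_j$ has a heavy tail (probability $p$ of being huge), the maximum over $|J_{\text{large}}|=\Theta(\eps^{-2}\log\frac{1}{\eps\delta})$ independent copies could inflate the expectation by roughly a factor $|J_{\text{large}}|\cdot p$. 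Nothing in your argument rules this out; indeed nothing in \Cref{cor:counting_ub} alone rules it out, since an expected-sample-complexity bound allows arbitrary tails. You flag independence bookkeeping as ``the only new ingredient,'' but the genuinely new ingredient is controlling the expectation of a maximum.

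The paper closes this gap with two additional tools you did not invoke. First, \Cref{lem:prob_amplification} is not used merely to boost the success probability to $\delta'$: it simultaneously gives a \emph{moment bound} on the running time, namely $\|\mathbf T_j\|_{O(\log(1/\eps\delta))} = O\bigl((\eps^{-1}+\sigma_j^2/\eps^2)\log\frac{1}{\eps\delta}\bigr)$. Second, \Cref{lem:expectation_of_max} states $\E[\max_i X_i] = O(\max_i \|X_i\|_{\log k})$ for $k$ random variables. Combining the two with $\log|J_{\text{large}}| = O(\log\frac{1}{\eps\delta})$ gives $\E[\max_{j\in J_{\text{large}}} T_j] = O\bigl((\eps^{-1}+\max_j\sigma_j^2/\eps^2)\log\frac{1}{\eps\delta}\bigr)$. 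Without a tail (moment) bound on $T_j$ and a max-of-random-variables lemma, the step from per-$j$ cost to total cost does not follow.
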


Before we prove Theorem \ref{thm:histograms}, we prove the following lemma.

\begin{lemma} \label{lem:histogram-auxiliary}
    In \Cref{alg:histogram}, the set $J_{\text{large}}$ has size at most $O\left(\frac{1}{\eps^2} \cdot \log \frac{1}{\eps \delta}\right)$. Moreover, with failure probability at most $\frac{\delta}{100}$, any $j$ with $p_{M(D)}(j) \ge \eps$ is in $J_{\text{large}}$.
\end{lemma}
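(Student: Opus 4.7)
The plan is to handle the two parts separately, each via a short counting / concentration argument.

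For the size bound on $J_{\text{large}}$, I would use the fact that each sampled $X_i$ is itself a probability distribution on $[s]$, so $\sum_{j=1}^s p_{X_i}(j) = 1$. This forces the number of coordinates $j$ with $p_{X_i}(j) \geq \eps/2$ to be at most $2/\eps$. Summing over the $T_0 = O(\tfrac{1}{\eps}\log\tfrac{1}{\eps\delta})$ samples and using that $J_{\text{large}}$ is the union of these per-sample ``large'' sets, we get $|J_{\text{large}}| \le T_0 \cdot (2/\eps) = O(\tfrac{1}{\eps^2}\log\tfrac{1}{\eps\delta})$, as required.

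For the second part, fix $j$ with $p_{M(D)}(j) \ge \eps$ and set $Y = p_{X}(j)$ for $X \sim D$, so $\E[Y] = p_{M(D)}(j) \ge \eps$ and $Y \in [0,1]$. A one-line Markov-style split gives the key inequality:
\[
\eps \;\le\; \E[Y] \;\le\; \tfrac{\eps}{2}\cdot \Pr[Y < \tfrac{\eps}{2}] + 1\cdot \Pr[Y \ge \tfrac{\eps}{2}] \;\le\; \tfrac{\eps}{2} + \Pr[Y \ge \tfrac{\eps}{2}],
\]
so $\Pr_{X\sim D}[p_X(j) \ge \eps/2] \ge \eps/2$. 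Consequently, the probability that none of the $T_0$ independent samples witnesses $j$ in $J_{\text{large}}$ is at most $(1-\eps/2)^{T_0} \le \exp(-\eps T_0/2)$, which is at most $O(\eps\delta/s')$ for an appropriate constant inside the definition of $T_0$, where $s' := |\{j : p_{M(D)}(j) \ge \eps\}|$.

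Finally, I would observe that $s' \le 1/\eps$, since $\sum_j p_{M(D)}(j) = 1$, so a union bound over these at most $1/\eps$ indices gives a total failure probability of at most $(1/\eps) \cdot O(\eps\delta) = O(\delta)$; tuning the hidden constant in $T_0 = O(\tfrac{1}{\eps}\log\tfrac{1}{\eps\delta})$ reduces this to $\delta/100$, finishing the proof. I do not anticipate a genuine obstacle here; the only subtlety is making sure the constant hidden in $T_0$ is large enough to absorb both the union-bound factor of $1/\eps$ and the desired $\delta/100$, which is precisely what the $\log \tfrac{1}{\eps\delta}$ factor provides.
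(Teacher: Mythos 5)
Your proof is correct and follows essentially the same route as the paper: the same $2/\eps$-per-sample counting argument for $|J_{\text{large}}|$, the same reverse-Markov observation that $\Pr_{X\sim D}[p_X(j)\ge\eps/2]\ge\eps/2$ whenever $p_{M(D)}(j)\ge\eps$, and the same $(1-\eps/2)^{T_0}$ bound followed by a union bound over the at most $1/\eps$ indices $j$ with $p_{M(D)}(j)\ge\eps$.
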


\begin{proof}
    Since $\sum_{j \in [s]} p_{X_i}(j) = 1$ and every $p_{X_i}(j) \ge 0$, at most $\frac{2}{\eps}$ values $j$ can satisfy $p_{X_i}(j) \ge \eps/2$ for any fixed $i$. Thus, at most $\frac{2}{\eps} \cdot T_0 = O\left(\frac{1}{\eps^2} \cdot \log \frac{1}{\eps \delta}\right)$ indices can be in $J_{\text{large}}$.

    Next, consider any $j$ with $p_{M(D)}(j) \ge \eps$. Note that for such $j$ we then have that the random variable $p_{X}(j)$ which is between $0$ and $1$ has a mean at least $\eps$. Hence, with probability at least $\eps/2$ it has to be that $p_{X}(j) \ge \eps/2$ (otherwise its mean would be less than $\eps$). 
    So, for each such $j$, the probability that $X_i$ for some $i$ has $p_{X_i}(j) \ge \eps/2$ is at least $1-(1-\eps/2)^{T_0} \ge 1-\frac{\eps \delta}{100}$. Taking a union bound over all $j$ with $p_{M(D)}(j) \ge \eps$ (of which there can be at most $1/\eps$), we get an overall failure probability of $\delta/100.$
\end{proof}

We now prove \Cref{thm:histograms}.

\begin{proof}[Proof of \Cref{thm:histograms}]
We start by proving the sample complexity is as claimed.
Clearly, the first line satisfies the sample complexity bound.
Now, for each item $j \in J_{\text{large}}$, let $T_j$ represent the sample complexity of \Cref{alg:counting_ones} for $j$. By Lemma~\ref{lem:prob_amplification}, it holds that $\|T_j\|_{O(\log (1/\eps \delta))} \leq O\left(\left(\frac{1}{\eps} + \frac{\sigma_y^2}{\eps^2}\right) \log \frac{1}{\eps \delta}\right)$.
The sample complexity of the algorithm (except the first line) is $\max_{j \in J_{\text{large}}} T_j$.
At the same time, by \Cref{lem:histogram-auxiliary}, it holds that $|J_{\text{large}}| \le O\left(\frac{1}{\eps^2} \log \frac{1}{\eps \delta}\right)$ and thus $\log |J_{\text{large}}| \le O(\log \frac{1}{\eps \delta})$.
By \Cref{lem:expectation_of_max}, we thus have that
\begin{align*}
E[\max_{j \in J_{\text{large}}} T_j] &= O\left(\left(\frac{1}{\eps} + \max_{j \in J_{\text{large}}} \frac{\sigma_j^2}{\eps^2}\right) \log \frac{1}{\eps \delta}\right) = O\left(\left(\frac{1}{\eps} + \max_{k \in [s]} \frac{\sigma_j^2}{\eps^2}\right) \log \frac{1}{\eps \delta}\right) \,,
\end{align*}
as we wanted to prove.

We now argue correctness. First, consider the choices of $j \not\in J_{\text{large}}$. By \Cref{lem:histogram-auxiliary}, with failure probability $\frac{\delta}{100},$ every such $j$ has $p_D(j) < \eps$. Therefore, since we estimate every such probability as $0$, they are all sufficiently accurate.

Next, we consider $j \in J_{\text{large}}$. We have for every such $j$, 
 that we get accuracy $\eps$ with failure probability $O\left( \frac{\eps^3 \delta}{\log(1/\eps \delta)} \right)$ (since this is how we have set the parameters of the subroutine). Since $|J_{\text{large}}| \le O\left(\frac{1}{\eps^2} \log \frac{1}{\eps \delta}\right)$ by \Cref{lem:histogram-auxiliary}, a union bound implies the algorithm is correct for all $j \in J_{\text{large}}$ with probability at least $1-\frac{\delta}{2}$ provided that we choose the constant in the definition of $\delta'$ small enough. Adding the probability $\delta/100$ that the set $J_{\text{large}}$ is wrong, and combining with our bound for $j \not\in J_{\text{large}}$, this completes the proof.
\end{proof}

\begin{theorem} \label{thm:histograms_opt}
In the sequential estimation setting, there is an algorithm for learning a (mixture) distribution w.r.t.\ $\ell_\infty$ (formally, the problems \stepinfty, \distinfty) that is instance optimal up to a factor of $1+\frac{\log \epsilon^{-1}}{\log \delta^{-1}}$. 
\end{theorem}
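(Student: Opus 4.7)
The plan is to complement the upper bound of Theorem~\ref{thm:histograms} with a matching lower bound of the form $OPT_{\eps,\delta}(D)=\Omega\bigl((\tfrac{1}{\eps}+\max_j\tfrac{\sigma_j^2}{\eps^2})\log\delta^{-1}\bigr)$ for every input distribution $D$. Since $\log(1/(\eps\delta))=\log\eps^{-1}+\log\delta^{-1}$, the upper bound from Theorem~\ref{thm:histograms} is exactly $(1+\log\eps^{-1}/\log\delta^{-1})$ times this lower bound, giving the claimed instance-optimality gap.

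For the lower bound I would reduce from the \mean lower bound (Lemma~\ref{lem:counting_lb_2}) applied to the marginal of maximum variance. Fix $j^*:=\arg\max_j\sigma_j^2$ and let $q_{j^*}$ denote the law on $[0,1]$ of the scalar $p_X(j^*)$ when $X\sim D$; its mean is $\mu_{j^*}$ and its variance is $\sigma_{j^*}^2$. Applying Lemma~\ref{lem:counting_lb_2} to $q_{j^*}$ with accuracy parameter $3\eps$ gives an alternative distribution $q'_{j^*}$ on the same support with $|\E[q_{j^*}]-\E[q'_{j^*}]|\ge 3\eps$ such that no $(1-\delta)$-distinguisher between $q_{j^*}$ and $q'_{j^*}$ uses fewer than $\Omega\bigl((\tfrac{1}{\eps}+\tfrac{\sigma_{j^*}^2}{\eps^2})\log\delta^{-1}\bigr)$ samples.

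I would then lift $q'_{j^*}$ to a distribution $D'$ over distributions on $[s]$ whose $j^*$-marginal law is exactly $q'_{j^*}$, by first sampling $x\sim q'_{j^*}$ and then drawing $Y$ from the conditional distribution of $X\sim D$ given $p_X(j^*)=x$. Because the construction in Lemma~\ref{lem:counting_lb_2} preserves the support of $q_{j^*}$, these conditional distributions are well defined and $D'$ is a valid mixture with $p_{M(D')}(j^*)=\E[q'_{j^*}]$. Reading off the $j^*$-th coordinate of each sample converts any $T$-sample distinguisher between $D$ and $D'$ into a $T$-sample distinguisher between $q_{j^*}$ and $q'_{j^*}$, so the distinguishing lower bound transfers verbatim from $q_{j^*}$ vs.\ $q'_{j^*}$ to $D$ vs.\ $D'$.

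Finally, any $(\eps,\delta)$-correct algorithm $A$ for \stepinfty yields such a $(1-\delta)$-distinguisher: with probability $\ge 1-\delta$ the output has $\ell_\infty$ distance at most $\eps$ from the true mixture, and since $p_{M(D)}(j^*)$ and $p_{M(D')}(j^*)$ differ by $\ge 3\eps$, rounding the coordinate-$j^*$ output to the nearer of the two candidate values correctly identifies the source distribution. Therefore $T(A_{\eps,\delta},D)$ is at least the distinguishing lower bound, which is exactly what we wanted. The statement for \distinfty is immediate because it is the special case of \stepinfty in which each sample from $D$ is a Dirac delta on a single element, so both the upper bound algorithm and the reduction above apply directly. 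The main technical subtlety I anticipate is verifying that the conditioning $X\sim D\mid p_X(j^*)=x$ is well defined for every $x$ in the support of $q'_{j^*}$; this is routine when $q_{j^*}$ is discrete (as in the $B$-\textsc{Mixture} problems) and requires a standard disintegration argument in full generality.
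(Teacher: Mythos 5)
Your approach matches the paper's: the same upper bound from \Cref{thm:histograms}, the same lower bound via reduction of \stepinfty to mean estimation of the $j$-th marginal $p_X(j)$ using \Cref{thm:counting_lb}/\Cref{lem:counting_lb_2}, and the same gap computation $\log(1/(\eps\delta))/\log(1/\delta)=1+\log\eps^{-1}/\log\delta^{-1}$. You are somewhat more explicit than the paper in spelling out the lifting of the hard instance $q'_{j^*}$ to a distribution $D'$ over $Dist([s])$ via conditional sampling (the same disintegration device the paper uses in the proof of \Cref{thm:mean_dist}); the paper leaves this transfer step implicit when it appeals to \Cref{thm:counting_lb}.
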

\begin{proof}
    We start with the problems \stepinfty and \distinfty. As the upper bound, we use the algorithm from \Cref{thm:histograms}. 
    We need to prove a corresponding lower bound. We first observe that for each $j$, solving \stepinfty for some distribution $D$ over a set $S$ enables us in particular to estimate the value of $p_{M(D)}(j)$ up to additive error $\eps$ with error probability $\delta$. That is, we can then also solve the \mean problem on the set $S'$ defined as the set of all possible values of $p_{X}(j)$ for $X \in S$.    A distribution $X$ may give to any item $j$ the probability $p_X(j) = 0$ or $p_X(j) = 1$, meaning that $\min(S') = 0$ and $\max(S') = 1$. 
    Applying the lower bound of \Cref{thm:counting_lb} on this problem, we thus conclude that for any $j \in [s]$, any correct algorithm for \stepinfty on $D$ needs at least $\Omega\left( \left( 1/\eps + \sigma_j^2/\eps^2\right) \cdot \log \delta^{-1}\right)$. This matches (for some $j$) our upper bound up to a factor of $\log(\eps\delta)^{-1} / \log \delta^{-1} = 1 + \log \eps^{-1} / \log \delta^{-1}$. 
    The same argument can be also carried out for the \distinfty problem. 
%
\end{proof}

\begin{theorem}
    \label{thm:histograms_io}
    Consider the problem in the I/O estimation model, where for each element in the alphabet $a \in S$, we want to estimate the frequency of $a$ in the input, and error is measured as $\ell_\infty$ (formally the \bdistinfty problem).
    There is an algorithm in this setting that is optimal up to a factor of $1+\frac{\log \epsilon^{-1}}{\log \delta^{-1}}$.
\end{theorem}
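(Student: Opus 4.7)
The plan is to invoke the equivalence theorem \Cref{thm:io_transfer} and transfer the sequential estimation algorithm of \Cref{thm:histograms_opt} for \stepinfty into an algorithm for \bdistinfty in the I/O model. We already have an $(\eps,\delta)$-estimator for \stepinfty whose sample complexity matches the instance-optimal lower bound up to a factor $1+\frac{\log\eps^{-1}}{\log\delta^{-1}}$, so the only thing left to check is that \stepinfty (equivalently \bdistinfty) meets the two mild technical hypotheses of \Cref{thm:io_transfer}: consistency (\Cref{def:consistent}) and uniform boundedness of the sample complexity (\Cref{def:bounded}).

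For consistency, we need that for every fixed estimate $\hat\theta$ (here a histogram on $[s]$), the random quantity $d_{D_n}(\hat\theta)=\|\hat\theta-p_{M(D_n)}\|_\infty$ converges uniformly in probability to $d_{D}(\hat\theta)=\|\hat\theta-p_{M(D)}\|_\infty$ as $n\to\infty$. This is immediate: for every coordinate $j\in[s]$, $p_{M(D_n)}(j)$ is the empirical mean of the i.i.d.\ $[0,1]$-valued random variables $p_{X_i}(j)$, so by Hoeffding's inequality $|p_{M(D_n)}(j)-p_{M(D)}(j)|\le \eps/2$ uniformly in $j$ with probability $1-\eps$ once $n=\Omega(\eps^{-2}\log(s/\eps))$, and hence the same holds for the $\ell_\infty$ gap between $\hat\theta-p_{M(D_n)}$ and $\hat\theta-p_{M(D)}$; the bound depends only on $\eps$ and not on $\hat\theta$ or on the particular distribution $D$.

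For uniform boundedness, we must show $OPT_{\eps,\delta}\le C_{\eps,\delta}\cdot OPT_{2\eps,2\delta}$. This follows from a standard amplification/subsampling argument: any $(2\eps,2\delta)$-correct estimator can be boosted to an $(\eps,\delta)$-correct one using $\Cref{lem:prob_amplification}$ and a linear reduction in the target accuracy (run it with a smaller $\eps'=\eps$ target, which increases sample complexity by at most a function of $\eps,\delta$ alone, not of $D$).

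Once these two conditions are checked, \Cref{thm:io_transfer} (Part 1) converts our $(\eps,\delta)$-correct, up-to-factor-$(1+\tfrac{\log\eps^{-1}}{\log\delta^{-1}})$-optimal algorithm for \stepinfty into a $(2\eps,2\delta)$-correct order-oblivious instance-optimal algorithm for \bdistinfty with the same loss factor, which (after rescaling $\eps,\delta$ by a factor of $2$) gives the claimed result. The only minor obstacle is the verification of the two technical conditions, and both are routine once spelled out as above.
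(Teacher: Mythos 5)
Your approach is the same as the paper's: appeal to \Cref{thm:histograms_opt} for \stepinfty and transfer via \Cref{thm:io_transfer}, checking consistency and uniform boundedness. The paper states this in three sentences and does not spell out the verification of the two technical conditions; you go further and try to verify them, which is a nice instinct. Your consistency argument (coordinatewise Hoeffding plus a union bound over $[s]$) is correct.

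However, your uniform-boundedness argument is essentially circular. You claim that a $(2\eps,2\delta)$-correct estimator can be "boosted" by "run[ning] it with a smaller $\eps'=\eps$ target," but an estimator is a fixed object with a fixed accuracy guarantee; you cannot retarget a black-box $(2\eps,2\delta)$-estimator to accuracy $\eps$ without replacing it by a different algorithm. And the parenthetical claim that doing so "increases sample complexity by at most a function of $\eps,\delta$ alone, not of $D$" is exactly the statement you are trying to prove. \Cref{lem:prob_amplification} only amplifies the success probability, not the accuracy, so it does not bridge this gap. The clean way to establish uniform boundedness here is instead to sandwich $OPT$: the upper bound of \Cref{thm:histograms} gives $OPT_{\eps,\delta}(P,D) = O\bigl((1/\eps + \max_j \sigma_j^2/\eps^2)\log \tfrac{1}{\eps\delta}\bigr)$, while the lower bound via \Cref{thm:counting_lb} gives $OPT_{2\eps,2\delta}(P,D) = \Omega\bigl((1/(2\eps) + \max_j \sigma_j^2/(2\eps)^2)\log\tfrac{1}{2\delta}\bigr)$. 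Both sides involve the same $D$-dependent quantity $1/\eps + \max_j\sigma_j^2/\eps^2$ up to a constant factor, so their ratio is bounded by $C_{\eps,\delta} = O\bigl(\log\tfrac{1}{\eps\delta}/\log\tfrac{1}{\delta}\bigr)$, which depends only on $\eps,\delta$. This is what lets \Cref{thm:io_transfer} apply.
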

\begin{proof}
    We can solve \stepinfty instance-optimally by \Cref{thm:histograms_opt}.
    We then get the desired algorithm by combining this result with the equivalence between the two settings \Cref{thm:io_transfer}. This can be done since the technical conditions \Cref{def:consistent,def:bounded} are satisfied. 
\end{proof}

\subsection{Estimating mean with multiplicative error}
\label{sec:mean_multiplicative}
We next show how we can get an instance optimal algorithm for estimating the mean with multiplicative error, i.e., the \mean problem with multiplicative error, by reducing it to the additive version of the problem, which we solved above.
Their algorithm (see \cite[Stopping rule algorithm, Approximation algorithm $\fA\fA$]{dagum2000optimal}) is similar to our multiplicative mean estimation algorithm: They also first compute a rough estimate of the mean and the variance and use those rough estimates to compute the number of samples needed for the final estimate.


We first prove the following lemma that enables us to get a rough multiplicative estimate of $\mu$. 
\begin{lemma}
\label{lem:mult}
Consider any distribution $D$ supported on $[0,1]$ with mean $\mu$ and variance $\sigma^2$. There is an algorithm that returns a multiplicative $2$-approximate estimator of $\mu$ with probability at least $1-\delta$ such that its expected sample complexity is $O(\frac{\log 1/\delta}{\mu})$. 
\end{lemma}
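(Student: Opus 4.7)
I will use the classical \emph{stopping rule estimator}, essentially the one of \citet{dagum2000optimal}. Fix $\Gamma := C \log(1/\delta)$ for a sufficiently large constant $C$. Draw independent samples $X_1, X_2, \dots$ from $D$, maintain the running sum $S_t = \sum_{i=1}^t X_i$, and stop at the first time $N$ such that $S_N \ge \Gamma$. Return $\hat\mu := \Gamma / N$. Intuitively, if $\mu$ is small we need many samples before the running sum crosses $\Gamma$, whereas if $\mu$ is large we stop quickly, and $\Gamma/N$ inverts this relation.

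\textbf{Expected sample complexity.} The stopping time $N$ is bounded and has finite mean (since $X_i \in [0,1]$ gives $S_N \le \Gamma + 1$). By Wald's identity, $\mu \cdot \E[N] = \E[S_N]$, and since $\Gamma \le \E[S_N] \le \Gamma + 1$, we get $\E[N] = O(\Gamma/\mu) = O(\log(1/\delta)/\mu)$, as required.

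\textbf{Correctness.} It suffices to show that with probability at least $1-\delta$ we have $\tfrac{\Gamma}{2\mu} \le N \le \tfrac{2\Gamma}{\mu}$, which is equivalent to $\tfrac{\mu}{2} \le \hat\mu \le 2\mu$. Set $T_- := \lceil \Gamma/(2\mu) \rceil$ and $T_+ := \lfloor 2\Gamma/\mu \rfloor$. Note $\E[S_{T_-}] = T_- \mu \le \Gamma/2 + \mu$ and $\E[S_{T_+}] = T_+ \mu \ge 2\Gamma - \mu$. Also $\Var[X_i] \le \E[X_i^2] \le \E[X_i] = \mu$ since $X_i \in [0,1]$. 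Therefore, by Bernstein's inequality (\Cref{lem:bernstein}) applied to $S_{T_-}$ with deviation $\Delta = \Gamma/3$, we obtain
\[
  \Pr\bigl[S_{T_-} \ge \Gamma\bigr] \le \exp\!\Bigl(-\Omega\!\Bigl(\min\bigl(\Gamma,\, \tfrac{\Gamma^2}{T_- \mu}\bigr)\Bigr)\Bigr) = \exp(-\Omega(\Gamma)),
\]
and similarly $\Pr[S_{T_+} < \Gamma] \le \exp(-\Omega(\Gamma))$. Choosing the constant $C$ in $\Gamma = C \log(1/\delta)$ large enough makes each of these at most $\delta/2$. Whenever both good events occur, $S_{T_-} < \Gamma$ forces $N > T_-$ and $S_{T_+} \ge \Gamma$ forces $N \le T_+$, yielding $\hat\mu \in [\mu/2, 2\mu]$. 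Union bounding completes the argument.

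\textbf{Main obstacle.} The only subtlety is that $X_i$ is a general $[0,1]$-valued variable rather than a Bernoulli, so we need Bernstein (using $\Var[X_i] \le \mu$) rather than the simpler Chernoff bound to obtain concentration scales that exactly match the expectation. Once this is observed, both the multiplicative accuracy and the $O(\log(1/\delta)/\mu)$ expected sample complexity fall out of the stopping rule analysis in a few lines.
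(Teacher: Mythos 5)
Your proof is correct, but it takes a genuinely different (and self-contained) route from the paper's. The paper performs randomized rounding of each sample $X \sim D$ to a Bernoulli (returning $1$ with probability $X$, $0$ otherwise), notes that this preserves the mean by the law of total expectation, and then cites a known stopping-rule estimator for Bernoulli parameters (\citet{watanabe2005sequential}, a special case of \citet{lipton1993efficient}). You instead run the stopping rule directly on the $[0,1]$-valued samples and replace the Bernoulli concentration bound with Bernstein's inequality via the observation $\Var[X_i] \le \E[X_i] = \mu$; this dispenses with both the rounding step and the external reference, at the modest cost of a few lines of Bernstein bookkeeping. Both are fine; yours has the advantage of being fully self-contained, while the paper's is shorter modulo the citation.

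One justification in your write-up is off, although the claim it supports is still true. You assert that $N$ ``is bounded and has finite mean (since $X_i \in [0,1]$ gives $S_N \le \Gamma+1$).'' The boundedness of $S_N$ does not imply that $N$ is bounded (it is not; it can be arbitrarily large) nor, by itself, that $\E[N] < \infty$, which is the hypothesis you need to invoke Wald's identity. The correct justification is that for $\mu > 0$ the stopping time has exponentially decaying tails: for instance, for any integer $k \ge 1$, $\Pr[N > k \cdot T_+] \le \Pr[S_{kT_+} < \Gamma] \le \exp(-\Omega(k\Gamma))$ by the same Bernstein estimate you already compute, so $\E[N] < \infty$ and Wald's identity applies. (When $\mu = 0$ the algorithm never terminates, but the lemma's stated complexity $O(\log(1/\delta)/\mu)$ is vacuous there, so this case is excluded implicitly.) With that repair the argument is complete.
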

\begin{proof}
We estimate $\mu$ by a simple reduction. We perform randomized rounding of the random variable $X \sim D$ to the set $\{0,1\}$, meaning we return 1 with probability $X$ and return 0 otherwise. Note that by the law of total expectation, this does not change the mean of the distribution. This reduced the problem of estimating $\mu$ to the problem of estimating the parameter of a Bernoulli trial. This is a well-known problem with a very simple solution \footnote{The algorithm is that one makes coinflips until we get a constant number of 1's, after which it returns the standard estimate of the mean.} with desired complexity. See the note \cite{watanabe2005sequential} for the argument, which is a special case of the more general algorithm by \citet{lipton1993efficient}.
\end{proof}

\begin{theorem}
    \label{thm:mean_dist_mult}
    In the sequential estimation setting, there is an instance optimal algorithm for learning the mean with multiplicative error (formally, the \mean problem with multiplicative error) over any set $S$ with $\inf(S) > 0, \sup(S) = 1$. 
    
     In the I/O estimation model, there is an order-oblivious instance optimal algorithm for estimating the frequencies with multiplicative error $1+\eps$ (formally the problem \bmean with multiplicative error). 
\end{theorem}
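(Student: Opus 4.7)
The plan is to reduce multiplicative mean estimation to the additive mean estimator of \Cref{cor:counting_ub}, using \Cref{lem:mult} to pre-compute a constant-factor approximation of $\mu$ that calibrates the required additive precision. The algorithm proceeds in two stages. First, invoke \Cref{lem:mult} with failure probability $\delta/2$ to obtain $\tilde\mu \in [\mu/2, 2\mu]$ at a cost of $O(\mu^{-1}\log\delta^{-1})$ samples. Then, invoke \Cref{cor:counting_ub} with additive precision $\eps' := \eps\tilde\mu/4$ and failure probability $\delta/2$. Since $\eps' = \Theta(\eps\mu)$, the resulting additive error $\eps'$ translates to a multiplicative error at most $\eps$, and the sample complexity of the second stage is
\[
O\!\left(\left(\frac{1}{\eps\mu} + \frac{\sigma^2}{\eps^2\mu^2}\right)\log\delta^{-1}\right),
\]
which dominates the first-stage cost since $1/\mu \le 1/(\eps\mu)$.

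For the matching lower bound, I would reduce to the additive lower bound \Cref{lem:counting_lb_2}, applied with additive target $3\eps\mu$. This produces a distribution $D'$ with $|\mu(D) - \mu(D')| \ge 3\eps\mu$ that is $(1-\delta)$-indistinguishable from $D$ using fewer than $\Omega((1/(\eps\mu) + \sigma^2/(\eps^2\mu^2))\log\delta^{-1})$ samples. Because $3\eps\mu \ge \eps(\mu(D) + \mu(D'))$, the sets of $\eps$-multiplicatively valid estimates for $D$ and $D'$ are disjoint, so any correct multiplicative $(\eps,\delta)$-estimator induces such a distinguisher and hence must use at least this many samples. Combined with the upper bound, this gives instance optimality in the sequential estimation setting.

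For the I/O result, I would first extend the argument to the \textsc{Step-Mixture} variant with multiplicative error via the same compositional reduction used in the proof of \Cref{thm:mean_dist}: the outer algorithm feeds the means of the sampled sub-distributions into the two-phase estimator above, and the lower bound transfers by the same coupling. The desired order-oblivious instance optimal algorithm for \bmean with multiplicative error then follows from the equivalence \Cref{thm:io_transfer}, after one verifies that the multiplicative step-mixture problem satisfies the consistency and bounded-sample-complexity conditions of \Cref{def:consistent,def:bounded}; these verifications are routine and essentially identical to the additive case, since the bounds scale homogeneously in $\mu$.

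The main obstacle lies in correctly tracking constant factors across the ``precision translation'' from multiplicative to additive: the factor-two slack between $\tilde\mu$ and $\mu$ must be absorbed into $\eps'$ without degrading instance optimality, and in the lower bound the gap $3\eps\mu$ must strictly exceed the combined tolerance $\eps(\mu(D)+\mu(D'))$ so the reduction to a distinguisher is sound. A secondary subtlety is that the assumption $\inf(S) > 0$ is essential: it guarantees $\mu \ge \inf(S) > 0$ so that multiplicative error is well-defined and the rough estimator of \Cref{lem:mult} terminates in finite expected time, and it also ensures that the perturbed distribution $D'$ arising in the lower bound has mean bounded away from zero so that multiplicative and additive regimes align up to constants.
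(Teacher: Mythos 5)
Your proposal follows essentially the same route as the paper: a rough multiplicative estimate via \Cref{lem:mult}, then the additive estimator of \Cref{cor:counting_ub} run at precision $\Theta(\eps\tilde\mu)$, matched against the additive lower bound of \Cref{thm:counting_lb}/\Cref{lem:counting_lb_2} instantiated at scale $\Theta(\eps\mu)$, and finally transferred to the I/O model via the step-mixture reduction and \Cref{thm:io_transfer}. The only difference is that you spell out the multiplicative-to-additive disjointness check for the lower bound (the $3\eps\mu \ge \eps(\mu(D)+\mu(D'))$ condition) that the paper leaves implicit; this is a correct and welcome elaboration, not a different argument.
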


\begin{proof}
        We start with the \mean problem. We note that for any $S \subseteq [0, 1]$, we can first run the algorithm from \Cref{lem:mult} to get an estimate $\bar\mu$ of the mean $\mu$ that is correct up to a multiplicative $2$-factor. Next, we set $\eps' = \eps\bar\mu/10$ and run the estimation algorithm from \Cref{cor:counting_ub} with $\eps'$ to get an estimate of $\mu$ that is correct up to additive $\eps'$, i.e., it is correct up to multiplicative $(1\pm\eps)$. The total expected sample complexity of the algorithm on a distribution $D$ is 
        \begin{align*}
        \Theta\left( \left( \frac{1}{\eps \mu(D)} + \frac{\sigma(D)^2}{(\eps \mu(D))^2} \right) \cdot \log \frac{1}{\delta} \right).
        \end{align*}
This complexity is finite due to our assumption $\inf(S) > 0$. On the other hand, given any distribution $D$, we may apply the lower bound of \Cref{thm:counting_lb} with $\eps' = \eps \mu(D)$ to conclude that any correct algorithm for $D$ with $1\pm\eps$ multiplicative error has at least that complexity.  
    
For the I/O version of the problem, we can make the same reduction as in \Cref{thm:mean_dist,thm:mean_io}. 
\end{proof}

\paragraph{Example: checking whether input contains \texttt{1}} Let us now discuss an example that does not quite fit our equivalence between the two setting (which holds under some mild technical assumptions, \cref{thm:io_transfer}). Imagine the problem in the I/O estimation problem with $B = 1$ where the input consists of a \texttt{0}/\texttt{1} string and our task is to answer with a constant probability of success whether the input contains \texttt{1}. \footnote{Very similar problem could be that the input consists of two strings of the same length and the input block at position $i$ contains the $i$-th letter of both strings.} The natural algorithm for this problem is to read the blocks in random order until we sample \texttt{1}, or until the whole string is read; it is not hard to prove that this algorithm is order-oblivious instance optimal.

In the sequential estimation setting, the corresponding problem is to estimate whether $p > 0$ or $p = 0$ for an input Bernoulli distribution $B(p)$. Again there is an analogous estimation algorithm that is instance optimal, except on input $B(0)$ where the algorithm does not terminate and thus is not valid on that input (no valid algorithm can both terminate on $B(0)$ and be correct). 

We note that this problem is a variant of our multiplicative mean estimation problem when we try to estimate the mean up to any multiplicative factor. Indeed, both above instance optimal algorithms are a variant of the algorithm from \cref{lem:mult}. On the other hand, we notice that \cref{thm:mean_dist_mult} is not directly applicable to the problem since $\inf(S) = 0$. Also, we cannot use the equivalence between the two settings \cref{thm:io_transfer}, since the distance function is not uniformly consistent per \cref{def:consistent}. This makes sense since the two variants of the algorithm for the I/O estimation and sequential estimation are slightly different: In the I/O setup, the algorithm can terminate after it read the whole input. On the other hand, a sequential estimator can never terminate, since it could be the case that the probability of \texttt{1} is simply very small. 

\section{Quantiles}
\label{sec:quantiles}
In this section, we first present an approximately instance-optimal algorithm for the problem of estimating a quantile of a mixture distribution, i.e., the \quantile problem and its variants. 
This allows us to construct approximately instance optimal algorithms for variants of the problem of learning distributions with respect to the Kolmogorov-Smirnov norm, i.e., the \distks problem in \Cref{sec:quantiles_ks}. 

Recall that that $Dist(S)$ denotes the set of distributions over a ground set $S$. 
In the \mquantile problem, we consider a distribution $D$ over $Dist(S)$ which induces a mixture distribution $M(D)$. 
We have sampling access to $D$ and the problem is parametrized by a number $q \in [0, 1]$. Our task is to estimate the $q$-th quantile $Q_{M(D)}(q)$ of $M(D)$ defined as $$Q_{M(D)}(q) = \inf \{x \,|\, P[M(D) > x]  \geq q\}. $$

We define $q^-_X(x) = P[X < x]$ and $q^+_X(x) = P[X \leq x]$. For continuous distributions, we define $q_X(x) = q_X^-(x) = q_X^+(x)$. 
We want to estimate $Q_{M(D)}(q)$ with $\eps$ accuracy, meaning that we should output a value $\hN$ such that
\[
q^-_{M(D)}(\hN)-\eps \leq q \leq q^+_{M(D)}(\hN)+\eps \,.
\]

%
%

We need the following definition in order to be able to specify the instance-optimal sample complexity.
Let\footnote{Recall, that $\otimes$ denotes the product distribution.} $N = Q_{D \otimes Unif[0,1]}(q)$. We define the following quantity:
\[
\sigma_q^2 = \Var_{X \sim D}(q_{(X,U)}(N)) 
\]
for $U \sim Unif[0,1]$.

\begin{theorem}
\label{thm:quantile_main}
\Cref{alg:quantiles_final} is up to $O\left(1+\frac{\log \left( \delta^{-1} \log \eps^{-1} \right)}{\log \delta^{-1}}\right)$ an instance-optimal algorithm for the \mquantile problem. Let $d^-_X = |q-q^+_X(Q_D(q))|$ and $d^+_X = |q - q^-_X(Q_D(q))|$ and let  $\bar{\eps} = \max(\eps, \min(d^-, d^+))$. Its sample complexity is
\begin{align*}
    \Theta\left( \left( \frac{1}{\bar{\eps}} + \frac{\sigma_q^2}{\bar{\eps}^2} \right) \cdot \frac{\log \left( \delta^{-1} \log \eps^{-1} \right)}{\log \delta^{-1}} \right). 
\end{align*}
\end{theorem}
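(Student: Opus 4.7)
The plan is to treat the statement as two matching bounds: an upper bound showing that \Cref{alg:quantiles_final} achieves the claimed sample complexity, and a lower bound matching it up to the stated factor. For the upper bound, the strategy mirrors the two-phase mean estimation algorithm from page \pageref{alg:mean_informal}, adapted so that the role of the variance $\sigma^2$ is played by $\sigma_q^2 = \Var_{X \sim D}[q_{(X,U)}(N)]$. Concretely, a base subroutine for a target accuracy $\bar\eps$ would (i) draw $T_1 = \Theta(1/\bar\eps)$ sample distributions and compute an empirical $q$-th quantile $\tilde N$ together with an empirical proxy $\tilde\sigma_q^2$ for $\sigma_q^2$ obtained by plugging $\tilde N$ into $q_{(X_i,U_i)}(\tilde N)$, and (ii) draw a fresh $T_2 = \Theta(1/\bar\eps + \tilde\sigma_q^2/\bar\eps^2)$ samples and return the empirical $q$-th quantile of $(X_i,U_i)$ over that second batch. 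Because $N = Q_{D \otimes \mathrm{Unif}[0,1]}(q)$ is a continuous quantile of an auxiliary distribution, DKW-style concentration of the empirical CDF gives the desired quantile approximation in $O(\sigma_q^2/\bar\eps^2 + 1/\bar\eps)$ samples, paralleling the mean case; the Chebyshev-style argument analogous to \cref{eq:bore,eq:want} then shows $\tilde\sigma_q^2 \ge \sigma_q^2/2$ with constant probability so that the second-phase budget is sufficient. The outer algorithm iterates this subroutine over the $O(\log\eps^{-1})$ scales $\bar\eps_k = 2^{-k}$ from constant down to $\eps$, at each scale also checking whether the already-obtained confidence interval lies inside a plateau of the empirical CDF; as soon as the interval at scale $\bar\eps_k$ is contained in a region where $q^+_{M(D)}$ and $q^-_{M(D)}$ do not change value by more than $\bar\eps_k$, the algorithm can stop and return. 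The outer union bound over the $O(\log\eps^{-1})$ scales is the source of the extra $\log(\delta^{-1}\log\eps^{-1})/\log\delta^{-1}$ factor after invoking \Cref{lem:prob_amplification} with $\delta' = \delta/\Theta(\log\eps^{-1})$.

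For correctness and the sample complexity bound, the key technical steps are: (a) show that $\tilde\sigma_q^2$ is (approximately) unbiased and has standard deviation $O(\sigma_q/\sqrt{T_1})$, closely following the mean-estimation calculation with $\Var[q_{(X,U)}(N)] \le \sigma_q^2$ because $q_{(X,U)}(N) \in [0,1]$; (b) conclude, as in the proof of \Cref{thm:mean_estimation_ub}, that the expected number of samples in the second phase is $O(1/\bar\eps + \sigma_q^2/\bar\eps^2)$; (c) show that if the true CDF of $M(D)$ has a plateau of vertical height at least $\bar\eps$ on both sides of the $q$-th quantile (which is exactly the condition $\min(d^-,d^+) \ge \bar\eps$), then the termination test at scale $\bar\eps$ succeeds with high probability, yielding sample complexity parametrized by $\bar\eps = \max(\eps,\min(d^-,d^+))$ rather than $\eps$. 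Amplifying to success probability $1-\delta$ via \Cref{lem:prob_amplification} and paying the $\log\log\eps^{-1}$ union bound produces the stated upper bound.

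For the lower bound, the plan is a black-box reduction to the mean estimation lower bound \Cref{lem:counting_lb_2}, exactly in the spirit of the quantile-lower-bound sketch in \Cref{sec:techniques}. Given an input $D$, apply \Cref{lem:counting_lb_2} to the Bernoulli-valued random variable $\mathbf{1}[X \le N]$ (or its continuous surrogate via $U \sim \mathrm{Unif}[0,1]$), whose mean is $q$ and whose variance is $\Theta(\sigma_q^2)$, to obtain a perturbation $D_1$ (resp.\ $D_2$) for which $N$ is the $(q-2\bar\eps)$-th (resp.\ $(q+2\bar\eps)$-th) quantile and which is indistinguishable from $D$ using fewer than $\Omega((1/\bar\eps + \sigma_q^2/\bar\eps^2)\log\delta^{-1})$ samples. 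Any correct $(\eps,\delta)$-estimator for the quantile on $D$ must distinguish $D_1$ from $D_2$ (because their sets of valid outputs are disjoint when the relevant CDF interval has width at least $\bar\eps$), which yields the $\Omega((1/\bar\eps + \sigma_q^2/\bar\eps^2)\log\delta^{-1})$ lower bound and establishes the parametrization by $\bar\eps$.

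The main obstacle, and the step I would spend the most care on, is matching the extra multiplicative factor $\log(\delta^{-1}\log\eps^{-1})/\log\delta^{-1}$ between the upper and lower bounds; the theorem asserts this gap is tight. The upper-bound side of the gap arises from the $O(\log\eps^{-1})$ scales in the termination test and a union bound. The tightness claim on the lower-bound side cannot come from the black-box reduction above and will instead be established separately in \Cref{sec:gap} through an adversary-style argument on the family $\bar D(1/2+\bar\eps)$: for any fixed correct $(\eps,\delta)$-estimator $A$, there must exist a scale $\bar\eps > \eps$ on which $A$ spends at least $\Omega(\log\log\eps^{-1}/\bar\eps^2)$ samples, because by a pigeonhole argument over $\Theta(\log\eps^{-1})$ disjoint scales $A$ cannot simultaneously be within a constant of the instance-optimal $O(1/\bar\eps^2)$ bound on all of them while maintaining the $\delta$ error guarantee by a Hellinger-tensorization argument as in \Cref{lem:counting_lb_2}. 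Combining this with the black-box quantile lower bound yields the matching factor and completes the proof of the theorem.
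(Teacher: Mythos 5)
Your proposal follows essentially the same route as the paper: an upper bound built from a variance-adaptive core subroutine (the paper's \Cref{alg:quantiles}, with budget $\Theta(1/\bar\eps + \tilde\sigma_q^2/\bar\eps^2)$ driven by an empirical estimate of $\sigma_q^2$) wrapped in an outer loop over $O(\log\eps^{-1})$ scales with a plateau-detection termination test (\Cref{alg:quantiles_final}), a lower bound via black-box reduction to \Cref{lem:counting_lb_2}, and the tightness of the extra $\log(\delta^{-1}\log\eps^{-1})/\log\delta^{-1}$ factor established separately in \Cref{sec:gap}. The overall decomposition and the role of $\bar\eps = \max(\eps,\min(d^-,d^+))$ are correctly identified.

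One concrete imprecision to fix is in the lower-bound reduction. You describe the object fed into \Cref{lem:counting_lb_2} as ``the Bernoulli-valued random variable $\mathbf{1}[X \le N]$, whose mean is $q$ and whose variance is $\Theta(\sigma_q^2)$,'' which is internally inconsistent and wrong for the mixture setting. Here each sample $X$ is itself a distribution, so $\mathbf{1}[X \le N]$ is not well defined; the correct object is $q_{(X,U)}(N)\in[0,1]$, which has mean $q$ and variance exactly $\sigma_q^2$ by definition, but is not Bernoulli except in the degenerate $B=1$ case. A Bernoulli with mean $q$ has variance $q(1-q)\ge\sigma_q^2$, so applying the mean lower bound to it would yield a bound too strong to match the upper bound's $\Theta(\sigma_q^2/\bar\eps^2)$ term. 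The paper applies \Cref{lem:counting_lb_2} to (a rounded copy of) the distribution of $q_{(X,U)}(N)$, where the rounding map $r$ is needed to handle infinite support and to make the conditional re-sampling step well defined; this technical device should be mentioned. Two smaller deviations: the paper's inner subroutine uses three phases with fresh samples for $\tilde\sigma^2$, and the proof of \Cref{cl:sigmagood} explicitly relies on the resulting conditional independence between $\tilde N$ and the second batch; your two-phase version would reuse samples and needs a different argument to control the dependence. Also, the paper's termination test is the sharp check $\hat N_-=\hat N_+$ for estimated $(q\mp\eps_i/10)$-quantiles, not a generic CDF-plateau criterion; that specific form is what makes the exact-correctness argument upon early termination go through. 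These are repairable issues rather than structural gaps.
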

We split the proof into an upper bound (\Cref{thm:quantile_ub_final}) and a lower bound (\Cref{thm:quantile_lb}).
We first prove the upper bound. We do this by first giving a suboptimal algorithm in \Cref{sec:quantiles_upper} and then in \Cref{sec:quantiles_upper_final}, we give a black-box transformation, that makes it optimal.

We remark that the factor $O\left(1+\frac{\log \left( \delta^{-1} \log \eps^{-1} \right)}{\log \delta^{-1}}\right)$ in \cref{thm:quantile_main} cannot be improved. We prove this in \cref{sec:gap}. 

We observe that via \Cref{thm:io_transfer} we also get approximately instance optimal algorithms for the following problems. 

\begin{theorem}
\label{thm:instance_optimal_median_general}
    Up to a $1 + \frac{\log \left( \delta^{-1} \log \eps^{-1} \right)}{\log \delta^{-1}}$ factor, there is an instance optimal algorithm for estimating a quantile of a (step-mixture) distribution (formally, the problems \stepquantile, \quantile). 
    Moreover, up to the same factor, there is an order-oblivious instance optimal algorithm for estimating a quantile in the I/O estimation model (formally the problem \bquantile).  
\end{theorem}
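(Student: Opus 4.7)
The plan is to derive all three parts of the theorem from the near-instance-optimal mixture-quantile result \Cref{thm:quantile_main}: the two sequential parts (\stepquantile and \quantile) by observing that both are syntactic restrictions of \mquantile, and the I/O part (\bquantile) by invoking the equivalence theorem \Cref{thm:io_transfer}.

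For the sequential claim, \quantile is the restriction of \mquantile to inputs whose sampled ``distribution'' is always a point mass, and \stepquantile is the restriction to distributions whose atoms have probabilities that are integer multiples of $1/B$ for some $B$. Hence \Cref{alg:quantiles_final} is still a correct $(\eps,\delta)$-estimator when run on such inputs, with sample complexity bounded as in \Cref{thm:quantile_main}; this gives the upper half of the competitivity inequality. For the matching lower bound I would revisit the argument inside \Cref{thm:quantile_main}, which, given an input $D$, produces a neighbour $D'$ that is statistically close to $D$ but whose $q$-th quantile differs from that of $D$ by $\Omega(\eps)$. The point to verify is that when $D$ lies in one of the restricted classes the construction can be executed so that $D'$ lies in the same class, possibly after a harmless rounding that shifts the quantile by only $o(\eps)$. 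This is straightforward for \quantile, where the construction essentially reduces to a Bernoulli-indicator perturbation in the style of \Cref{lem:counting_lb_2}, and requires an additional rounding-to-the-$1/B$-grid step in the \stepquantile case. I expect this rounding step, and the verification that it does not blur the separation below $\Theta(\eps)$ or change the relevant parameter $\sigma_q^2$ by more than a constant factor, to be the main technical obstacle.

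For the I/O claim, apply \Cref{thm:io_transfer} to the pair $(\stepquantile,\bquantile)$. Two hypotheses need to be checked. \emph{Consistency} per \Cref{def:consistent} asks that for the empirical distribution $D_n$ of $n$ i.i.d.\ samples from $D$, the value $d_{D_n}(\hat N)$ converges to $d_D(\hat N)$ uniformly in probability over $\hat N$; since the quantile distance is a function of $P_{M(D)}[X\le\hat N]$ and $P_{M(D)}[X<\hat N]$, this reduces to a standard two-sided Glivenko--Cantelli / DKW-type uniform convergence of the empirical CDF of $M(D_n)$ to that of $M(D)$, which is routine. \emph{Boundedness} per \Cref{def:bounded} is immediate from the closed-form expression in \Cref{thm:quantile_main}, where the dependence is polynomial in $1/\eps$ and logarithmic in $1/\delta$, so halving the parameters changes $\mathrm{OPT}$ by at most a factor depending on $(\eps,\delta)$. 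Once both hypotheses are in place, \Cref{thm:io_transfer} delivers the order-oblivious instance-optimal \bquantile algorithm with exactly the same $1+\log(\delta^{-1}\log\eps^{-1})/\log\delta^{-1}$ factor.
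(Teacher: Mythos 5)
Your approach matches the paper's: derive the sequential results from the upper and lower bound theorems (\Cref{thm:quantile_ub_final}, \Cref{thm:quantile_lb}), and the I/O result via \Cref{thm:io_transfer} after checking consistency and boundedness. Your instinct to verify that the lower bound holds for the restricted class (rather than just for \mquantile) is the right one, since restricting the problem enlarges the set of competing correct algorithms and could in principle shrink $\mathrm{OPT}$; this is exactly the subtlety the paper flags in a footnote for the \mean problem. However, the rounding-to-the-$1/B$-grid step you anticipate is not needed. In the proof of \Cref{thm:quantile_lb}, the hard neighbours $D_1$ and $D_2$ are built by reweighting $D$ over its own support: one samples a label $y\sim D_r'$ and then draws $X$ from the conditional $D\mid r(q_{X,U}(N))=y$, so every sampled ``distribution'' $X$ already lies in the support of $D$. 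Consequently, if $D$ is a distribution over $Dist_B(S)$ (the \stepquantile case) or over point masses (the \quantile case), then $D_1,D_2$ are as well, and the lower bound transfers verbatim with no additional grid-snapping or $\sigma_q^2$ perturbation argument. Aside from that phantom obstacle, your proposal is correct and complete.
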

\begin{proof}
    We get the result for \stepquantile from \Cref{thm:quantile_ub_final} and \Cref{thm:quantile_lb}. \quantile is a special case of \stepquantile for $B=1$. Since \stepquantile satisfies the technical conditions of \Cref{def:consistent,def:bounded}, we can apply  \Cref{thm:io_transfer} to get the result in the I/O setting. 
\end{proof}

\subsection{Analysing \Cref{alg:quantiles}, the core subroutine}
\label{sec:quantiles_upper}

In this section, we focus on what we call mixture-continuous distributions. We later remove this assumption using a ``continualization trick''. We say that a distribution is called mixture-continuous if the probability it samples a non-continuous distribution is non-zero.
We now analyze \Cref{alg:quantiles}; specifically, we prove the following theorem:
\begin{theorem}
\label{thm:quantile_ub}
Assume $D$ gives zero probability to any singleton event.
In expectation, \Cref{alg:quantiles} uses $O(1/\eps + \sigma^2_q/\eps^2)$ samples from the input distribution $D$.
Moreover, with constant probability, it outputs $\hN$ which is an $\eps$-correct estimate of the $q$-th quantile of $M(D)$.
\end{theorem}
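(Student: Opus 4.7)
The algorithm is, following the discussion in \Cref{sec:techniques}, the natural two-phase analog of \Cref{alg:counting_ones}: in phase one draw $T_1 = \Theta(1/\eps)$ samples, compute the empirical $q$-th quantile $\tilde N$ of the mixture (breaking ties with fresh $U_i \sim Unif[0,1]$), and set
\[
\tilde\sigma_q^2 \;=\; \frac{1}{T_1-1}\sum_{i=1}^{T_1}\bigl(q_{(X_i,U_i)}(\tilde N)-q\bigr)^2;
\]
then in phase two draw $T_2 = \Theta\bigl(1/\eps + \tilde\sigma_q^2/\eps^2\bigr)$ fresh samples and output their empirical $q$-quantile $\hat N$. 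The structure of the proof mirrors \Cref{thm:mean_estimation_ub}: bound $\E[T_2]$ by unbiasedness, and establish correctness by a Chebyshev bound on the empirical CDF of phase two.

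\textbf{Sample complexity.} By \Cref{lem:variance_of_variance} the statistic $\tilde\sigma_q^2$ is an unbiased estimator of $\Var_{X \sim D}\bigl(q_{(X,U)}(\tilde N)\bigr)$. A short calculation based on the fact that $\tilde N$ is (with constant probability) within $O(1/\sqrt{T_1}) = O(\sqrt{\eps})$ of $N$ in CDF, together with the assumption that $M(D)$ gives zero mass to singletons, will let us compare $q_{(X,U)}(\tilde N)$ and $q_{(X,U)}(N)$ and conclude $\E[\tilde\sigma_q^2] = O(\sigma_q^2 + \eps)$. Combined with the $T_1 = \Theta(1/\eps)$ samples of phase one, this yields the claimed $O(1/\eps + \sigma_q^2/\eps^2)$ bound.

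\textbf{Correctness.} Let $F(x) = q_{M(D)}(x)$ and let $\hat F(x) = \frac{1}{T_2}\sum_{i=T_1+1}^{T_1+T_2} q_{(X_i,U_i)}(x)$ be the empirical CDF from phase two. For each fixed $x$, $\hat F(x)$ is unbiased for $F(x)$ with conditional variance (given $T_2$) equal to $\Var_{X \sim D}\bigl(q_{(X,U)}(x)\bigr)/T_2$; at $x = N$ this equals $\sigma_q^2/T_2$. The goal is to apply Chebyshev to show $\hat F(N) = q \pm O(\eps)$ with constant probability, which by continuity of $M(D)$ forces $\hat N$ to satisfy the $\eps$-quantile guarantee. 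For this we need $T_2 \geq \Omega(\sigma_q^2/\eps^2)$ with constant probability; equivalently, $\tilde\sigma_q^2 \geq \sigma_q^2/2$ when $\sigma_q^2 \geq \eps$. We split into cases exactly as in the proof of \Cref{thm:mean_estimation_ub}: when $\sigma_q^2 \leq \eps$ the additive $1/\eps$ term in $T_2$ already gives $T_2 \geq \sigma_q^2/\eps^2$; when $\sigma_q^2 > \eps$ we apply \Cref{lem:variance_of_variance} to $q_{(X,U)}(\tilde N)$ and use that for bounded random variables the fourth central moment is at most the second, to obtain $\Var[\tilde\sigma_q^2] \leq \eps \cdot \sigma_q^2$, then invoke Chebyshev.

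\textbf{Main obstacle.} The genuinely delicate step is the last one: in the mean-estimation proof, $\tilde\mu$ appears only as a centering constant that can be replaced by $\mu$ with no change in variance, but here $\tilde N$ enters through the \emph{nonlinear} functional $q_{(X,U)}(\tilde N)$, so we cannot directly appeal to \Cref{lem:variance_of_variance} with the true $N$. The fix is to first show that, conditioned on the event $|q_{M(D)}(\tilde N) - q| = O(\sqrt{\eps})$ (which holds with constant probability by DKW/binomial concentration), the random variable $q_{(X,U)}(\tilde N) - q_{(X,U)}(N)$ is supported on an event of probability $O(\sqrt{\eps})$ in $X$, so it contributes only a lower-order term to the variance. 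This reduces the analysis of $\tilde\sigma_q^2$ to the analysis of the empirical variance of $q_{(X,U)}(N)$, to which the mean-estimation argument applies verbatim, giving the required constant-probability lower bound $\tilde\sigma_q^2 \geq \sigma_q^2/2$ and completing the proof.
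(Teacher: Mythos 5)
There is a mismatch between the algorithm you are analyzing and the one the theorem is stated for. \Cref{alg:quantiles} in the paper is \emph{three}-phase: it draws $T_1$ samples to compute $\tilde N$, draws a \emph{fresh} batch of $T_1$ samples to compute $\tilde\sigma^2 = \frac{1}{T_1}\sum_{i=T_1+1}^{2T_1}(q_{X_i}(\tilde N)-q)^2$, and only then draws the $T_2$ samples for $\hat N$. You instead use the phase-one samples both for $\tilde N$ and for $\tilde\sigma_q^2$. The extra batch is exactly what makes the paper's analysis of \Cref{cl:sigmagood} go through: after conditioning on $\tilde N$ (and hence $\tilde q$), the summands $q_{X_i}(\tilde N)$ for the second batch are genuinely i.i.d.\ with mean $\tilde q$ and variance $\sigma_{\tilde q}^2$, which then get related to $\sigma_q^2$ via \Cref{lem:continuity}. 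In your two-phase version, conditioning on $\tilde N$ does \emph{not} make $X_1,\dots,X_{T_1}$ an i.i.d.\ draw from $D$ (they are constrained to have empirical $q$-quantile $\tilde N$), so the step ``to which the mean-estimation argument applies verbatim'' silently discards this dependence. Your ``main obstacle'' paragraph correctly flags that $\tilde N$ enters nonlinearly, but the fix you propose (approximate $q_{(X,U)}(\tilde N)$ by $q_{(X,U)}(N)$ up to an $O(\sqrt{\eps})$-mass correction) only addresses replacing $\tilde N$ by $N$ in the functional; it does not address the fact that the samples in the empirical variance are the same ones that produced $\tilde N$.

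A second, smaller gap: the appeal to \Cref{lem:variance_of_variance} for unbiasedness is misplaced. That lemma concerns the sample variance $\frac{1}{n-1}\sum(Y_i-\hat\mu)^2$ centered at the \emph{sample mean} $\hat\mu$; your statistic $\tilde\sigma_q^2$ is centered at the fixed constant $q$. Even with a fresh i.i.d.\ batch and $\tilde N$ fixed, $\E[\tilde\sigma_q^2 \mid \tilde N] = \sigma_{\tilde q}^2 + (\tilde q - q)^2$, so it is not unbiased for $\Var_{X\sim D}(q_{(X,U)}(\tilde N))$; the paper explicitly carries the bias term $(\tilde q - q)^2$ through \Cref{cl:sigmagood} and bounds it using $|\tilde q - q| = O(\sqrt{\eps}\sigma + \eps)$. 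To repair your proposal, you would either need to switch to the paper's three-phase algorithm (and then the correct-answer plan does align with the paper's via \Cref{cl:Ngood}, \Cref{lem:continuity}, \Cref{cl:sigmagood}), or give a substantially more careful analysis of the reused-sample variant in the spirit of \Cref{sec:variants}, which the paper only carries out for the simpler mean-estimation case.
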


\begin{algorithm}
$T_1 \leftarrow O(1/\eps)$\\
$X_1, \cdots, X_{T_1} \leftarrow$ get $T_1$ samples\\
$\tN \leftarrow $ $q$-th quantile of $M(X_1, \cdots, X_{T_1})$\\
$X_{T_1+1}, \cdots, X_{2T_1} \leftarrow$ get $T_1$ samples\\
$\tsigma^2 \leftarrow \frac{1}{T_1} \sum_{i = T_1+1}^{2T_1} \left( q_{X_i}(\tN) - q \right)^2$\\
$T_2 \leftarrow O\left(1/\eps + \tsigma^2/\eps^2\right)$\\
$X_{2 T_1 + 1}, \cdots, X_{2 T_1 + T_2} \leftarrow$ get $T_2$ samples\\ 
$\hN \leftarrow $ $q$-th quantile of $M(X_{2T_1+1}, \cdots, X_{2T_1 + T_2})$\\
\Return{$\hN$}
\caption{Three-phase Quantile Estimation} \label{alg:quantiles}
\end{algorithm}

We start by proving several lemmas.
\begin{lemma}
\label{lem:continuity}
Let the input distribution $D$ be mixture-continuous.
For any $q, \beta > 0$ with $q+\beta \le 1$, we have $|\sigma_{q + \beta}^2 - \sigma_{q}^2| \le 5 \beta$. 
\end{lemma}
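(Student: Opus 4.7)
The plan is to reduce the lemma to a direct algebraic manipulation, exploiting the fact that in the mixture-continuous regime, the mixture $M(D)$ is itself a continuous distribution, so that the $q$-th quantile $N_q := Q_{M(D)}(q)$ satisfies $\E_{X \sim D}[q_X(N_q)] = q$ exactly (and the $U$ coordinate in the definition of $\sigma_q^2$ plays no role). Throughout, write $Z_q := q_X(N_q)$, so that $\sigma_q^2 = \Var_X(Z_q) = \E[Z_q^2] - q^2$.

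Next, I would compare the two values by coupling. Since $q+\beta \ge q$, we have $N_{q+\beta} \ge N_q$, and by monotonicity of the CDF in its argument, $Z_{q+\beta} \ge Z_q$ for every realization of $X$. Define the nonnegative increment
\[
\Delta_X := Z_{q+\beta} - Z_q \in [0,1].
\]
Taking expectations and using Fubini (again via continuity of $M(D)$), $\E[\Delta_X] = (q+\beta) - q = \beta$.

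Now expand the difference:
\[
\sigma_{q+\beta}^2 - \sigma_q^2 = \E[(Z_q + \Delta_X)^2] - \E[Z_q^2] - \bigl((q+\beta)^2 - q^2\bigr) = 2\,\E[Z_q \Delta_X] + \E[\Delta_X^2] - 2q\beta - \beta^2.
\]
Each of the four pieces is controlled by a single moment of $\Delta_X$: since $Z_q, \Delta_X \in [0,1]$, we have $\E[Z_q \Delta_X] \le \E[\Delta_X] = \beta$ and $\E[\Delta_X^2] \le \E[\Delta_X] = \beta$, while trivially $\E[Z_q \Delta_X], \E[\Delta_X^2] \ge 0$. Combining,
\[
-2q\beta - \beta^2 \;\le\; \sigma_{q+\beta}^2 - \sigma_q^2 \;\le\; 2\beta + \beta,
\]
so $|\sigma_{q+\beta}^2 - \sigma_q^2| \le 3\beta \le 5\beta$, as claimed.

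There is no real obstacle beyond being careful about two definitional points: first, that in the mixture-continuous setting $N$ can be identified with $N_q = Q_{M(D)}(q)$ and $q_{(X,U)}(N)$ with $q_X(N_q)$ (so the continuifying coordinate $U$ is irrelevant); and second, that continuity of $M(D)$ is what gives the clean identity $\E[Z_q] = q$ used to write $\sigma_q^2 = \E[Z_q^2] - q^2$. Both are immediate consequences of the hypothesis that $X \sim D$ is almost surely a continuous distribution.
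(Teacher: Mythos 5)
Your proof is correct and takes essentially the same approach as the paper's: both introduce the nonnegative increment $\Delta_X = q_X(N_{q+\beta}) - q_X(N_q)$ with $\E[\Delta_X] = \beta$ (the paper's $\delta(X)$) and expand the variance difference into the same algebraic terms, the paper via $\E[((Z_q-q)+(\Delta_X-\beta))^2]$ and you via $\E[(Z_q+\Delta_X)^2] - (q+\beta)^2$. Your bounding of the pieces is marginally tighter (yielding $3\beta$ rather than $5\beta$), and your explicit remark that the continuifying $U$ coordinate is irrelevant in the mixture-continuous case makes precise a step the paper leaves implicit, but these are cosmetic improvements over the same argument.
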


\begin{proof}
For any $q'$, define $N_{q'} = Q_{M(D)}(q')$. 
For every distribution $X \in Dist$, we define $\delta(X) := q_{X}(N_{q+\beta}) - q_X(N_q)$, i.e., the probability mass $X$ between the quantiles $q$ and $q+\beta$ of $M(D)$. 
Note that we have by definition
\begin{equation}
    \label{eq:rem}
    \E[\delta(X)] = \beta
\end{equation}
where we have used the assumption of $D$ being mixture-continuous, which implies that $M(D)$ is continuous. We can then write
\begin{align} \label{eq:sigma-q+eps}
    \sigma_{q+\beta}^2
    &= \E_{X \sim D}[((q_X(N_{q+\beta}) - (q+\beta))^2]\\
    &= \E_{X \sim D}[((q_X(N_{q}) - q + \delta(X) - q)^2]\\
    &= \sigma_q^2  + \E_{X \sim D}[(\delta(X) - \beta)^2] + 2 \E_{X \sim D}[(q_X(N_q) - q)(\delta(X) - \beta)]
\end{align}
In view of \Cref{eq:rem}, the middle term is nonnegative, and is maximized if with probability $\beta$, we have $\delta(X) = 1$ and otherwise $\delta(X) = 0$; thus we conclude that
\begin{align}\label{eq:middle-term}
\begin{split} 
    0 &\le \E_{X \sim D}[(\delta(X) - \beta)^2]
    \le \beta (1-\beta)^2 + (1-\beta) \beta^2
    \le \beta. 
    \end{split}
\end{align}
We continue with the last term. By \Cref{eq:rem}, and since $|q_X(N_q)-q| \le 1$, we have
\begin{align}
& \left|2 \E_{X \sim D}[(q_X(N_q) - q)(\delta(X) - \beta)]\right| \leq 2 \E_{X \sim D}[|\delta(X) - \beta|] \leq 2 \E_{X \sim D}[\delta(X) + \beta] = 4\beta\label{eq:last-term}
%
%
\end{align}
By combining Equations \eqref{eq:sigma-q+eps}, \eqref{eq:middle-term}, and \eqref{eq:last-term}, the proof is complete.
\end{proof}

\begin{claim}
\label{cl:Ngood}
Let $D \in Dist(S)$ be a mixture-continuous distribution and let $\tq = q_{M(D)}(\tN)$. Then, with probability $0.99$,
\[
|\tq -q| \leq O\left(\frac{\sigma_q}{\sqrt{T_1}}+\frac{1}{T_1}\right) = O\left(\sigma_q \cdot \sqrt{\eps} + \eps\right).
\]
Moreover, $\E[||\tilde{q}-q|] \leq O\left(\sigma_q \cdot \sqrt{\eps} + \eps\right)$.
\end{claim}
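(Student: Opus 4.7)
The plan is to reduce the event $|\tilde q - q| > \Delta$ to two one-sided tail events on the empirical mixture CDF evaluated at fixed points. Let $p(t) := P_{Y \sim M(D)}[Y \leq t]$, $\hat p(t) := \frac{1}{T_1} \sum_{i=1}^{T_1} q^+_{X_i}(t)$, and $N_{q'} := Q_{M(D)}(q')$. Since $\tilde N = \inf\{t : \hat p(t) \geq q\}$, the event $\tilde N > N_{q-\Delta}$ is equivalent to $\hat p(N_{q-\Delta}) < q$, and $\tilde N \leq N_{q+\Delta}$ to $\hat p(N_{q+\Delta}) \geq q$. Mixture-continuity makes $p$ continuous, so $p(N_{q \pm \Delta}) = q \pm \Delta$; hence $\tilde N \in [N_{q-\Delta}, N_{q+\Delta}]$ forces $\tilde q = p(\tilde N) \in [q-\Delta, q+\Delta]$.

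To apply concentration to $\hat p(N_{q'})$ for $q' = q \pm \Delta$, observe that each summand $q^+_{X_i}(N_{q'})$ is a $[0,1]$-valued random variable with mean $q'$ and variance equal to $\sigma_{q'}^2$: mixture-continuity forces $P_{X \sim Y}[X = N_{q'}] = 0$ for $D$-almost every $Y$, which collapses $q^+_X$, $q^-_X$, $q_X$, and the dithered version into the same quantity. \Cref{lem:continuity} then gives $\sigma_{q \pm \Delta}^2 \leq \sigma_q^2 + 5\Delta$.

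For the high-probability statement I would apply Chebyshev's inequality to obtain
\[
P[|\hat p(N_{q'}) - q'| \geq \Delta] \leq \frac{\sigma_q^2 + 5\Delta}{T_1 \Delta^2} = \frac{\sigma_q^2}{T_1 \Delta^2} + \frac{5}{T_1 \Delta}.
\]
Setting $\Delta = C(\sigma_q/\sqrt{T_1} + 1/T_1)$ for a sufficiently large constant $C$ bounds the first summand by $1/C^2$ and the second by $5/C$; a union bound over the two deviations at $q' = q \pm \Delta$ yields failure probability $\leq 0.01$, and $T_1 = \Theta(1/\eps)$ turns $\Delta$ into $O(\sigma_q \sqrt{\eps} + \eps)$.

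For the expectation bound, Chebyshev alone would introduce a spurious $\log(1/\eps)$ factor through the $1/(T_1 t)$ term, so I would instead invoke Bernstein's inequality (\Cref{lem:bernstein}) to get
\[
P[|\tilde q - q| > t] \leq 2 \exp\!\left(-\Omega\!\left(\min\!\left(T_1 t,\ \frac{T_1 t^2}{\sigma_q^2 + 5t}\right)\right)\right).
\]
I would then write $\E|\tilde q - q| = \int_0^1 P[|\tilde q - q| > t]\,dt$, bound the contribution on $[0, \Delta_0]$ by $\Delta_0 := C(\sigma_q/\sqrt{T_1} + 1/T_1)$, and on $[\Delta_0, 1]$ split at $t = \sigma_q^2$: in the Gaussian regime $t \leq \sigma_q^2$ the exponent is $\Omega(T_1 t^2/\sigma_q^2)$, producing a contribution of order $\sigma_q/\sqrt{T_1}$ via a standard Gaussian-tail integral, while in the exponential regime $t > \sigma_q^2$ the exponent is $\Omega(T_1 t)$ producing a contribution of order $1/T_1$. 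Both are $O(\Delta_0) = O(\sigma_q \sqrt{\eps} + \eps)$. The main obstacles will be the variance-matching step of the second paragraph, where we must identify the true variance of $q^+_{X_i}(N_{q'})$ with $\sigma_{q'}^2$ despite individual $X_i$'s potentially being discrete, and the regime-by-regime tail integration that avoids the logarithmic loss in the expectation bound.
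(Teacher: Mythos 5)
Your proposal is correct and follows essentially the same route as the paper's proof: reduce the bad event $\{\tilde q < q - \Delta\}$ (and its mirror) to a one-sided deviation of the empirical mixture CDF at the fixed point $N_{q\pm\Delta}$, use \Cref{lem:continuity} to bound the variance $\sigma_{q\pm\Delta}^2 \leq \sigma_q^2 + 5\Delta$, apply a concentration inequality, and integrate the tail to get the expectation bound. The only cosmetic differences are that the paper invokes Bernstein (\Cref{lem:bernstein}) already for the constant-probability claim where you use Chebyshev, and the paper sums the Bernstein tail over dyadic scales $K$ where you split the tail integral into sub-Gaussian and sub-exponential regimes; both variants give the same $O(\sigma_q/\sqrt{T_1} + 1/T_1)$ bound, and your explicit note that mixture-continuity collapses $q_X^\pm$, $q_X$, and the dithered $q_{(X,U)}$ is a point the paper leaves implicit.
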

\begin{proof}
For any $\beta > 0$, we will bound the probability of $\tq < q - \beta$ (the same argument can then be done for the probability of $\tq < q + \beta$). We will then determine how to appropriately set $\beta$. We will also prove bounds on the tail probabilities, which we will use in order to bound the expectation $\E[|\tilde{q}-q|]$.

Note that $\tq < q - \beta$ implies that
\begin{align*}
\sum_{i = 1}^{T_1} q_{X_i}(N_{q-\beta}) \geq q T_1
\end{align*}
Because every $X_i$ is drawn independently, the left-hand side is the sum of $T_1$ independent variables, each with mean $q-\beta$ and variance $\sigma_{q-\beta}^2 \le \sigma_q^2+5 \beta$ (by \Cref{lem:continuity}), and bounded between $0$ and $1$. Therefore, by Bernstein's inequality, 
\begin{align*}
  \iffocs & \fi  P\left(\sum_{i = 1}^{T_1} q_{X_i}(N_{q-\beta}) \ge qT_1\right)
   \iffocs \\ \fi&= P\left(\sum_{i = 1}^{T_1} q_{X_i}(N_{q-\beta}) - T_1 (q-\beta) \ge T_1 \beta\right) \\
    &\le \exp\left(- \Omega\left( \min\left(T_1 \beta, \frac{T_1 \beta^2}{\sigma_{q-\beta}^2} \right)\right) \right) \\
    &\le \exp\left(- \Omega\left( \min\left(T_1 \beta, \frac{T_1 \beta^2}{\sigma^2+5\beta} \right)\right) \right) \\
    &= \exp\left(- \Omega\left( \min\left(T_1 \beta, \frac{T_1 \beta^2}{\sigma^2} \right)\right) \right).
\end{align*}

    Hence, if we set $\beta = K (\sigma/\sqrt{T_1} + 1/T_1)$, for any parameter $K \ge 1$, this implies that the probability that $\tilde{q} < q-\beta$ is at most $\exp\left(-\Omega(K)\right)$.
  Setting $K=1$, we get the first half of the claim since $T_1 = \Theta(1/\eps)$.

  We now use the above tail bounds in order bound the expectation using a standard calculation:
  {\iffocs\small\fi
  \begin{align*}
  &\E[|\tilde{q} - q|] \\&= \int_0^\infty P[|\tilde{q} - q| \geq z]\,dz \\&\leq \sum_{K=0}^\infty (\sigma/\sqrt{T_1} + 1/T_1) P[|\tilde{q} - q| \geq  K (\sigma/\sqrt{T_1} + 1/T_1)] \\&\leq  (\sigma/\sqrt{T_1} + 1/T_1) \sum_{K=0}^\infty  \exp(-\Omega(K)) \\&= O(\sigma/\sqrt{T_1} + 1/T_1) \,.
  \end{align*}
  }
  Again, since $T_1 = \Theta(1/\eps)$, this proves the claim.


\end{proof}

\begin{claim}
\label{cl:sigmagood}
It holds
\begin{align*}
    \tsigma^2 = \sigma_q^2 \pm O(\sqrt{\eps} \sigma_q + \eps)
\end{align*}
with probability at least $0.98$. 
\end{claim}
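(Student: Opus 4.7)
The plan is to decompose $\tilde{\sigma}^2$ around the ``target'' quantile $\tilde q := q_{M(D)}(\tilde N)$ rather than around $q$, which lets us exploit that the second batch of samples $X_{T_1+1},\dots,X_{2T_1}$ is independent of $\tilde N$. Using the identity $(a-q)^2=(a-\tilde q)^2+2(\tilde q - q)(a-\tilde q)+(\tilde q - q)^2$ with $a = q_{X_i}(\tilde N)$, I would write $\tilde{\sigma}^2 = A + B + C$ with
$A = \tfrac{1}{T_1}\sum_i (q_{X_i}(\tilde N)-\tilde q)^2$,
$B = 2(\tilde q - q)\cdot\tfrac{1}{T_1}\sum_i (q_{X_i}(\tilde N)-\tilde q)$,
$C = (\tilde q - q)^2$.
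Throughout, I will condition on the event from Claim \ref{cl:Ngood} (probability at least $0.99$) that $|\tilde q - q| = O(\sigma_q\sqrt{\eps}+\eps)$, and use an independent concentration step on the second batch (another $0.99$ event), so a union bound will yield the claimed $0.98$ success probability.

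The terms $B$ and $C$ are handled directly. For $C$ I would just square the bound from Claim \ref{cl:Ngood}; since $\sigma_q,\eps\le 1$, the resulting $O(\sigma_q^2\eps + \eps^2)$ is dominated by $O(\sigma_q\sqrt{\eps}+\eps)$. For $B$, bounding the inner average trivially by $1$ (each $q_{X_i}(\tilde N)-\tilde q \in [-1,1]$) gives $|B| \le 2|\tilde q - q| = O(\sigma_q\sqrt{\eps}+\eps)$.

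The main work is on $A$. Conditional on $\tilde N$, each summand is an i.i.d.\ copy of $(q_X(\tilde N)-\tilde q)^2$ with mean $\sigma_{\tilde q}^2 := \Var_{X\sim D}(q_X(\tilde N))$ and, because $|q_X(\tilde N)-\tilde q|\le 1$, fourth moment bounded by the second moment, so variance at most $\sigma_{\tilde q}^2$. Chebyshev's inequality therefore gives $|A - \sigma_{\tilde q}^2| = O(\sigma_{\tilde q}/\sqrt{T_1}) = O(\sigma_{\tilde q}\sqrt{\eps})$ with probability $0.99$. Finally, Lemma \ref{lem:continuity} (applied in the direction $q\leftrightarrow\tilde q$, using mixture-continuity so that $Q_{M(D)}(\tilde q)=\tilde N$) yields $|\sigma_{\tilde q}^2-\sigma_q^2|\le 5|\tilde q - q| = O(\sigma_q\sqrt{\eps}+\eps)$, which in particular bounds $\sigma_{\tilde q} \le \sigma_q + O(\sqrt{\sigma_q\sqrt{\eps}+\eps})$. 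To convert the $\sigma_{\tilde q}\sqrt{\eps}$ bound to the desired form, I would use AM-GM in the form $\sqrt{\sigma_q\sqrt{\eps}\cdot\eps} \le (\sigma_q\sqrt{\eps}+\eps)/2$ to absorb the cross term $\sqrt{\sigma_q}\,\eps^{3/4}$. Combining the three estimates and union-bounding over the two independent good events completes the argument; the mildly delicate step I expect to flag in the write-up is precisely the last AM-GM conversion, since it is the only place where the shape of the bound $\sigma_q\sqrt{\eps}+\eps$ (rather than something cleaner) is really exploited.
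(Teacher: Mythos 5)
Your proof is correct and rests on the same three pillars as the paper's: conditioning on the good event from Claim~\ref{cl:Ngood}, a conditional Chebyshev bound on the fresh batch used for $\tsigma^2$, and Lemma~\ref{lem:continuity} to pass between $\sigma_{\tilde q}^2$ and $\sigma_q^2$. The bookkeeping differs in one place that is worth noting. The paper does not decompose $(q_{X_i}(\tilde N)-q)^2$ around $\tilde q$; it treats $Y:=(q_X(\tilde N)-q)^2$ directly, computes $\E[Y\mid\tilde N]=\sigma_{\tilde q}^2+(\tilde q-q)^2$, uses $\Var[Y]\le\E[Y]$ (because $Y\in[0,1]$), and then applies Chebyshev at the \emph{target} scale $t=\Theta(\sqrt{\eps}\,\sigma_q+\eps)$. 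Because $(\sqrt{\eps}\,\sigma_q+\eps)^2\ge\eps(\sigma_q^2+\eps)$, the variance $\eps\cdot O(\sigma_q^2+\eps)$ in the numerator cancels against the denominator with no further manipulation, so the AM-GM conversion from $\sigma_{\tilde q}\sqrt{\eps}$ to $\sigma_q\sqrt{\eps}+\eps$ that you correctly flag as the delicate step simply never arises. Your $A/B/C$ decomposition is more explicit about where the bias from $\tilde q\ne q$ enters, at the cost of that one extra absorption step; the paper's version is marginally slicker. Both are sound, and you correctly identify the one mildly nontrivial identification ($Q_{M(D)}(\tilde q)=\tilde N$, via mixture-continuity) that the paper also tacitly uses when it writes $\Var_{X\sim D}(q_X(\tilde N))=\sigma_{\tilde q}^2$.
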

\begin{proof}
We condition on a fixed $\tilde{N}$ and $\tilde{q}$, and assume $\tq \in q \pm O(\sqrt{\eps}\sigma+\eps)$. Note that by \Cref{cl:Ngood}, this happens with probability $0.99$.

Conditioned on $\tilde{q}$, we have for $X \sim D$ that $q_X(\tN)$ is a random variable with mean $\tilde{q}$ and variance $\sigma_{\tilde{q}}^2$. Moreover, if we write $Y = (q_{X}(\tN)-q)^2$, note that $Y$ is bounded between $0$ and $1$, and thus has variance at most $\E[Y^2] \le \E[Y].$ We thus have 
\[\Var[Y] \le \E[Y] = \E[(q_X(\tN)-q)^2] = \sigma_{\tilde{q}}^2 + (\tilde{q}-q)^2.\]
Under the assumption that $\tilde{q} = q \pm O(\sqrt{\eps} \sigma + \eps)$ and by \Cref{lem:continuity}, we have that $0 \le (\tilde{q}-q)^2 \le |\tilde{q}-q| \le O(\sqrt{\eps} \sigma + \eps),$ and $|\sigma_{\tilde{q}}^2-\sigma^2| \le 5 |\tilde{q}-q| \le O(\sqrt{\eps} \sigma + \eps)$. Therefore, we may bound
\[\sigma^2 - O(\sqrt{\eps} \sigma + \eps) \le \sigma_{\tilde{q}}^2 + (\tilde{q}-q)^2 \le \sigma^2 + O(\sqrt{\eps} \sigma + \eps).\]

In summary, under our assumption that $\tilde{q} = q \pm O(\sqrt{\eps} \sigma + \eps)$, we have that $\E[(q_X(\tN)-q)^2|\tilde{N}] = \sigma^2 \pm O(\sqrt{\eps} \sigma+\eps)$, and $Var[(q_X(\tN)-q)^2|\tilde{N}] \le \sigma^2+O(\sqrt{\eps} \sigma + \eps) \le O(\sigma^2+\eps)$.
At the same time, $\tilde{\sigma}^2$ is the average of $T_1 = c/\eps$, for some constant $c$, conditionally independent (on $\tilde{N}$) copies of $(\tilde{q}(B)-q)^2$.
By the Chebyshev inequality, we then have
{\iffocs \small \fi
\begin{align*}
 &P\left(\left|\tilde{\sigma}^2-\sigma_q^2 \right| \ge O(\sqrt{\eps} \sigma + \eps) | \tilde{N}\right) \\ &\leq P\left(\left|\tilde{\sigma}^2-\E[\tilde{\sigma}^2|\tilde{N}] \right| \ge O(\sqrt{\eps} \sigma + \eps)-O(\sqrt{\eps} \sigma + \eps)|\tilde{N}\right) \\&\leq \frac{c^{-1} \eps \cdot O(\sigma^2 + \eps)}{O(\sqrt{\eps}\sigma + \eps)^2} \leq 0.01
\end{align*}
}
where the last inequality holds for the constant $c$ being sufficiently large. Adding the probability $0.01$ that $\tq \not\in q \pm O(\sqrt{\eps}\sigma+\eps)$, we get exactly the desired claim.
%
%
\end{proof}

\begin{proof}[Proof of \Cref{thm:quantile_ub}.]
First, we bound the expected number of samples, which is $2T_1+\E[T_2]$. We have $T_1=O(1/\eps)$, and $\E[T_2] = O(1/\eps + \E[\tilde{\sigma}^2]/\eps^2).$ Moreover, since we used fresh samples to determine $\tilde{\sigma}^2$ in line 6 of \Cref{alg:quantiles}, if we condition on $\tilde{q}$, we have that $\E[\tilde{\sigma}^2|\tilde{q}] = \sigma_{\tilde{q}}^2$. Thus, by combining \Cref{lem:continuity} and \Cref{cl:sigmagood}, we have 
\begin{align*}E[\tilde{\sigma}^2|\tilde{q}] &\le \sigma^2+O(\E[|\tilde{q}-q|]) \le O(\sigma^2 + \sigma \sqrt{\eps} + \eps) \le O(\sigma^2+\eps).\end{align*}
Thus, $\E[T_2] \le O(1/\eps + \sigma^2/\eps)$, which proves the desired bound on the number of samples.

Next, we verify the accuracy.
Let us have a constant $C$ large enough.
If $\sigma^2 \le C \eps$, then $T_2 \ge \frac{C^2}{\eps} \ge \frac{C^2}{\eps} \cdot \frac{\sigma^2}{C  \eps} = \frac{C}{K^2} \cdot \frac{\sigma^2}{\eps^2}$. Otherwise, by \Cref{cl:sigmagood}, we have that with probability $0.98$, $\tilde{\sigma}^2 \ge \sigma^2 - O(\sqrt{\eps} \sigma + \eps) \ge \sigma^2/2,$ as long as $C$ is sufficiently large. This means that
\begin{align*}
    T_2 \ge C \cdot \frac{\sigma^2/2}{\eps^2} \ge \frac{C}{2} \cdot \frac{\sigma^2}{\eps^2}.
\end{align*}
Thus, in either case, we conclude that $T_2 \ge \max\left(\frac{C}{\eps}, \frac{C}{2} \cdot \frac{\sigma^2}{\eps^2}\right)$. 
Then we can apply \Cref{cl:Ngood} again, replacing $T_1$ with $T_2$.
Specifically, because we are using fresh samples, \Cref{cl:Ngood} implies that $\hat{q}$, the quantile of $\hat{N}$, must satisfy 
\[\hat{q} = q \pm O\left(\frac{\sigma}{\sqrt{T_2}}+\frac{1}{T_2}\right) = q \pm \eps\]
with probability $0.98$, where the last equality holds for large enough value of $C$.
\end{proof}

\subsection{Getting an instance-optimal algorithm}
\label{sec:quantiles_upper_final}

\begin{algorithm}
\For{$\eps_i = 2^{-1}, 2^{-2}, \cdots, 2^{-\lceil \log(\eps^{-1}) \rceil}$}{
$(\hat{N}_-, U_-) \leftarrow$ execute \Cref{alg:quantiles}, for $q= q-\eps_i/10$ with accuracy $\eps_i/100$ with probability $1-0.1 \delta / \log \eps^{-1}$ on distribution $D \otimes Unif[0,1]$\\
$(\hat{N}_+, U_+) \leftarrow$ execute \Cref{alg:quantiles}, for $q= q+\eps_i/10$ with accuracy $\eps_i/100$ with probability $1-0.1 \delta / \log \eps^{-1}$ on distribution $D \otimes Unif[0,1]$\\
\If{$\hat{N}_- = \hat{N}_+$}{
\Return{$\hat{N}_-$}
}
}
\Return{$\hat{N}_-$}
\caption{Instance-optimal Quantile Estimation} \label{alg:quantiles_final}
\end{algorithm}
Note that we can easily perform samples from $D \otimes Unif[0,1]$ by always sampling from $X \sim D$ and $U \sim Unif[0,1]$ and returning $(X,U)$.

\begin{theorem}
\label{thm:quantile_ub_final}
Let $\bar{\eps} = \max(\eps, \min(d^-, d^+))$. 
In expectation, \Cref{alg:quantiles_final} uses $$O\left((1/\bar{\eps} + \sigma^2/\bar{\eps}^2) \cdot  \log \frac{\log \eps^{-1}}{\delta} \right)$$ samples.
With probability $1 - \delta$, the algorithm outputs $\hN$ that is an additive $\eps$-approximation to the quantile $q$.
\end{theorem}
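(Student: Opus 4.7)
The plan is to prove correctness and expected sample complexity separately, both resting on the observation that the continualization trick (using $D \otimes \mathrm{Unif}[0,1]$) forces the estimates at two nearby quantile levels to share a common first coordinate exactly when those levels fall strictly inside the atom of $M(D)$ at the true quantile $N = Q_{M(D)}(q)$.

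For correctness, I would first note that each of the at most $2\lceil\log\eps^{-1}\rceil$ calls to \Cref{alg:quantiles} fails with probability at most $0.1\delta/\log\eps^{-1}$, so a union bound gives an overall failure probability at most $0.2\delta < \delta$; I condition on all calls being accurate. If the loop returns at iteration $i$ because $\hat{N}_- = \hat{N}_+ = x$ (comparing first coordinates), the accuracy of the two calls gives $q_{M(D) \otimes \mathrm{Unif}}((x, U_-)) \in [q - \eps_i/10 - \eps_i/100,\, q - \eps_i/10 + \eps_i/100]$ and analogously for $(x, U_+)$ around $q + \eps_i/10$. For any fixed $x$, under lexicographic ordering the map $u \mapsto q_{M(D) \otimes \mathrm{Unif}}((x, u)) = q^-_{M(D)}(x) + u(q^+_{M(D)}(x) - q^-_{M(D)}(x))$ takes values in $[q^-_{M(D)}(x), q^+_{M(D)}(x)]$, so combining this with the two containments yields $q^-_{M(D)}(x) < q < q^+_{M(D)}(x)$, meaning $x$ is an \emph{exact} $q$-th quantile of $M(D)$ and beats the required $\eps$-accuracy. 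When the loop instead completes without a match, the one-sided analog of this calculation applied at the final iteration (where $\eps_i \leq \eps$) gives $\eps$-accuracy.

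For the sample complexity, I would define $i^*$ as the smallest index with $\eps_{i^*}/10 + \eps_{i^*}/100 < \min(d^-, d^+)$, setting $i^* := \lceil\log\eps^{-1}\rceil$ if no such index exists; since the $\eps_i$'s halve, $\eps_{i^*} = \Theta(\bar\eps)$. The crucial claim is that conditional on all calls at iterations $1, \ldots, i^*$ being accurate, the algorithm terminates by iteration $i^*$. The reason is that in $M(D) \otimes \mathrm{Unif}[0,1]$ the atom of $M(D)$ at $N$ has been smeared into $\{N\} \times [0,1]$; under lexicographic ordering, any tuple with first coordinate $\neq N$ has $q_{M(D) \otimes \mathrm{Unif}}$-value either at most $q - d^-$ or at least $q + d^+$, whereas $\eps_{i^*}/10 + \eps_{i^*}/100 < \min(d^-, d^+)$ places the accuracy interval for both the $(q - \eps_{i^*}/10)$-th and the $(q + \eps_{i^*}/10)$-th quantile estimate strictly inside $(q - d^-, q + d^+)$. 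Hence any accurate estimate must have first coordinate exactly $N$, so $\hat{N}_- = \hat{N}_+ = N$ and the algorithm returns.

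To bound expected samples, I would combine \Cref{thm:quantile_ub} with \Cref{lem:prob_amplification} to conclude each call at iteration $i$ uses $O((1/\eps_i + \sigma_{q \pm \eps_i/10}^2/\eps_i^2) \log(\log\eps^{-1}/\delta))$ samples in expectation; by \Cref{lem:continuity}, $\sigma_{q \pm \eps_i/10}^2 = \sigma_q^2 + O(\eps_i)$, with the correction absorbed into the $O(1/\eps_i)$ term. Summing over iterations $1, \ldots, i^*$ using the geometric structure of the $\eps_i$'s yields $O((1/\bar\eps + \sigma_q^2/\bar\eps^2) \log(\log\eps^{-1}/\delta))$. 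For iterations $i > i^*$, samples across distinct iterations are independent, so reaching iteration $i^* + k$ requires at least one failed call at each of iterations $i^*, \ldots, i^* + k - 1$; hence $P[\text{reach iteration } i^* + k] \leq (0.2\delta/\log\eps^{-1})^k$, which decays fast enough to dominate the $O(4^k)$ growth in per-iteration sample complexity, so the tail contribution is absorbed. The main technical obstacle will be the continualization argument in the early-termination step, namely establishing that an $\eps_{i^*}/100$-accurate quantile estimate of $M(D) \otimes \mathrm{Unif}[0,1]$ must have first coordinate equal to $N$ exactly; everything else reduces to careful bookkeeping of geometric sums.
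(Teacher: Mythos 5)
Your proposal is correct and follows essentially the same structure as the paper's proof: union bound for correctness, define $i^*$ via $\min(d^-,d^+)$, show the algorithm terminates by round $i^*$ conditional on accuracy, and control the tail with a geometric sum. You are somewhat more explicit than the paper about \emph{why} the continualization forces $\hat N_-=\hat N_+=N_1$ at round $i^*$ (the decomposition $q_{M(D)\otimes\mathrm{Unif}}((x,u))=q^-_{M(D)}(x)+u\bigl(q^+_{M(D)}(x)-q^-_{M(D)}(x)\bigr)$ and the interval-containment argument), and your tail bound $P[\text{reach iter }i^*+k]\le(0.2\delta/\log\eps^{-1})^k$ is cleaner than the paper's; aside from a harmless $d^-/d^+$ label swap (which cancels since only $\min(d^-,d^+)$ enters), the reasoning matches.
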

\begin{proof}
We start by proving correctness.
By union bound, all executions of \Cref{alg:quantiles} will give a correct answer with probability $1 - \delta/5$.
We condition on this being the case. If we return in the last possible round (i.e. $\eps_i = 2^{-\lceil \log \eps^{-1} \rceil} < \eps$), the error is at most $\eps_i/10+\eps_i/100 \leq \eps/10+\eps/100 < \eps$. If we return earlier, then it is because $\hat{N}_- = \hat{N}_+$. At the same time, for $N = Q_{M(D)}(q)$, we have that $\hat{N}_- \leq N \leq \hat{N}_+$. Since the first and last of these values are equal, we get that they are both equal to $N$, meaning that we return a correct answer.

It remains to prove the expected complexity. Let $i^*$ be the first round such that it holds $\eps_i \leq \min(d^-,d^+)$. We now claim that in any round later than $i^*$, we finish with probability $1-0.2 \delta / \log \eps^{-1} \geq 4/5$. This will then allow us to bound the expected sample complexity.

Consider a round in which $\eps_i \leq \min(d^-,d^+)$. Assuming we get a correct answer, the error in $\hat{N}_-$ is $\leq \eps_i/10 + \eps_i/100 < \eps_i \leq d^-$, meaning that $\hat{N}_- = N$. Similarly, we get that $\hat{N}_+ = N$. But this means that $\hat{N}_- = \hat{N}_+$, meaning that we indeed finish in the given round. The probability that the estimates $\hat{N}_-,\hat{N}_+$ are both $\eps_i/100$-correct is then $1-0.2 \delta / \log \eps^{-1} \geq 4/5$ as we wanted.

We are now ready to bound the expected complexity. Let $\tau$ be the iteration in which we return. The expected sample complexity is then equal to 
\begin{align*}
&\sum_{i=1}^{\lceil \log \eps^{-1} \rceil} P[\tau = i] \cdot O\left((2^i + \sigma^2 4^i) \cdot \log \frac{\log \eps^{-1}}{\delta}\right) \\
&\leq \left(\sum_{i=1}^{i^*} O(2^i + \sigma^2 4^i) + \sum_{\mathclap{i=i^*+1}}^{\mathclap{\lceil \log \eps^{-1} \rceil}} 10^{i - i^*} O(2^i + \sigma^2 4^i)\right) \cdot \log \frac{\log \eps^{-1}}{\delta} 
\\&= O\left((2^{i^*} + \sigma^2 4^{i^*})\log \frac{\log \eps^{-1}}{\delta}\right) \\&= O\left((1/\bar{\eps} + \sigma^2/\bar{\eps}^2)\log \frac{\log \eps^{-1}}{\delta}\right). 
\end{align*}
\end{proof}

\subsection{Lower bound}
\label{sec:quantiles_lower}

Here, we show that our quantile sequential estimator is instance-optimal.

\begin{theorem}
\label{thm:quantile_lb}
Let $D, q, \eps, \delta$ be an instance to the quantile estimation problem and $\sigma = \sigma_q(D)$. Let $\bar{\eps} = \max(\eps, \min(d^-,d^+))$. Any correct algorithm needs at least
\begin{align*}
    LB = \Omega\left( \left( \frac{1}{\bar{\eps}} + \frac{\sigma^2}{\bar{\eps}^2} \right) \cdot \log \frac{1}{\delta} \right)
\end{align*}
samples from $D$. 
\end{theorem}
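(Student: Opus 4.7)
I would prove the two claimed lower bounds, $\Omega(\bar{\eps}^{-1} \log \delta^{-1})$ and $\Omega(\sigma_q^2 \bar{\eps}^{-2} \log \delta^{-1})$, separately by a black-box reduction to the mean estimation lower bound \Cref{lem:counting_lb_2}. For each, the plan is to exhibit two alternative mixture distributions $D_+, D_-$ over $Dist(S)$ with the property that (i) distinguishing $D_\pm$ from $D$ with success $1-\delta$ requires $\Omega(LB)$ samples, and (ii) the sets of $\eps$-correct answers to the quantile problem on $M(D_+)$ and $M(D_-)$ are disjoint; since any $(\eps,\delta)$-correct sequential estimator would have to give different outputs on $D_+$ and $D_-$ with probability $1-2\delta$, it would in particular distinguish them, yielding the lower bound.

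Setup via continualization. Let $N = Q_{M(D)}(q)$. When $M(D)$ has a point mass at $N$, I would first pass to the continualized family $\tilde{D} := D \otimes Unif[0,1]$: each sample $X \sim D$ is replaced by the product distribution $X \otimes Unif[0,1]$ on $S \times [0,1]$ ordered lexicographically, so $M(\tilde{D})$ is continuous, its $q$-th quantile is a pair $(N, u^\ast)$, and for $\bar{X} \sim \tilde{D}$ the variable $Y := q_{\bar{X}}((N,u^\ast))$ is $[0,1]$-valued with mean $q$ and variance $\sigma_q^2$ (this is exactly the purpose of the continualization appearing in the definition of $\sigma_q^2$). Any $(\eps,\delta)$-estimator on $D$ yields one on $\tilde{D}$ by projecting outputs to the $S$-coordinate, so it suffices to prove the lower bound on $\tilde{D}$, which is now effectively continuous.

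The two constructions. For the variance term, I would transplant the reweighting from \Cref{lem:counting_lb_2}, defining the likelihood ratio
\[ \frac{dP^{\tilde{D}_\pm}}{dP^{\tilde{D}}}(\bar{X}) \;=\; 1 \pm \frac{c\, \bar{\eps}}{\sigma_q^2}\bigl(Y(\bar{X}) - q\bigr), \]
for a sufficiently large absolute constant $c$. The calculations inside the proof of \Cref{lem:counting_lb_2} apply verbatim and give (a) that $\tilde{D}_\pm$ is a valid distribution (using $\sigma_q^2 \gtrsim \bar{\eps}$, with the opposite regime handled by the $1/\bar{\eps}$ term), (b) $P_{Z \sim M(\tilde{D}_\pm)}[Z \leq (N, u^\ast)] = q \pm c\bar{\eps}$, and (c) $H^2(\tilde{D}, \tilde{D}_\pm) = O(\bar{\eps}^2/\sigma_q^2)$. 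Applying \Cref{fact:hellinger_product,fact:hellinger_implies_l1,fact:l1} to the product distribution over $o(\sigma_q^2/\bar{\eps}^2 \cdot \log \delta^{-1})$ samples rules out any $(1-\delta)$-distinguisher, and choosing $c$ large enough forces the $\eps$-valid answer intervals for $\tilde{D}_+$ and $\tilde{D}_-$ to be disjoint. The $\Omega(\bar{\eps}^{-1} \log \delta^{-1})$ bound is obtained analogously from the ``$\bar{\eps}$-probability corruption'' construction inside \Cref{lem:counting_lb_2}: replace $\tilde{D}$ with probability $\bar{\eps}$ by a deterministic distribution placing all mass strictly above (resp.\ below) the $q$-quantile ramp at $N$; this moves the $q$-quantile by $\Omega(\bar{\eps})$ while the two alternatives agree on an event of probability $1-\bar{\eps}$.

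The main obstacle I anticipate is the discrete ``large-gap'' regime $\min(d^-,d^+) > \eps$, where naively reweighting $D$ directly shifts probability mass near $N$ but may fail to move the $q$-quantile at all, because the point mass at $N$ absorbs the shift and leaves the set of $\eps$-correct answers unchanged. The continualization trick $\tilde{D} = D \otimes Unif[0,1]$ is precisely what fixes this: a point mass of size $q^+_{M(D)}(N) - q^-_{M(D)}(N)$ becomes a continuous ramp of equal length in $M(\tilde{D})$, so a shift of size $\Theta(\bar{\eps})$ moves the $q$-quantile cleanly out of the ramp and the two answer sets separate; once this step is in place, \Cref{lem:counting_lb_2} is applied as a black box to conclude.
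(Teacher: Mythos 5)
Your overall framework is right: pass to the continualized product $\tilde D = D\otimes Unif[0,1]$, reweight/corrupt to produce two alternatives that are Hellinger-close to $\tilde D$, and argue the $\eps$-valid answer sets are disjoint. The paper's proof takes this same route and also invokes \Cref{lem:counting_lb_2} for the indistinguishability part (via a rounding step, $D_r = r(D_y)$, to keep the support discrete). That part of your plan is fine.

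The genuine gap is in the disjointness step, and your own diagnosis of the discrete large-gap regime is where it breaks. You reweight around a \emph{single} anchor point $(N,u^\ast)$ to get $F_{M(\tilde D_\pm)}\bigl((N,u^\ast)\bigr) = q \pm c\bar\eps$, and you claim continualization makes this shift "move the $q$-quantile cleanly out of the ramp." It does not. The shift of $c\bar\eps$ is in \emph{CDF value}; it does not move the first ($S$-)coordinate of the $q$-quantile past $N_1$, because the ramp at $N_1$ has CDF-length $d^- + d^+$, which can be much larger than $\bar\eps = \max(\eps,\min(d^-,d^+))$. Concretely, take the corruption construction: $D_\pm$ equals $D$ with probability $1-\bar\eps$ and otherwise a point mass strictly above/below $N_1$. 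Then
\[
q^-_{M(D_+)}(N_1) = (1-\bar\eps)(q-d^+), \qquad q^+_{M(D_-)}(N_1) = (1-\bar\eps)(q+d^-) + \bar\eps,
\]
and one checks directly from $\bar\eps \le \min(d^-,d^+)+\eps$ that $\hat N = N_1$ is an $\eps$-valid answer for \emph{both} $D_+$ and $D_-$ whenever $d^-+d^+$ is large compared to $\bar\eps$. The same phenomenon appears for the reweighting alternative: you control $F_{M(\tilde D_\pm)}$ at $(N,u^\ast)$ but not at $(N_1,0)$ or $(N_1,1)$, so nothing prevents $N_1$ from remaining valid for both. Since the algorithm's output lives in $S$, not $S\times[0,1]$, the sets of valid outputs are \emph{not} disjoint and your reduction to a distinguishing lower bound does not go through.

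The paper's proof avoids this precisely by using \emph{two different} anchor points. It defines $N^- = Q_{q-2\bar\eps}(\tilde M(D))$ (or symmetrically $N^+ = Q_{q+2\bar\eps}$), and exploits that $\bar\eps \ge \min(d^-,d^+)$ forces the first coordinate of this second anchor to lie strictly on the other side of $N_1$ --- i.e., $2\bar\eps$ is large enough in CDF units to cross the point mass at $N_1$. It then reweights $q_{(X,U)}(N)$ (to build $D_1$ pushing valid answers to $\hat N \ge N_1$) and separately reweights $q_{(X,U)}(N^-)$ (to build $D_2$ pushing valid answers to $\hat N \le N_1^- < N_1$). Separation in the $S$-coordinate is then forced. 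Your single-anchor construction cannot be patched into this without introducing the second anchor; the fact that $\bar\eps$ is defined via $\min(d^-,d^+)$ is exactly what makes the second anchor cross the atom, and that is the idea your proposal is missing.
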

\begin{proof}
It will be easier for us to work with a continuous distribution.
For any distribution $D$, we thus define a distribution $\tilde{M}(D)$ as $(X,U)$ for $X \sim M(D)$ and $U \sim Unif[0,1]$. We compare values from the base set of this distribution lexicographically. We now show a lower bound for estimating a quantile of the first component of this distribution or, equivalently, a lower bound on estimating a quantile of $M(D)$.

Assume without loss of generality that $d^- \leq d^+$. It then holds that $\bar{\eps} = \max(\eps, d^-)$.
We prove the existence of two distributions $D_1, D_2$ such that: (1) they are both indistinguishable from $D$ by fewer than $LB$ samples, and thus also from each other, and (2) they have disjoint sets of correct answers for the $\mquantile$ problem. This will imply the theorem.

Let $N = (N_1,N_2) = Q_q(\tilde{M}(D))$. Let $N^- = (N_1^-,N_2^-) = Q_{q-2\bar{\eps}}(\tilde{M}(D))$. Note that by the way we defined $\bar{\eps}$, it holds $N^-_1 < N_1$.
For $X \sim D$ (note that $X$ is itself a distribution), define $y = q_{(X,U)}(N), x^- = q_{(X,U)}(N^-)$ for $U \sim Unif[0,1]$ and let $D_y,D_{y^-}$ denote the distributions of these two random variables. 
Note that the standard deviation of $y$ is $\sigma_{q}$ and the standard deviation of $y^-$ is $\sigma_q - O(\eps)$ by \Cref{lem:continuity}.

We now construct $D_1$. Let $r$ be the function that rounds a number to the nearest multiple of $\eps$, and let $D_r = r(D_y)$.
We apply \Cref{lem:counting_lb_2} to $D_r$ to get that there exists a distribution $y' \sim D_r'$ such that $\E[y'] \leq q - 3\bar{\eps}$ and such that distinguishing $D_r$ and $D_r'$ requires $LB$ samples. We use this to construct $D_1$:
We first sample $y \sim D_r'$, then we sample and return the value (which in itself is a distribution) $X$ sampled from the conditional distribution $X \sim D | r(q_{X,U}(N)) = y$.\footnote{Note that there are only finitely possible values of $y$, which implies that this is correctly defined. Indeed, this is the reason we perform the rounding as otherwise, we would have issues with measurability.}

We now claim that $q_{\tilde{M}(D_1)}(N) \leq q-2\bar{\eps}$. Note that for $X \sim D_1$, it then holds $r(q_{(X,U)}(N)) \sim y'$. At the same time, we have that $r(q_{\tilde{M}(D_1)}(N)) = \E[y'] \leq q - 3\bar{\eps}$. At the same time, the function $r$ changes any value by at most $\eps/2$, which implies the claim.

We now argue that $D$ and $D_1$ are indistinguishable by fewer than $LB$ samples. We prove that this would imply that $D_r,D_r'$ are also distinguishable by fewer than $LB$ samples. Since this cannot be the case, $D$ and $D_1$ are also indistinguishable. For each sample $Y$ which may be from either $D_r$ or $D_r'$, we sample $X \sim D | r(q_{X,U}(N)) = Y$. If $Y \sim D_r$ then $X \sim D$, if $Y \sim D_r'$ then $X \sim D_1$. This allows us to use a distinguished for $D,D_1$ to also distinguish $D_r,D_r'$.


We now construct $D_2$. Note that we may assume $\eps < \sigma_q/2$ (otherwise $LB = \Omega(1)$), meaning that $\E[x^-] = q - 2 \bar{\eps}$ and the standard deviation of $x^-$ is $\Theta(\sigma_q)$. By the same argument as above, we thus may obtain a distribution $D_2$ which cannot be distinguished from $D$ by fewer than $LB$ samples, and which has $q_{\tilde{M}(D_2)}(N^-) \geq q+2\bar{\eps}$.


Since $q_{\tilde{M}(D_1)}(N) \geq q+2\bar{\eps}$, any correct answer for the $\eps$-approximate quantile problem on the first component of the distribution $D_1$, has to be $\geq N_1$. Similarly, since $q_{\tilde{M}(D_2)}(N^-) \geq q+2\bar{\eps}$, any correct answer for $D_2$ has to be $\leq N_1^-$. Since $N_1^- < N_1$, we thus have that the sets of correct answers for $D_1$ and $D_2$ are disjoint. At the same time, both $D_1$ and $D_2$ cannot be distinguished from $D$ in fewer than $LB$ samples; they thus also cannot be distinguished from each other since any such distinguished could be used to distinguish at least one of $D_1,D_2$ from $D$. This finishes the proof.
\end{proof}

\subsection{Learning mixture distributions}
\label{sec:quantiles_ks}

We remark that the quantile estimation algorithm from \Cref{thm:quantile_main} can be converted into an algorithm for learning the input distribution in Kolmogorov-Smirnov distance (formally, problems \distks and \mks). 
In other words, we can output a distribution $\hat{M}$ such that $KS(M(D),\hat{M}) = \sup_x |P[M(D) \leq x] - P[\hat{M} \leq x]| \le \eps$, using a nearly instance-optimal number of samples.
Specifically, we prove the following result. 

\begin{theorem}
    In the sequential estimation setting, given sampling access to a distribution $D$ on $Dist(\mathbb{R})$, there are algorithms for learning the mixture distribution $M(D)$ in the Kolmogorov-Smirnov norm (the \distks and \mks problems) that are instance optimal up to a factor of 
    $$
    \log \log \eps^{-1} \cdot \left(1 + \frac{\log \eps^{-1}}{\log \delta^{-1}}\right)
    $$
    We similarly get an approximately order-oblivious instance optimal algorithm for learning the input distribution in KS-norm in the I/O estimation setting (formally the problem \bdistks). 
\end{theorem}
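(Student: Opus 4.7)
The plan is to reduce KS-learning to estimating a grid of quantiles in parallel. Set $q_k = k \eps$ for $k = 1, \ldots, \lceil 1/\eps \rceil$, and invoke the near-instance-optimal quantile estimator of \Cref{thm:quantile_main} on each $q_k$ with target additive accuracy $\eps/2$ and target failure probability $\delta' = \Theta(\delta \eps)$, sharing the same stream of samples from $D$ across all $O(1/\eps)$ runs. Union bounding, with probability at least $1-\delta$ every returned value $\hat N_k$ satisfies $q^-_{M(D)}(\hat N_k) - \eps/2 \leq q_k \leq q^+_{M(D)}(\hat N_k) + \eps/2$. The step CDF that jumps by $\eps$ at each $\hat N_k$ is then within $\eps$ of the true CDF of $M(D)$ in $\ell_\infty$, which is precisely $\mathrm{KS}(M(D), \hat M)$, producing the required output distribution $\hat M$.

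For the expected sample cost, the total equals the maximum of the $O(1/\eps)$ individual complexities. We first amplify each quantile call via \Cref{lem:prob_amplification} to control higher moments of its runtime, and then apply \Cref{lem:expectation_of_max} with $k = O(1/\eps)$ to aggregate. The additional multiplicative cost from amplification to failure probability $\delta' = \Theta(\delta\eps)$ is $\log(\delta')^{-1}/\log\delta^{-1} = O\bigl(1 + \log\eps^{-1}/\log\delta^{-1}\bigr)$, and combining with the $O(1 + \log\log\eps^{-1}/\log\delta^{-1})$ inefficiency factor inherited from \Cref{thm:quantile_main} yields total expected sample complexity
\begin{equation*}
O\!\left( \left(\tfrac{1}{\bar\eps_{k^\star}} + \tfrac{\sigma_{q_{k^\star}}^2}{\bar\eps_{k^\star}^2}\right) \cdot \log\log\eps^{-1} \cdot \bigl(1 + \tfrac{\log \eps^{-1}}{\log \delta^{-1}}\bigr) \right),
\end{equation*}
where $k^\star$ is the index maximizing the per-call complexity.

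For the matching lower bound, observe that any $(\eps, \delta)$-correct KS-learner can be post-processed to give an $(\eps, \delta)$-correct estimator for every quantile $q \in (0,1)$: simply return the $q$-th quantile of the output distribution $\hat M$. Consequently, \Cref{thm:quantile_lb} applied at the index $k^\star$ that maximizes the instance-dependent complexity lower-bounds any correct algorithm by the same quantity as our upper bound up to the stated multiplicative factor, proving near-instance-optimality for both \distks and its mixture generalization \mks. The order-oblivious I/O result \bdistks then follows by invoking \Cref{thm:io_transfer}; the consistency and uniform boundedness conditions of \Cref{def:consistent,def:bounded} hold for KS-learning, since empirical CDFs converge to the true CDF uniformly in probability and the closed-form complexity above is smooth in $\eps, \delta$.

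The main technical obstacle will be the careful moment-balancing in the second paragraph: to keep the final multiplicative gap at $\log\log\eps^{-1} \cdot \bigl(1 + \log\eps^{-1}/\log\delta^{-1}\bigr)$ rather than something larger such as $(\log\eps^{-1})^2/\log^2\delta^{-1}$, the amplification strength fed into \Cref{lem:prob_amplification} and the moment order used by \Cref{lem:expectation_of_max} must be tuned simultaneously and matched against the inner factor of \Cref{thm:quantile_main}. A secondary subtlety is handling discrete mixtures where the $q_k$-th quantile is non-unique; this is resolved exactly as in \Cref{sec:quantiles_upper} by tensoring with $\mathrm{Unif}[0,1]$ before invoking the quantile sub-routine.
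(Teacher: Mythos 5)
Your approach matches the paper's proof essentially step for step: reduce KS-learning to $O(1/\eps)$ parallel quantile estimations via \Cref{thm:quantile_main} with per-call failure probability $\Theta(\eps\delta)$, aggregate the expected runtime via \Cref{lem:prob_amplification} followed by \Cref{lem:expectation_of_max}, obtain the lower bound by noting a KS-learner yields a quantile estimator for every $q$ and invoking \Cref{thm:quantile_lb} at the worst grid index, and transfer to I/O with \Cref{thm:io_transfer}. The one slip is a constant: with grid spacing $\eps$ and per-quantile accuracy $\eps/2$, the step CDF can deviate from the true CDF by roughly $3\eps/2$, so you need a finer mesh and tighter per-quantile accuracy (the paper uses spacing $\eps/3$ and accuracy $\eps/10$), which changes nothing asymptotically.
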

\begin{proof}
    We will focus on the \mks. The result for \distks is a special case for $B=1$. The result in the I/O setting follows by \Cref{thm:io_transfer}. 

Our approach is similar to how we convert mean estimation into mixture learning in $\ell_\infty$ (\Cref{subsec:histogram}). The algorithm will generate samples $X_1, X_2, \dots$ from $D$ (modified using \Cref{lem:prob_amplification}), and will simultaneously attempt the quantile estimation for the quantiles $\eps/3, 2\eps/3, \eps, \dots, 1-\eps/3, 1$, each with accuracy $\eps/10$ and failure probability $\frac{\eps \cdot \delta}{3}$. For each $q \in \{\eps/3, \eps, \dots, 1\},$ let $\hat{N}_q$ represent the predicted quantile. Then, we set the distribution $\hat{M}$ as the uniform distribution on our estimates $\hat{N}_q$.

With at most $\frac{\eps \cdot \delta}{3} \cdot \frac{3}{\eps} \le \delta$ failure probability, every $\hat{N}_q$ for $q \in \{\eps/2, \eps, \dots, 1\}$ will be within the $(q-\eps/10)$th and $(q+\eps/10)$th quantiles. This also means that the values $\hat{N}_q$ will be in nondecreasing order.

Now, consider any value $x$, and suppose that $q = P(M(D) \le x)$. Suppose that $r \cdot \eps/3 \le q < (r+1) \cdot \eps/3,$ where $r$ is some nonnegative integer. We will prove that $P(\hat{M} \le x) \ge (r-1) \eps/3$ and $P(\hat{M} \ge x) \le (r+2) \eps/3$, which is sufficient.

We first show that $P(\hat{M} \le x) \ge (r-1) \eps/3$. (We can assume without loss of generality $r \ge 2$.) This is true if $\hat{N}_{(r-1) \eps/3} \le x,$ since it also implies that $\hat{N}_q \le x$ for all $q \le (r-1) \eps/3$ that is a multiple of $\eps/3$. However, we are assuming that every quantile is accurate up to error $\eps/10$, so $\hat{N}_{(r-1) \eps/3}$ is at most the $(\frac{(r-1) \eps}{3} + \frac{\eps}{10}) \le (q - \frac{\eps}{10})$th true quantile in $M(D)$. This is at most $x$.

Next, we show that $P(\hat{M} \le x) \le (r+2) \eps/3$. This is true if $\hat{N}_{(r+2) \eps/3} > x,$ since it also implies that $\hat{N}_q > x$ for all $q \ge (r+2) \eps/3$ that is a multiple of $\eps/3$. Since every quantile is accurate up to error $\eps/10$, $\hat{N}_{(r+2) \eps/3}$ is at least the $(\frac{(r+2) \eps}{3} - \frac{\eps}{10}) \ge (q + \frac{\eps}{10})$th true quantile in $M(D)$. However, since $q$ equals $P(M(D) \le x)$, the $(q + \frac{\eps}{10})$th quantile is strictly larger than $x$. This completes the accuracy proof.

Finally, we need to verify that the number of samples is nearly optimal. Because learning the distribution up to KS distance $\eps$ implies we estimate every quantile up to error $\eps$, by \Cref{thm:quantile_lb}, we must use $\Omega\left(\max_{q \in \{\eps/3, 2 \eps/3, \dots, 1\}} \left( \frac{1}{\bar{\eps}_q} + \frac{\sigma_q^2}{\bar{\eps}_q^2} \right) \cdot \log \frac{1}{\delta} \right)$ samples, where $\sigma_q, \bar{\eps}_q$ are the corresponding values of $\sigma$ and and $\bar{\eps}$ for the $q$th quantile (as in \Cref{thm:quantile_lb}).
Next, by Lemma~\ref{lem:prob_amplification}, it holds that $\|T_q\|_{O(\log (1/\eps \delta))} \leq O\left(\left(\frac{1}{\bar{\eps}_q} + \frac{\sigma_q^2}{\bar{\eps}_q^2}\right) \log\log \eps^{-1} \cdot \log \frac{1}{\eps \delta}\right)$, where $T_q$ is the number of samples needed to estimate the $q$th quantile with probability $\frac{\eps \cdot \delta}{3}$. The overall number of samples is $\max_{q \in \{\eps/3, 2 \eps/3, \dots, 1\}} T_q$, which by \Cref{lem:expectation_of_max} is at most $O(\max_q \|T_q\|_{O(\log(1/\eps \delta))})$. Hence, the sample complexity is at most $\log \log \eps^{-1} \cdot \left(1 + \frac{\log \eps^{-1}}{\log \delta^{-1}}\right)$ times the optimal.
\end{proof}

\section{Experiments}

\begin{figure*}[t]
    \centering
    \includegraphics[width = 0.8 \textwidth]{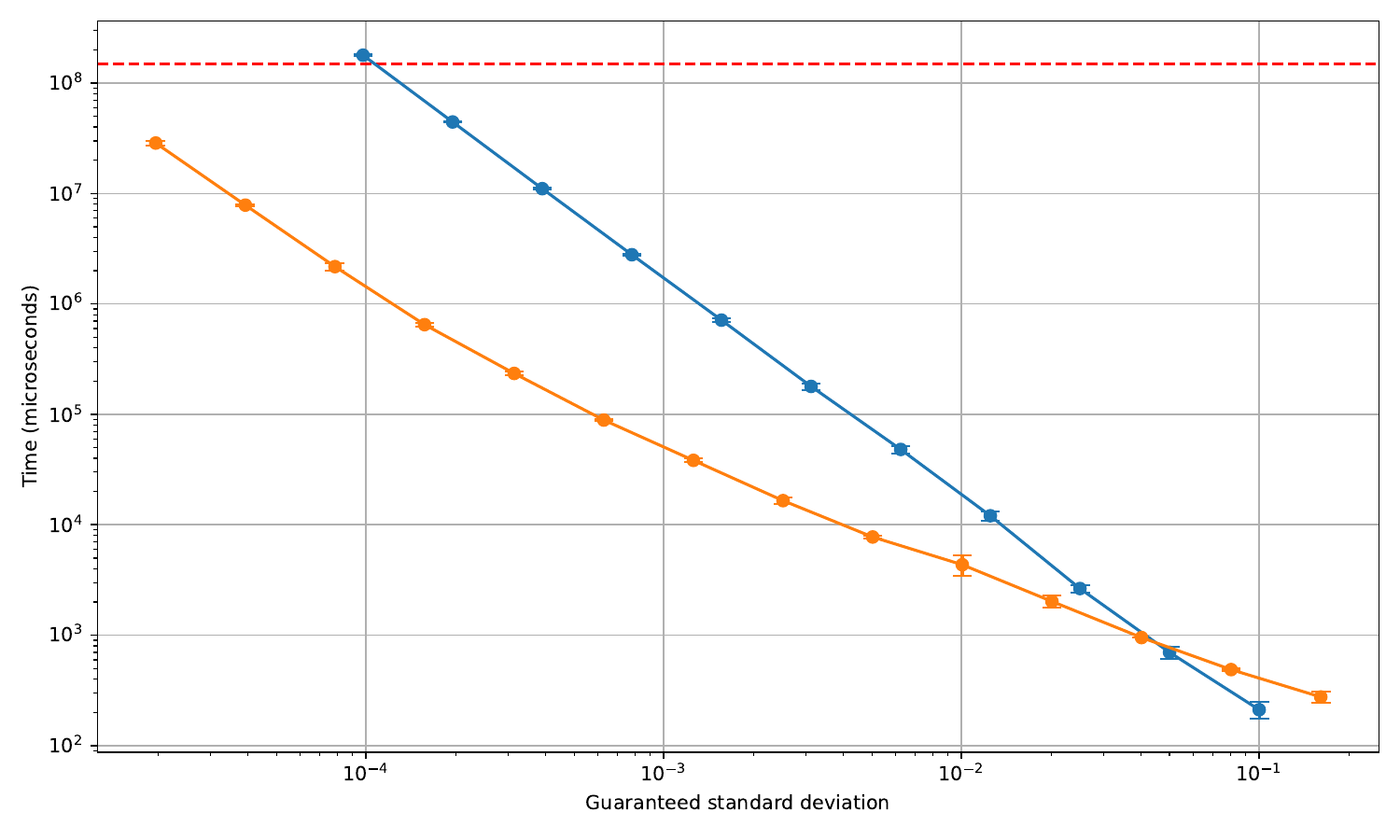}
    \includegraphics[width = 0.8 \textwidth]{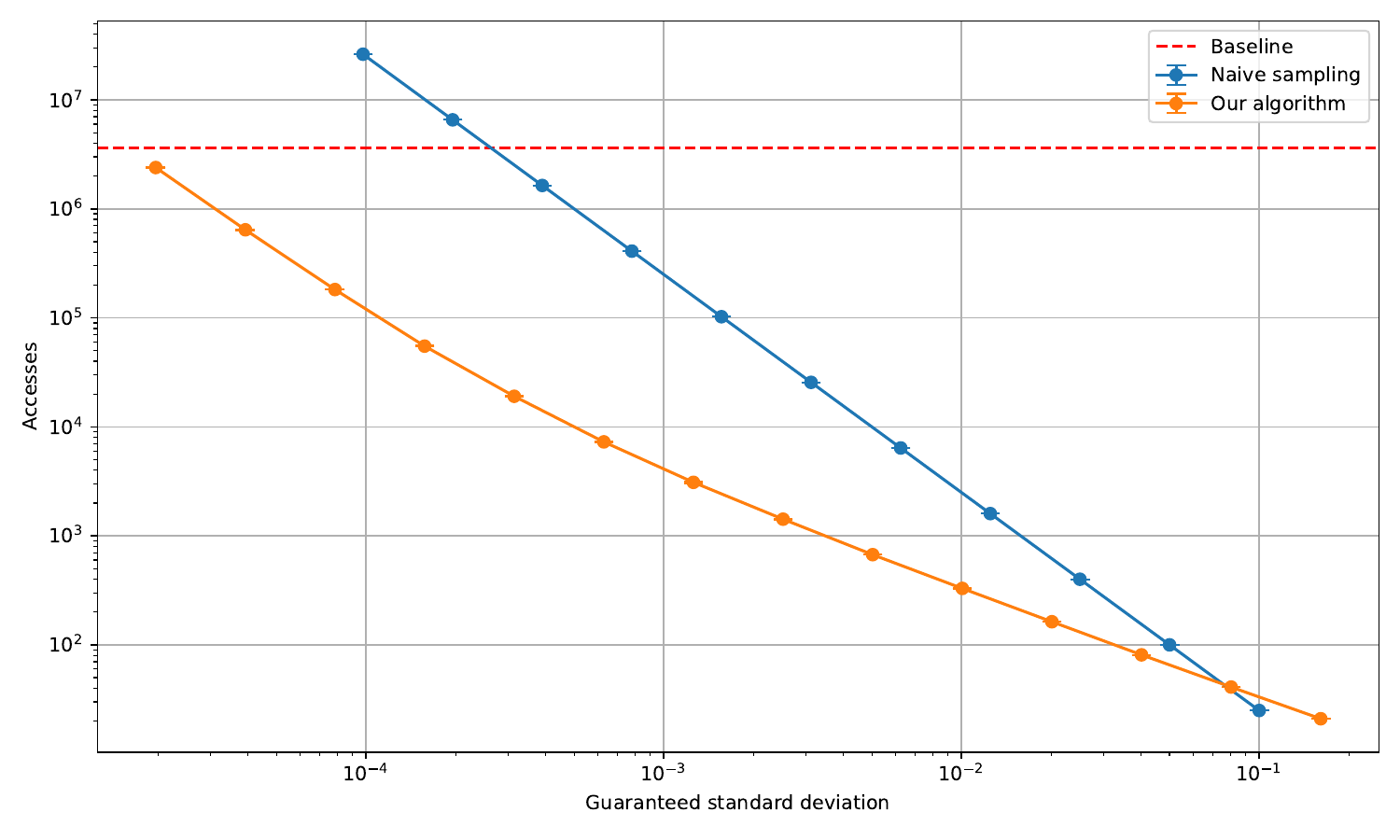}
    \caption{Estimating the frequency of the letter `e' in the corpus of English Wikipedia. }
    \label{fig:experiment}
\end{figure*}

In this section we discuss brief experiments we made with our algorithm 
We considered the problem of block mean estimation, in particular, we considered the problem of estimating the frequency of the letter `e' in the corpus of English Wikipedia \cite{germanT5_wikipedia2corpus}. We compared the following algorithms:
\begin{enumerate}
    \item Our sequential estimation algorithm \Cref{alg:counting_ones}\footnote{We in fact ran a version of the algorithm with a bit different constants and a bit more optimized guarantees from \Cref{sec:optimized}. },
    \item A naive algorithm that repeatedly samples a random letter,
    \item A baseline algorithm that reads the whole input.
\end{enumerate}
We ran the algorithm for block size $b = 4096$ (the page size of the used SSD drive) and measured two complexity measures, the actual time complexity of the algorithm and its sample complexity, i.e., the number of block retrievals. We plot these two measures in \Cref{fig:experiment} against the provable guaranteed standard deviation of the algorithm. In the case of the naive algorithm, one can see how the respective line has slope $-2$, corresponding to $\Omega(1/\eps^2)$ samples needed to guarantee $O(\eps)$ error. The curve corresponding to our algorithm has a slope around $-1$ up to $\eps \approx 10^{-3}$ or $10^{-4}$, corresponding to the term $O(1/\eps)$ dominating the complexity up to this value of $\eps$. Only for smaller values of $\eps$ the term $\sigma^2 / \eps^2$ corresponding to the slope -2 starts to dominate. 


We ran the experiments on a Lenovo ThinkPad T14 Gen 1 laptop, featuring an Intel Core i7-10610U quad-core processor, 32 gigabytes of RAM, and a Samsung NVMe solid-state drive with a capacity of 1 terabyte (MZVLB1T0HBLR-000L7). We used the ext4 file system on Ubuntu 22.04.

\paragraph{Practical variant of the mean estimation algorithm}

Here we mention a variant of the mean estimation algorithm \Cref{alg:counting_ones}. The practical issue with \Cref{alg:counting_ones} is that it ``throws away'' the samples from its first phase, they are not used towards the final estimate. We prove in \cref{sec:variants} that the algorithm retains its guarantees when samples are reused, provided that the constants in the algorithm are set to larger values that we did not try to optimize. 


%
%
%
%
%
%
%
\label{sec:experiments}


\bibliographystyle{plainnat}
\bibliography{literature}

\appendix

\section{Deferred Proofs from Preliminaries}
\label{sec:deferred_proofs}

We first prove \Cref{lem:expectation_of_max} that we restate here for convenience. 

\maxlemma* 
\begin{proof}[Proof of \Cref{lem:expectation_of_max}]
 We define $\ell = \log_2 k$. Let $i^* = \arg \max \|X_i\|_\ell$. Then, we have
\begin{align*}
E[\max_i X_i] \leq& \sqrt[\ell]{\E[(\max_i X_i)^\ell]} \\=& \sqrt[\ell]{\E[\max_i X_i^\ell]} \\\leq& \sqrt[\ell]{\E[\sum_{i=1}^k X_i^\ell]} \\\leq& \sqrt[\ell]{k \E[X_{i^*}^\ell]} \\=& \sqrt[\ell]{k} \|X_{i^*}\|_\ell = 2 \|X_{i^*}\|_\ell\,,
\end{align*}
where the first inequality holds by the Jensen's inequality.
\end{proof}

Next, we present the proof of \Cref{lem:prob_amplification} that we restate here for convenience. 
Recall that we say that a sequential estimation problem $P = (S, \theta, d_D)$ is \emph{monotonic} if the image of $\theta: Dist(S) \rightarrow \R$ and if $d_D(\hat\theta)$ satsifies for each $D$ that for $\hat\theta_1 \le \hat\theta_2 \le \theta(D)$ we have $d_D(\hat\theta_1) \ge d_D(\hat\theta_2)$, while for $\hat\theta_1 \ge \hat\theta_2 \ge \theta(D)$ we have $d_D(\hat\theta_1) \le d_D(\hat\theta_2)$. Recall also that $T(A,D)$ represents the expected sample complexity of the algorithm $A$ on the distribution $D$, with $\mathbf{T}$ being the random variable representing the actual number of samples performed.

\amplification*

\begin{proof}[Proof of \Cref{lem:prob_amplification}]
The first inequality is standard, we now prove the second one. We run $k = \Theta(\log \delta^{-1})$ executions of the algorithm $A$ in parallel, and once $0.9$-fraction has finished, we return their median, aborting the remaining $0.1 k$ executions. 

We now argue correctness. The problem $P$ being monotonic implies that in order to return a value with error $>\eps$, at least $(1/2-0.1)$-fraction of the $k$ executions would have to result in error $>\eps$. Since each has error $>\eps$ with probability $\leq 0.1$, by a standard application of the Chernoff's inequality the probability of this happening is at most $\delta$.

It remains to prove the concentration of the sample complexity. The argument is very similar to that for the ``median of three'' in \cite{larsen2021countsketches}. 
Using $m_i$ to denote the sample complexity of $i$-th execution, it holds that $P[m_1 \geq c \E[m_1]] \leq 1/c$ by the Markov's inequality. Therefore, by the Chernoff bound, the probability that more than $0.1$-fraction of the executions would have $m_i \geq c \E[m_1]$ is 
\begin{align*}
\leq e^{-D(0.1 | 1/c) k} &\leq e^{-0.1 \log (0.1 c) k} = (0.1 \, c)^{-\Theta(\log \delta^{-1})} = (0.1 \, c)^{-\ell}
\end{align*}
where we define $\ell = \Theta(\log \delta^{-1})$ for convenience. Let $i^*$ be the longest running execution that has not been aborted. We then have
{ \iffocs \small \fi
\begin{align*}
\|&m_{i^*}\|_{\ell-1} \\=& 
\sqrt[\ell-1]{(\ell-1) \int_0^\infty y^{\ell-2} P[m_{i*} \geq y] dy} \\\leq&
\sqrt[\ell-1]{(\ell-1) \E[m_{1}]^{\ell-1}  (\ell-1) \int_{\E[m_{1}]}^\infty  y^{\ell-2} P[m_{i^*} \geq y] dy} \\\leq&
O(\E[m_{1}]) + \sqrt[\ell-1]{(\ell-1)\int_{\E[m_{1}]}^\infty y^{\ell-2} P[m_{i^*} \geq y] dy} \\\leq&
O(\E[m_{1}]) + \sqrt[\ell-1]{(\ell-1) \int_{\E[m_{1}]}^\infty y^{\ell-2} (y/E[m_1])^{-\ell} dy} \\=&
O(\E[m_{1}]) + \sqrt[\ell-1]{(\ell-1) \E[m_{1}]^{\ell-1}} \\\leq&
O(\E[m_{1}]) = O(T(A,D))\,.
\end{align*}
}
Since there are $O(\log \delta^{-1})$ such executions running in parallel, each having complexity at most $m_{i^*}$, we finally get that
\begin{align*}
\|\mathbf{T}(A',D)\|_{\ell -1} &\leq \|m_{i^*}\|_{\ell-1} \cdot \Theta(\log \delta^{-1}) \leq O(T(A,D) \log \delta^{-1})\,.
\end{align*}
\end{proof}

We next prove \Cref{thm:io_transfer} that we restate here for convenience. 
\transtheorem*

In the proof, we use the following notation: For an input $I$ of $n$ blocks to an I/O estimation problem, we use $\fD(I)$ to denote the uniform distribution over the blocks that can serve as an input to the appropriate sequential estimation problem. 

\begin{proof}
Note that if we have a block of $B$ items, the permutation of items is irrelevant since the problem is symmetric by definition. We may thus treat a block as a multiset of elements. Throughout this proof, we then treat an element from $Dist_B(S)$ as a block of $B$ items from $S$ -- there is a natural bijection between the two,  with a block corresponding to a uniform distribution on its items and a distribution corresponding to a block which has $a$ copies of an element with probability $a/B$.

Let $A_{\eps,\delta}^{dist}$ denote an $(\eps,\delta)$-correct instance-optimal algorithm for $P_{dist}$  in the case of $(1)$. Similarly let $A_{\eps,\delta}^{I/O}$ denote an instance-optimal algorithm for $P_{I/O}$ in case of $(2)$.
We will show two reductions. First, we show a reduction $R_{I/O}$ such that $A_{\eps,\delta}^{I/O}(R_{I/O}(\cdot))$ (that is, executing $A_{\eps,\delta}^{I/O}$ on an input constructed by applying $R_{I/O}$ to $D$) is a $2\eps,2\delta$-correct algorithm for $P_{dist}$. Similarly, we show $R_{dist}$ such that $A_{\eps,\delta}^{dist}(R_{dist}(\cdot))$ is a $2\eps,2\delta$-correct algorithm for $P_{I/O}$. 
At the same time, the reductions are such that one query to $R_{I/O}(D)$ may be simulated by one sample from $D$ and a sample from $R_{dist}(I)$ can be simulated by one query to $I$.
We define the reductions next. 

\paragraph{Reduction $R_{dist}$.}
We define $R_{dist}(I)$ for input $I$ to be the distribution obtained by sampling one of $I$'s blocks at random. Clearly, one sample from $D$ can be simulated using one query to $I$. At the same time, it follows directly from the definition of a correct algorithm in the I/O estimation setting that, for any $(\eps,\delta)$-correct algorithm $A$ for $P_{dist}$, the algorithm $A(R_{dist}(\cdot))$ is $(\eps,\delta)$-correct for $P_{I/O}$.

\paragraph{Reduction $R_{I/O}$.}
We define $I = R_{I/O}(D)$ to be an input consisting of sufficiently many samples from $D$. Namely, we make so many samples that $|d_{\mathcal{D}(I)}(\hat \theta) - d_D(\hat \theta)|<\eps$ for any $\hat \theta$ except with probability $\delta$. Executing any $(\eps,\delta)$-correct algorithm for $P_{I/O}$ on $I$ will then be a $(2\eps,2\delta)$-correct algorithm for $P_{dist}$ on $D$.\\

We now wish to prove that the algorithms that we described via reductions are instance-optimal. To this end, we prove that, without loss of generality, the complexity of any algorithm $A$ on $R_{I/O}(R_{dist}(I))$ is the same up to a constant factor as on $I$, for any $I$, and similarly on $R_{dist}(R_{I/O}(D))$ the same as on $D$, for any $D$. We claim that this implies instance-optimality (up to some constant factor $c_{\eps,\delta})$). We focus on the case (1) as the argument for the other case is exactly the same. For the sake of contradiction, assume that $A_{\eps,\delta}^{dist}(R_{dist}(\cdot))$ is not an instance-optimal algorithm for $P_{I/O}$. This means we can find for any $C>0$ an algorithm $B_{\eps,\delta}^{I/O}$ and an input $I_C$ such that
\[
T(B_{\eps,\delta}^{I/O}, I_C) \leq \frac{1}{ C \cdot c_{\eps, \delta}} T(A_{\eps,\delta}^{dist}(R_{dist}(\cdot)), I_C) \,.
\]
Using this, we construct an algorithm $B_{4\eps,4\delta}^{dist}$ that is faster than $A_{\eps,\delta}^{dist}$ on some distributions by $C \cdot c_{\eps,\delta}$ for arbitrarily high $C$; since we are assuming that $P_{I/O}$ is uniformly bounded, this is already a contradiction with $A_{\eps,\delta}^{dist}$ being instance-optimal, thus concluding the proof of (1).

We define $B_{4\eps,4\delta}^{dist}(D) = B_{\eps,\delta}^{I/O}(R_{I/O}(D))$. We take the above input $I$ and let $D = R_{dist}(I)$. We then have that $B_{4\eps,4\delta}^{dist}(D)$ has the same complexity as $B_{\eps,\delta}^{I/O}$ on $I$ up to a constant (as we promised to show). This means that we have
\begin{align*}
T(B_{4\eps,4\delta}^{dist}, D) &= \Theta(T(B_{\eps,\delta}^{I/O}, I)) \leq \Theta(\frac{1}{C \cdot c_{\eps,\delta}} T(A_{\eps,\delta}^{dist}(R_{dist}(I))) = \Theta(\frac{1}{C \cdot c_{\eps,\delta}}T(A_{\eps,\delta}^{dist}, D))
\end{align*}
as we wanted to show.

\paragraph{Reductions do not change complexity.}
It remains to prove that, without loss of generality, for the above-described reductions it holds that no algorithm's expected complexity asymptotically increases if we execute it on $R_{dist}(R_{I/O}(D))$ instead on $D$, for any distribution $D$, and similarly for $R_{I/O}(R_{dist}(I))$ for any input $I$.

First, note that without loss of generality, any $(\eps,\delta)$-correct algorithm $A$ does not use more than $T(A,\cdot)/\delta$ time almost surely -- we may abort the algorithm if it does and this only increases the failure probability by $\delta$ by the Markov inequality.

We now prove the simpler case of $R_{dist}(R_{I/O}(D))$. Assuming the reduction $R_{I/O}$ samples sufficiently many samples from $D$, the probability that we sample any single block twice is $\leq \delta$. This means that the distribution $R_{dist}(R_{I/O}(D))$ is the same as $D$ up to an event of probability $\delta$. However, changing the complexity on an event with probability $\delta$ cannot asymptotically change the complexity of the algorithm, since its complexity is bounded by $T(A,\cdot)/\delta$ almost surely.

We now prove the claim for $R_{I/O}(R_{dist}(I))$. In order to do this, note that we may assume without loss of generality that the algorithm $A$ works by sampling from the input uniformly without replacement -- since the problem is by definition symmetric, we may randomly permute the input randomly; any access pattern then corresponds to sampling uniformly from the original input.
Sampling without replacement from $R_{I/O}(R_{dist}(I))$ exactly corresponds to sampling with replacement from $I$. If the input has a length at least $T(A, I)^2/\delta^3$ (the $\delta^3$ comes from the fact that the number of samples is at most $T(A,I)/\delta$, as the $T(A,I)$ bound holds only in expectation), the probability that the algorithm sees a collision when sampling is $\leq \delta$.
Like above, an event with probability $\delta$ cannot asymptotically change the complexity.
\end{proof}

\section{Variants of Mean Estimation}
\label{sec:variants}

In this section, we analyze three variants of the mean estimation algorithm \Cref{alg:counting_ones}. While \Cref{alg:counting_ones_opt} is used in our experiment, \Cref{alg:reuse} is a variant of the original algorithm that uses all samples towards the final estimate. We then optimize the constants below. We believe that this is a nice practical property that makes it worthwhile to analyze them, even though the analysis is substantially more complicated than the analysis of \Cref{alg:counting_ones}. 

\subsection{Reusing Samples}

We will now analyze \cref{alg:reuse} which is (up to constant factors) the same as \cref{alg:counting_ones} but it reuses samples from the first phase.

\begin{algorithm}
\caption{Estimate $\mu(D)$ by $\hat{\mu}$ such that $\E[(\hat{\mu} - \mu(D))^2] \leq O(\epsilon^2)$} \label{alg:reuse}
$T_1 \leftarrow 1 + \frac{10}{\eps}$ \\
$X_1, \cdots, X_{T_1} \leftarrow$ get $T_1$ samples from $D$\\
Compute $\tilde{\mu} \leftarrow \frac{1}{T_1} \sum_{i=1}^{T_1} X_i$ \label{line:empirical_variance}\\
$\tilde{\sigma}^2 \leftarrow \frac{1}{T_1-1} \sum_{i=1}^{T_1} (X_i-\tilde{\mu})^2$\\
$T_2 \leftarrow \max\left( \frac{60\tilde{\sigma}^2}{\eps^2} - T_1, 0\right)$\\
$X_{T_1+1}, \cdots, X_{T_1+T_2} \leftarrow$ get $T_2$ new samples from $D$\\ 
\Return{$\hat{\mu} = \frac{1}{T_1 + T_2} \sum_{i=1}^{T_1 + T_2} X_i$} \\
\end{algorithm}

\begin{lemma}
\label{prop:block_frequencies_ub}
\Cref{alg:reuse} returns an estimate $\hat{\mu}$ of $\mu$ such that with probability at least $1/4$ we have that 
$|\hat{\mu} - \mu(S)| \le \eps$. Moreover, the expected sample complexity of the algorithm is 
\begin{align}
\label{eq:ha}
O\left( \frac{\sigma^2}{\epsilon^2} + \frac{1}{\epsilon}\right)    
\end{align}
\end{lemma}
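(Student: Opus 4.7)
The expected sample complexity follows immediately: $\E[T_1+T_2]\le T_1+60\,\E[\tilde\sigma^2]/\eps^2 = O(1/\eps+\sigma^2/\eps^2)$ since $\tilde\sigma^2$ is unbiased by \Cref{lem:variance_of_variance}. The nontrivial part is correctness. The new difficulty compared to \Cref{alg:counting_ones} is that the first batch $X_1,\ldots,X_{T_1}$ both determines $T_2$ and appears inside $\hat\mu$, so the samples being averaged are not independent of the final sample count.

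The plan is to split $\hat\mu-\mu=A+B$ with $A:=(S_1-T_1\mu)/(T_1+T_2)$ and $B:=(S_2-T_2\mu)/(T_1+T_2)$, where $S_1,S_2$ are the partial sums of the two batches. Conditional on $X_1,\ldots,X_{T_1}$, $A$ is deterministic while $B$ has conditional mean $0$ and conditional variance $T_2\sigma^2/(T_1+T_2)^2$; by the tower rule $\E[AB]=0$ and $\E[B^2]\le\sigma^2\,\E[1/(T_1+T_2)]\le\sigma^2/T_1$, and using $T_1+T_2\ge T_1$ trivially $\E[A^2]\le\sigma^2/T_1$ as well.

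In the regime $\sigma^2\le\eps$, these second-moment bounds already give $\E[(\hat\mu-\mu)^2]\le 2\sigma^2/T_1\le\eps^2/5$, so Markov yields $\P[|\hat\mu-\mu|\ge\eps]\le 1/5$. In the regime $\sigma^2>\eps$, I would intersect three Chebyshev events of small constant failure probability: (i) $|\tilde\mu-\mu|\le 2\sigma/\sqrt{T_1}$; (ii) $|S_2-T_2\mu|\le 2\sigma\sqrt{T_2}$, conditional on the first batch; and (iii) $\tilde\sigma^2\ge c\,\sigma^2$ for a fixed constant $c>0$, using $\Var[\tilde\sigma^2]\le\sigma^2/T_1$ (which has constant failure probability precisely because $\sigma^2>\eps$). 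Event (iii) forces $T_2>0$, so $T_1+T_2=60\tilde\sigma^2/\eps^2=\Omega(\sigma^2/\eps^2)$; plugging into the expressions for $A$ and $B$ and using $\sigma\ge\sqrt\eps$ gives $|A|=O(\sqrt{T_1}\,\eps^2/\sigma)=O(\eps)$ and $|B|=O(\sigma/\sqrt{T_1+T_2})=O(\eps)$. Enlarging the constants $10$ and $60$ in the algorithm if necessary (as the lemma explicitly allows) then drives $|A|+|B|\le\eps$, with the intersection of the three events holding with probability strictly above $1/4$.

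The main obstacle, as signalled above, is exactly the correlation between $T_2$ and the first-batch samples that appear in $\hat\mu$. The decomposition $\hat\mu-\mu=A+B$, the identity $\E[AB]=0$ obtained by conditioning on the first batch, and the trivial lower bound $T_1+T_2\ge T_1$ on the denominator of $A$, are together what let the analysis go through without the need for fresh independent samples in the final average.
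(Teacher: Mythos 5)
Your proof is correct and follows essentially the same route as the paper's: condition on the first batch, exploit that the second batch contributes zero conditional mean and conditional variance $T_2\sigma^2/(T_1+T_2)^2$, bound the first-batch contribution via $|\tilde\mu-\mu|$, and split on whether $\sigma^2$ is above or below $\Theta(\eps)$. Where you genuinely streamline things is the low-variance regime: by writing $\hat\mu-\mu=A+B$ and observing $\E[AB]=0$, you get a clean second-moment bound $\E[(\hat\mu-\mu)^2]\le\E[A^2]+\E[B^2]\le 2\sigma^2/T_1$ from the single inequality $T_1+T_2\ge T_1$, and finish with one application of Markov. The paper's argument for that regime instead conditions on $f$, applies AM-GM to the conditional variance, and union-bounds two Chebyshev events for $|\hat\mu-\E[\hat\mu\mid f]|$ and $|\E[\hat\mu\mid f]-\mu|$; your version is shorter and avoids the triangle inequality. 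In the high-variance regime your three-event structure (accuracy of $\tilde\mu$, accuracy of the second-batch sum, lower bound on $\tilde\sigma^2$) is the same as the paper's; you are slightly more explicit about keeping the denominator $T_1+T_2$ in the bound for $A$, but the substance is identical. One caveat: you write that "the lemma explicitly allows" enlarging the constants $10$ and $60$. The lemma statement itself is about Algorithm~\ref{alg:reuse} as written; it is the surrounding prose of the paper, not the lemma, that disclaims the constants as unoptimized. With the constants as literally stated, the union bound in your high-variance case (and in the paper's own proof, for that matter) does not quite land under $3/4$ total failure probability, so the caveat is genuinely needed — just attribute it to the paper's framing rather than to the lemma statement.
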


\begin{proof}
    First, we analyze $\tmu$. We have
    \begin{align*}
        \E[\tmu] = \frac{1}{T_1} \sum_{i = 1}^{T_1} \E[\mu(B_i)] = \mu
    \end{align*}
    and
    \begin{align*}
        \Var[\tmu] = \frac{1}{T_1^2} \sum_{i = 1}^{T_1} \E\left[(\mu - \mu(B_i))^2 \right] =  \frac{\sigma^2}{T_1} = \eps\sigma^2/10
    \end{align*}
    In particular, by Chebyshev inequality (\cref{lem:chebyshev}) we get that with probability at least $3/4$ we have
    \begin{align}
        \label{eq:am}
        \tmu = \mu \pm \sqrt{\eps} \sigma/2
    \end{align}

We need to analyze the value of $\hmu$ which can be written as 
\begin{align}
\label{eq:def}
    \hmu = \frac{T_1 \tmu + \sum_{i = T_1 + 1}^{T_1 + T_2} \mu(B_{i})}{T_1 + T_2}. 
\end{align}
We can understand the first and the second moment of $\hmu$ after fixing the random of the first phase of the algorithm. Formally, let us denote $\fF_1$ for the random bits for picking the first $T_1$ elements and let $f$ be an arbitrary instantiation of $f$; fixing $f$, we also fix the values of $\tmu, \tsigma$. We write $\E[\hmu | f], \Var[\hmu | f]$ for the conditional expectation and variance of $\hmu$, given $f$. We have for any $f$ that
\begin{align}
\label{eq:sleepy}
    \E[\hmu | f] = \frac{T_1 \tmu + T_2 \mu}{T_1 + T_2}
\end{align}
\Cref{eq:sleepy}  in particular implies
\begin{align}
\label{eq:ma}
    |\mu - \E[\hmu | f]|
    \le |\mu - \tmu|
\end{align}
Next, we compute $\Var[\hmu | f] $. Using the facts that $\Var[X +c] = \Var[X]$, $\Var[tX] = t^2\Var[X]$, and $\Var[X+Y] = \Var[X] + \Var[Y]$ for $X,Y$ independent we get 
\begin{align}
\label{eq:var}
    \Var[\hmu | f] 
    &= \frac{1}{(T_1 + T_2)^2}\cdot  \Var\left[\sum_{i = T_1 + 1}^{T_1 + T_2} \mu(B_{i}) | f\right]
    = \frac{T_2 \sigma^2}{(T_1 + T_2)^2}
\end{align}

We will now consider two cases based on which of the two terms $1/\eps, \sigma^2/\eps^2$ from \cref{eq:ha} is larger. 
\paragraph{Case $\sigma^2 \le 4\eps/9$: } We  simplify the right hand side of \cref{eq:var} by the AM-GM inequality and get
\begin{align*}
        \Var[\hmu | f] 
\le \frac{T_2 \sigma^2}{4T_1T_2} = \frac{\sigma^2}{4T_1} = \frac{\eps\sigma^2}{4}
\end{align*}
Thus, by Chebyshev (\cref{lem:chebyshev}), with probability at least $3/4$ we have 
\begin{align}
\label{eq:ham}
 \hmu = \E[\tmu | f] \pm \sqrt{\eps}\sigma   . 
\end{align}
Hence, with probability at least $1/2$, both \cref{eq:ham} and \cref{eq:am} are satisfied. Since our analysis works for any $f$, we have 
{\iffocs \small \fi
\begin{align*}
    |\hmu - \mu| 
    &\le |\hmu - \E[\hmu | f]| + |\E[\hmu | f] - \mu|  \\
    &\le |\hmu - \E[\hmu | f]| + |\mu - \tmu|    && (\text{\cref{eq:ma}})\\
    &\le \sqrt{\eps}\sigma + \sqrt{\eps}\sigma/2  
    \le \eps && \!\!\!\!\!\!\!\!\!\!\!\!\!\!\!\!\!\!\!\!\!\!\!\!\!(\text{\cref{eq:am,eq:ham}}, \sigma^2 \le \eps/9)
\end{align*}
}

\paragraph{Case $\sigma^2 > 4\eps/9$: }

First, we prove that with constant probability, we have $\hsigma^2 \ge \sigma^2/2$. To see this, first recall the standard fact that 
\begin{align}
\label{eq:standard}
    \E[\tsigma^2] = \sigma^2
\end{align}
Next, we need to bound the variance of this estimate. We have
{\iffocs \small \fi
\begin{align*}
    &\Var[\tsigma^2] =
    \\&= \E\left[ \frac{1}{(T_1-1)^2} \E\left[\sum_i \left( \tmu - \mu(B_i)\right)^2 \cdot  \sum_j \left( \tmu - \mu(B_j)\right)^2\right]  \right] - \left( \sigma^2 \right)^2\\
    &\le \frac{1}{(T_1 - 1)^2} \E\left[ \sum_i \left( \tmu - \mu(B_i) \right)^4 \right]\\
    &\le \frac{1}{(T_1 - 1)^2} \E\left[ \sum_i \left( \tmu - \mu(B_i) \right)^2 \right]
    \\&= \frac{\sigma^2}{T_1 - 1} \le \frac{\eps \sigma^2}{10} \le \sigma^2/4 
\end{align*}
}

In particular, by Chebyshev's inequality we know that with probability at least $3/4$ we have \begin{align*}
    \tsigma^2 = \sigma^2 \pm \sigma^2/2
\end{align*}
We condition on this event, note that it implies that 
\begin{align}
\label{eq:t2}
    T_2 \ge 60 \cdot \frac{\sigma^2}{2\eps^2} - 10/\eps \ge 10\sigma^2/\eps^2  
\end{align}
We also condition on \cref{eq:am} holding. We fix any instantiation $f$ of random bits in the first part of the algorithm and compute 
\begin{align*}
    \E\left[ \sum_{i = T_1 + 1}^{T_1 + T_2} \mu(B_i) | f\right] = T_2 \cdot \mu
\end{align*}
and
\begin{align*}
    \Var\left[ \sum_{i = T_1 + 1}^{T_1 + T_2} \mu(B_i) | f\right] = T_2 \cdot \sigma^2
\end{align*}
Thus, we conclude that with probability at least $3/4$ we have $\sum_{i = T_1 + 1}^{T_1 + T_2} \mu(B_i) = T_2\mu \pm 2\sqrt{T_2}\sigma$ for any choice of $f$. 
We can now finally use \cref{eq:def} to conclude that with probability at least $1/4$ we have
\begin{align*}
    \hmu &= \frac{T_1 \cdot \left( \mu \pm \sqrt{\eps}\sigma/2\right) + T_2\mu \pm 2\sqrt{T_2}\sigma}{T_1 + T_2}
    \\&= \mu \pm\left(
    \frac{\sqrt{\eps}\sigma/2}{2\sqrt{T_1T_2}} 
    +\frac{2\sigma}{\sqrt{T_2}}
    \right)\\
    &= \mu \pm \left( \eps/10 + \eps/2\right) = \mu \pm \eps
\end{align*}
where we used the AM-GM inequality for $T_1, T_2$ and plugged in \cref{eq:t2}. 

Finally, we note that the expected sample complexity of the algorithm follows from \cref{eq:standard}. 
\end{proof}

\subsection{Optimized constants}
\label{sec:optimized}
We give a slightly different variant of \Cref{alg:counting_ones} with a slightly more optimized analysis which is used in \Cref{sec:experiments}. 

\begin{algorithm}
$T_1 \leftarrow \frac{1}{\eps}$ \\
$X_1, \cdots, X_{T_1} \leftarrow$ get $T_1$ samples from $D$\\
Compute $\tilde{\mu} \leftarrow \frac{1}{T_1} \sum_{i=1}^{T_1} X_i$ \\
$\tilde{\sigma}^2 \leftarrow \frac{1}{T_1-1} \sum_{i=1}^{T_1} (X_i-\tilde{\mu})^2$\\
$T_2 \leftarrow \frac{1}{\eps} + 5\frac{\tilde{\sigma}^2}{\eps^2}$\\
$X_{T_1+1}, \cdots, X_{T_1+T_2} \leftarrow$ get $T_2$ new samples from $D$\\ 
\Return{$\hat{\mu} = \frac{1}{T_2} \sum_{i=T_1+1}^{T_2} X_i$} \\
\caption{Estimate $\mu(D)$ by $\hat{\mu}$ such that $\E[(\hat{\mu} - \mu(D))^2] \leq O(\epsilon^2)$} \label{alg:counting_ones_opt}
\end{algorithm}


\begin{theorem} \label{thm:mean_estimation_ub_opt}
Consider any distribution supported on $[0,1]$ with mean $\mu$ and variance $\sigma^2$. \Cref{alg:counting_ones_opt} returns an unbiased estimate $\hat{\mu}$ of $\mu$ with standard deviation of $\sqrt{\E[(\hat{\mu}-\mu)^2]} \leq \sqrt{2.58} \cdot \epsilon$. Its expected sample complexity is $5\frac{\sigma^2}{\epsilon^2} + \frac{2}{\epsilon}$.
\end{theorem}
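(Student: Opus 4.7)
The plan is to mirror the proof of \Cref{thm:mean_estimation_ub}, with the only changes coming from the new coefficients $T_2 = \tfrac{1}{\eps} + 5\tfrac{\tilde\sigma^2}{\eps^2}$ and the final averaging over $T_2$ samples. The structural steps remain: bound expected sample complexity, prove unbiasedness by conditioning on phase~1, and control $\Var[\hat\mu]$ via a Chebyshev split on $\tilde\sigma^2$.

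First I would handle the sample complexity: since \Cref{lem:variance_of_variance} says $\tilde\sigma^2$ is unbiased, $\E[T_1 + T_2] = \tfrac{2}{\eps} + \tfrac{5\sigma^2}{\eps^2}$, giving the stated bound. Unbiasedness of $\hat\mu$ follows exactly as before: conditioning on the phase~1 samples fixes $T_2$, and each phase~2 sample has mean $\mu$, so $\E[\hat\mu \mid T_2] = \mu$ and the law of total expectation closes the argument.

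The main work is the variance bound. Conditioning on $T_2 = t_2$, the phase-2 average has variance $\sigma^2/t_2$, so $\Var[\hat\mu] = \E_1[\sigma^2/T_2]$. The estimate $\Var[\tilde\sigma^2] \le \eps\sigma^2$ from the previous proof still holds (it uses only $T_1 = 1/\eps$ and $(X-\mu)^4 \le (X-\mu)^2$ on $[0,1]$). Rather than the fixed threshold $c = 1/2$ used previously, I would split on the event $\mathcal{E}_1 = \{\tilde\sigma^2 < c\sigma^2\}$ for a parameter $c$ to be optimized. Chebyshev gives $\P(\mathcal{E}_1) \le \eps/((1-c)^2\sigma^2)$. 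On $\mathcal{E}_1$, using $T_2 \ge 1/\eps$, the integrand is at most $\sigma^2\eps$; on $\mathcal{E}_1^c$, using $T_2 \ge 5c\sigma^2/\eps^2$, it is at most $\eps^2/(5c)$. Combining yields
\[
\Var[\hat\mu] \;\le\; \eps^2\!\left(\frac{1}{(1-c)^2} + \frac{1}{5c}\right).
\]

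Finally I would minimize the bracketed expression. Differentiating gives the condition $10c^2 = (1-c)^3$, whose solution near $c \approx 0.218$ yields a value below $2.58$, matching the claimed standard deviation $\sqrt{2.58}\,\eps$. The only potential pitfall is the optimization of $c$ — it is slightly more delicate than the previous analysis, because the improved coefficient $5$ in front of $\tilde\sigma^2/\eps^2$ shifts the optimal threshold away from $1/2$ — but no new ideas are required beyond that numerical tweak.
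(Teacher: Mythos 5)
Your proposal is correct and follows essentially the same route as the paper: bound expected samples via unbiasedness of $\tilde\sigma^2$, establish $\E[\hat\mu]=\mu$ by conditioning on phase~1, and bound $\Var[\hat\mu]=\E_1[\sigma^2/T_2]$ by a Chebyshev split on the event $\{\tilde\sigma^2 < c\sigma^2\}$. The only (cosmetic) difference is that the paper fixes $c=1/4$ outright, which already gives $\frac{1}{(3/4)^2}+\frac{1}{5/4} = \frac{16}{9}+\frac{4}{5}\approx 2.578 < 2.58$, so your numerical optimization of $c$ (landing near $0.218$, value $\approx 2.55$) is a slight sharpening but not needed to reach the stated constant.
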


In the following proof, we use $\mu, \sigma^2$ to represent the true mean and variance of the distribution $D$. We also use $\E_1, \E_2$ whenever we want to stress that the randomness is only over the first or the second phase of the algorithm. 

\begin{proof}

We start by bounding the expected query complexity.
The algorithm performs $\hat\sigma^2/\eps^2 + 2/\eps$ samples. As $\hat\sigma^2$ is an unbiased estimator of $\sigma^2$, we conclude that the expected sample complexity is $\sigma^2/\eps^2 + 2/\eps$, as needed. 

We next prove correctness. We first argue that the estimate $\hat\mu$ is unbiased. Note whatever values $x_1, \dots, x_{T_1}$ were sampled in the first phase, we have 
\[
\E[X_i | X_1 = x_1, \dots, X_{T_1} = x_{T_1}] = \mu
\]
for each $T_1 + 1 \le i \le T_1 + T_2$. Hence, $\E[\hat{\mu}|X_1 = x_1, \dots, X_{T_1} = x_{T_1}] = \mu$ and by the law of total expectation also $\E[\hat\mu] = \mu$. 


We now bound the variance. As $\E[\hat{\mu}]=\mu$, we have
\[\Var[\hat{\mu}] = \E[(\hat{\mu}-\mu)^2] = \E_{1}[\E_{2}[(\hat{\mu}-\mu)^2 | T_2]],\]
where the last line follows from the law of total expectation.
Conditioned on $T_2 = t_2$%
, we have that $\E_2[(\hat{\mu}-\mu)^2|T_2 = t_2] = \frac{\sigma^2}{t_2}$, so we can write
\begin{equation} \label{eq:mean-ub-eq1_2}
\Var[\hat{\mu}] = \E_1\left[\frac{\sigma^2}{T_2}\right] = \E_1\left[\frac{\sigma^2}{\frac{1}{\eps} + 5\frac{\tilde{\sigma}^2}{\eps^2}}\right].    
\end{equation}

Next, we need to bound $\tilde{\sigma}^2$. Using the formula from \Cref{lem:variance_of_variance}, we have 
{\iffocs \small \fi
\begin{align*}&Var[\tilde{\sigma}^2] \\&= \frac{1}{T_1} \E_{X \sim D} [(X-\mu)^4] - \frac{T_1-3}{T_1(T_1-1)} \cdot \E_{X \sim D} [(X-\mu)^2]^2 \\&\le \frac{1}{T_1} \E_{X \sim D} [(X-\mu)^4].\end{align*}
}
Since we always have $|X-\mu| \le 1$, we have $(X-\mu)^4 \le (X-\mu)^2,$ and we can further bound 
\[
\Var[\tilde{\sigma}^2]  \le \frac{1}{T_1} \cdot \E_{X \sim D} [(X-\mu)^2] = \eps \cdot \sigma^2.
\]

Therefore, by Chebyshev's inequality \Cref{lem:chebyshev} and using that $\tilde{\sigma}^2$ is unbiased by \Cref{lem:variance_of_variance}, it holds that 
\[\P\left(\tilde{\sigma}^2 < \frac{\sigma^2}{4}\right) \le \frac{\eps \cdot \sigma^2}{(3/4 \cdot \sigma^2)^2} = \frac{16 \eps}{9\sigma^2}.\]

Let $\mathcal{E}_1$ be the event that $\tilde{\sigma}^2 < \frac{\sigma^2}{4}$. 
In this event, we have 
$$\frac{\sigma^2}{(1/\eps)+5(\tilde{\sigma}^2/\eps^2)} \le \sigma^2 \cdot \eps.$$ Otherwise, 
$$\frac{\sigma^2}{(1/\eps)+5(\tilde{\sigma}^2/\eps^2)} \le \frac{\sigma^2}{5(\sigma^2/4)/\eps^2} \le 4/5 \cdot \eps^2. $$
Therefore,
\begin{align} \label{eq:mean-ub-eq2_2}
    \E\left[\frac{\sigma^2}{\frac{1}{\eps}+\frac{5\tilde{\sigma}^2}{\eps^2}}\right] \nonumber
    &\le \P(\mathcal{E}_1) \cdot \sigma^2 \eps + (1-\P(\mathcal{E}_1)) \cdot 4/5 \cdot \eps^2\\&\le \frac{16\eps}{9\sigma^2} \cdot \sigma^2 \eps + 4/5 \eps^2 \le 2.58 \eps^2.
\end{align}
Combining Equations \eqref{eq:mean-ub-eq1_2} and \eqref{eq:mean-ub-eq2_2} finishes the proof. 
\end{proof}

\section{Instance Optimality Gap for Median}
\label{sec:gap}

Recall that in \Cref{thm:quantile_main} we have presented an algorithm for quantile estimation that is instance-optimal up to a (very small) factor of 
\[
1 + \frac{\log \left( \delta^{-1} \log \eps^{-1} \right)}{\log \delta^{-1}}
\]
which is of order $O(\log\log \eps^{-1})$ for constant $\delta$. 

In this section, we show that this small gap is necessary in the sense that this factor in \Cref{thm:quantile_main} cannot be improved. 
We will focus only on estimating the median. Formally, we define the problem \median as \quantile($1/2$). 
We focus on the following simple setup: The input distribution is either a Bernoulli distribution $B(1/2 + \bar\eps)$ or $B(1/2 - \bar\eps)$. 
Our task is to estimate the median up to additive $\eps$. 
In particular, for $\bar\eps > \eps$, on input $B(1/2 + \bar\eps)$ we have to return \texttt{1} and on input $B(1/2 - \bar\eps)$ we have to return \texttt{0}. 

On one hand, we will show that for each $D = B(1/2 + \bar\eps)$ there exists an $(\eps,\delta)$-estimator $A_D$ that uses only $O(1/\bar\eps^2)$ samples on $D$. On the other hand, we will show that any $(\eps, \delta)$-estimator has to make at least $\Omega(1/\bar\eps^2 \cdot \log\log \eps^{-1})$ samples for at least one $\bar\eps > \eps$, showing the instance-optimality gap. The theorem is stated formally next. 

\begin{theorem}
\label{thm:gap_median}
    The instance optimality gap in the \median problem  is 
    \begin{align}
    \label{eq:factor}
       \Omega\left(1 + \frac{\log \left( \delta^{-1} \log \eps^{-1} \right)}{\log \delta^{-1}}\right).  
    \end{align}    
    That is, for every $\eps, \delta > 0$ there is a finite set of input distributions $\fD$, a set of $(\eps, \delta)$-estimators $\{ A_D, D \in \fD\}$, and a constant $C$ such that the following is true. 
    For every $(\eps, \delta)$-estimator $A$, there exists at least one $D \in \fD$ such that:
    $$
    T(A, D)  \ge  C \cdot \left( 1 + \frac{\log \left( \delta^{-1} \log \eps^{-1} \right)}{\log \delta^{-1}} \right) \cdot T(A_D, D). 
    $$
\end{theorem}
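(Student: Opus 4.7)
The two directions are (i) constructing the per-distribution estimators $A_D$ that meet the $O(\log\delta^{-1}/\bar\eps^2)$ bound, and (ii) proving that no single $(\eps,\delta)$-estimator can simultaneously match these bounds across all scales. I would take the family $\mathcal{D} = \{D_i\}_{i=1}^{k}$ with $D_i = B(1/2+\bar\eps_i)$, $\bar\eps_i = 2^{-i}$, and $k = \lfloor \log_2(1/(4\eps)) \rfloor$, so that every $\bar\eps_i > \eps$ and $k = \Theta(\log \eps^{-1})$. The target factor $\Omega(1+\log(\delta^{-1}\log\eps^{-1})/\log\delta^{-1})$ then reads $\Omega((\log\delta^{-1}+\log k)/\log\delta^{-1})$, which is what the lower bound has to deliver.

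\textbf{Upper bound (easy direction).} For each $D_i$, define $A_{D_i}$ as a two-phase estimator. Phase~1: draw $T_i = O(\log\delta^{-1}/\bar\eps_i^2)$ samples (constant chosen below) and compute the sample mean $\bar X$. If $\bar X \geq 1/2 + \bar\eps_i/2$ output $1$; if $\bar X \leq 1/2 - \bar\eps_i/2$ output $0$; otherwise proceed to Phase~2 where we call a generic $(\eps,\delta/2)$-correct quantile estimator (existing by \Cref{thm:quantile_main}). On $D_i$ itself, Hoeffding gives $\bar X \geq 1/2 + 3\bar\eps_i/4$ with probability $\geq 1-\delta/2$, so we terminate in Phase~1 and output the correct median~$1$; hence $T(A_{D_i},D_i) = O(\log\delta^{-1}/\bar\eps_i^2)$. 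For correctness on \emph{any} input $B(p)$ with $p \le 1/2 - \eps$, the probability that $\bar X$ lands in the ``output $1$'' region is at most $\exp\!\bigl(-2T_i(\bar\eps_i/2 + \eps)^2\bigr) \leq \delta/4$ by Hoeffding (since $\bar\eps_i \geq 4\eps$ forces the gap to be $\Omega(\bar\eps_i)$); symmetrically for $p \ge 1/2 + \eps$. Combined with the fallback's $\delta/2$ error, total failure is $\leq\delta$, as needed.

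\textbf{Lower bound (main obstacle).} I need to show: for every $(\eps,\delta)$-correct $A$, there exists $i \in [k]$ with $T(A,D_i) = \Omega((\log\delta^{-1}+\log k)/\bar\eps_i^2)$. The base $\Omega(\log\delta^{-1}/\bar\eps_i^2)$ bound at each fixed scale follows directly from the Hellinger-distance lower bound used already for \mean (\Cref{lem:counting_lb_2} applied to the pair $D_i, D_i'$ with $D_i' = B(1/2-\bar\eps_i)$). The extra $\log k = \Theta(\log\log\eps^{-1})$ factor is the hard part and requires exploiting the fact that $A$ must be simultaneously correct at all $k$ scales without knowing which one is active. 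My plan is to argue by contradiction: assume $T_i := T(A, D_i) < c(\log\delta^{-1}+\log k)/\bar\eps_i^2$ for every $i$ (small $c$), and derive that $A$ cannot be $(\eps,\delta)$-correct at all scales at once.

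\textbf{Executing the lower bound via likelihood-ratio / LIL.} For each $i$ define the log-likelihood-ratio walk $\Lambda_i(t) = \sum_{j\le t}(2X_j-1)\log\tfrac{1/2+\bar\eps_i}{1/2-\bar\eps_i}$, so under $D_i'$ it is a random walk with negative drift $\Theta(\bar\eps_i^2)$ and step-size $\Theta(\bar\eps_i)$. A standard change-of-measure argument shows that $A$'s output at its stopping time on $D_i$ is determined (up to $\delta$-error) by whether $\Lambda_i(\tau)$ exceeds $\log(1/\delta)$. The core technical step is a Ville/maximal-inequality bound, applied under $D_k'$, stating that if $A$ stops in fewer than $c(\log\delta^{-1}+\log k)/\bar\eps_i^2$ steps at \emph{every} scale, then with probability $\gg \delta$ there exists some $i$ and some $t \le T_i$ at which $\Lambda_i(t) > \log\delta^{-1}$ purely from LIL-type fluctuations, causing $A$ to wrongly declare ``output 1'' under $D_k'$. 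Summing this ``false crossing'' probability over the $k$ scales and using that the scales' walks can be coupled via the same underlying Bernoulli samples, the required $\log k$ slack follows by a scale-wise union bound unless $A$ spends extra time at one scale. This is essentially the same obstruction that forces lil'UCB-style algorithms to pay a $\log\log$ factor in best-arm identification; I would adapt that argument, or alternatively prove it via $k$ parallel two-point Hellinger tests combined by Fano's inequality across the index $I \in [k]$.

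\textbf{Where I expect difficulty.} The clean part is the family $\mathcal{D}$ and the tailored estimators $A_{D_i}$; these are routine applications of Hoeffding and the generic median estimator. The genuinely delicate step is turning the informal ``$A$ must hedge over $k$ scales'' intuition into the extra additive $\log k$ inside $T_i$. The cleanest route is likely a maximal-inequality / Ville's inequality applied to the exponential martingales $\exp(\Lambda_i(t))$ under $D_k'$, together with a scale-wise reduction showing that shaving below $c(\log\delta^{-1}+\log k)/\bar\eps_i^2$ uniformly in $i$ forces the union of ``false crossing'' events to exceed $\delta$, yielding the desired contradiction.
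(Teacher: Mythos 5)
Your high-level structure matches the paper's: pick a family of Bernoulli distributions at geometrically-spaced scales $\bar\eps_i$, build per-scale estimators $A_{D_i}$ that terminate in $O(\log\delta^{-1}/\bar\eps_i^2)$ samples, and argue that a single estimator must pay an extra $\Theta(\log\log\eps^{-1})$ factor at some scale. Your upper bound (a Hoeffding test with a generic fallback) is essentially the paper's construction; the paper's $A_{\text{test}}$ iterates this test over increasing budgets $N_j = C_0^j/\eps_i^2\cdot\log\delta^{-1}$ but the idea and analysis are the same.

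The lower bound is where you genuinely diverge, and where your proposal is still a plan rather than a proof. The paper's argument is elementary and self-contained: it partitions the sample timeline into phases ending at $t_i = \Theta\bigl(\log(\delta^{-1}\log\eps^{-1})/\eps_i^2\bigr)$, uses two Hellinger/TV bounds (\Cref{cl:coupling_small}) to couple $B(1/2\pm\eps_i)$ with $B(1/2)$ up to error $0.01/\log\eps^{-1}$ before phase $i$ and up to error $1-\delta/\log\eps^{-1}$ within phase $i$, shows via these couplings and the correctness assumption that the algorithm must stop in phase $i$ with probability $\geq 1/2$ under $D_i$ (\Cref{cl:ps}), transfers a $\delta/(100\sqrt{\log\eps^{-1}})$ lower bound on the probability of stopping in each phase under $B(1/2)$ (\Cref{prop:loglogmain}), and sums over the $\ell=\Theta(\log\eps^{-1}/\log\log\eps^{-1})$ phases to exceed $100\delta$, which forces an error $> \delta$ on one of $B(1/2\pm\eps_\ell)$ (\Cref{cl:cont}). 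Two things to watch if you pursue your Ville/LIL or Fano route: (1) you never actually supply the ``false-crossing probability $\gg\delta$'' bound or its adaptation from lil'UCB; you say ``I would adapt that argument'' and ``likely a maximal inequality'', so the core step is missing; and (2) your scale spacing $\bar\eps_i = 2^{-i}$ is far tighter than the paper's $\eps_i = (100\log^3\eps^{-1})^{-i}$. The wide spacing is essential to the paper's coupling: with ratio $2$, $t_{i-1}$ is $\Theta(\log(\delta^{-1}\log\eps^{-1})/\eps_i^2)$, so the Hellinger distance between $B(1/2\pm\eps_i)$ accumulated over $t_{i-1}$ samples is already $\Omega(1)$ and the pre-phase TV bound fails completely. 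Tight spacing is the natural choice for an LIL/martingale argument, but it forecloses the cleaner coupling-based derivation, so you are committed to making the harder martingale argument work, which your write-up does not yet do.
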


The theorem is proven in the rest of this section. 

\paragraph{Instances with fast instance-specific algorithms}
We start by fixing any positive constants $\eps, \delta > 0$ that are sufficiently small. We will also assume that 
\begin{align}
    \label{eq:eps_delta}
    \log \eps^{-1} > \delta^{-1}
\end{align}
since otherwise \cref{eq:factor} reduces to a constant and there is nothing to prove. 

We start by formally specifying the set of input distributions $\fD$ of interest and show a fast instance-specific algorithm for each such distribution. 
Given $\eps > 0$, we will consider the following set of distributions. Let 
\begin{align}
    \label{def:ei}
    \eps_i := \left( \frac{1}{100 \log^3 \eps^{-1}} \right)^i
\end{align}
and consider the sequence $\eps_1, \eps_2, \dots, \eps_\ell$ for $\ell = {O(\log \eps^{-1} / \log\log \eps^{-1})}$ chosen so that for each $i \le \ell$ we have $\eps_i > \eps$. 
For each such $\eps_i$, we consider the following two Bernoulli distributions: a Bernoulli distribution that returns $0$ with probability $1/2 + \eps_i$ that we denote $B(1/2 + \eps_i)$, and the analogously defined distribution $B(1/2 - \eps_i)$. 
We let $\fD$ to be the set of $2t$ of these Bernoulli distributions. 

Next, we need to define the set of algorithms $A_D$ for each $D \in \fD$. 

\begin{claim}
For each distribution $D = B(1/2 + \eps_i)$ or $D = B(1/2 - \eps_i)$, $ D \in \fD$, there is an $(\eps, \delta)$-estimator $A_D$ of \median such that $T(A_D, D) = O(1/\eps_i^2 \cdot \log \delta^{-1})$. 
\end{claim}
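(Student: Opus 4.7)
My plan is to exhibit $A_D$ as a tuned sequential probability ratio test (SPRT). By symmetry, focus on $D = B(1/2 + \eps_i)$, whose median is $0$. Convert each input sample $X_i$ into its indicator $Y_i = \mathbf{1}[X_i \leq 0]$, which is Bernoulli with parameter $p_* = \P(X \leq 0)$. The algorithm runs Wald's SPRT for the two simple Bernoulli hypotheses $H_0 : p_* = 1/2 - \eps$ and $H_1 : p_* = 1/2 + \eps_i$, maintaining the log-likelihood ratio $S_t = \sum_{i \leq t} \log(f_{p_1}(Y_i)/f_{p_0}(Y_i))$ and stopping as soon as $S_t$ exits the asymmetric interval $[-L,\ \log \delta^{-1}]$, where $L = \Theta(\log \delta^{-1} \cdot \max(1, \eps_i/\eps))$. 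It outputs $0$ if the upper threshold is crossed (accept $H_1$) and $1$ if the lower one is (accept $H_0$). A cheap fallback to any generic $(\eps, \delta/2)$-estimator handles the low-probability event that the walk has not crossed within a prescribed maximum budget, and also covers inputs whose true median lies outside $\{0,1\}$.

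For correctness, I would invoke Wald's exponential martingale inequality. For any $p_* \leq 1/2 - \eps$, the positive root $\lambda(p_*) \geq 1$ of $\E_{p_*}[(f_{p_1}/f_{p_0})^{\lambda}] = 1$ yields $\P[\text{upper threshold hit}] \leq (e^{-\log \delta^{-1}})^{\lambda(p_*)} \leq \delta$, which rules out incorrectly outputting $0$ on inputs where the true median is not within $\eps$ of $0$. The dual exponential martingale, combined with the enlarged lower threshold $L$, symmetrically rules out incorrectly outputting $1$ on inputs with $p_* \geq 1/2 + \eps$: a short computation shows $|\lambda(1/2 + \eps)| \approx 4\eps/\eps_i$ on this side, so scaling $L$ by a factor of $\eps_i/\eps$ gives a net exponent at least $1$. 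Inputs with $p_* \in (1/2 - \eps,\ 1/2 + \eps)$ admit both $0$ and $1$ as $\eps$-correct answers, and inputs whose median is far from $\{0,1\}$ fall through to the fallback.

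For the expected sample size on $D$, the increment $\log(f_{p_1}(Y)/f_{p_0}(Y))$ has drift $\mathrm{KL}(p_1 \| p_0) = \Theta(\eps_i^2)$ under $D$ (a direct expansion using $\eps \leq \eps_i$), so Wald's identity gives $\E_D[N] = \E_D[S_N]/\Theta(\eps_i^2)$. Since the walk hits the upper threshold with probability $1 - \delta^{\Omega(\eps_i/\eps)}$ under $D$ and overshoots are bounded, $\E_D[S_N] = O(\log \delta^{-1})$, which yields $\E_D[N] = O(\log \delta^{-1}/\eps_i^2)$ as required.

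The main obstacle will be carefully solving for the root $\lambda(p_*)$ (or bounding it via the standard Chernoff-type estimates in Bernoulli models) and verifying that the asymmetric lower threshold delivers error at most $\delta$ uniformly over $p_* \geq 1/2 + \eps$, rather than only at the design point $p_* = p_1$; once this is in place, the remaining computations are routine applications of Wald's identity and the exponential martingale.
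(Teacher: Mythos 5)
Your SPRT approach is genuinely different from the paper's, which runs a simple doubling-phase Chernoff test in parallel with a generic $(\eps, \delta/2)$-estimator: in phase $j$ the test draws $C_0^{j}\,\eps_i^{-2}\log\delta^{-1}$ fresh samples and commits to the answer favored by $D$ if and only if the empirical frequency is at least $\eps_i/2$ past $1/2$, and otherwise simply proceeds to the next phase. Crucially, the paper's test \emph{never} commits to the opposite answer---only the parallel generic estimator can. Your certifier is two-sided, and that is where the argument breaks.

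The log-likelihood increment $Z = \log\bigl(f_{p_1}(Y)/f_{p_0}(Y)\bigr)$ has drift $\E_{p_*}[Z] = \mathrm{KL}(p_* \Vert p_0) - \mathrm{KL}(p_* \Vert p_1)$, which changes sign near $p^{\mathrm{crit}} \approx \tfrac12 + \tfrac{\eps_i - \eps}{2}$, not at the decision boundary $\tfrac12 + \eps$. Whenever $\eps_i > 3\eps$ (which holds for essentially every $D \in \fD$, since the $\eps_i$ form a steep geometric sequence ending just above $\eps$), the interval $\bigl(\tfrac12 + \eps,\ p^{\mathrm{crit}}\bigr)$ is nonempty, and for $p_*$ in it the drift is strictly negative, so $S_t \to -\infty$ almost surely. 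Applying optional stopping with the positive root $\lambda^* > 0$ of $\E_{p_*}[e^{\lambda Z}] = 1$---a positive root exists precisely because the drift is negative---gives $\P\bigl[\text{hit } \log\delta^{-1} \text{ before } -L\bigr] \le e^{-\lambda^* \log\delta^{-1}}$; hence the walk hits $-L$ first with probability at least $1-\delta^{\lambda^*}$, and your algorithm outputs $1$. But outputting $1$ for any $p_* > \tfrac12 + \eps$ has quantile error exceeding $\eps$. Enlarging $L$ cannot repair this: with negative drift the walk eventually reaches any finite $-L$, and the budget-triggered fallback never fires because the walk does exit the interval---it merely exits on the wrong side.

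Relatedly, your stated root $|\lambda(\tfrac12+\eps)| \approx 4\eps/\eps_i$ is off. A second-order Taylor expansion of $\E_{p_*}[e^{\lambda Z}] = 1$ gives $\lambda^* \approx (\eps_i - 3\eps)/(\eps_i + \eps)$, which tends to $1$ as $\eps/\eps_i \to 0$, not to $0$. More fundamentally, it has the wrong sign for your purpose: a positive root upper-bounds the probability of hitting the \emph{upper} threshold (the event you want to happen), while the negative root that would upper-bound $\P[\text{hit } -L]$ exists only when the drift is positive, i.e.\ only for $p_* > p^{\mathrm{crit}}$, far beyond the decision boundary.

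The fix---and the feature that makes the paper's proof work---is to make the custom test one-sided: commit only when the upper threshold is crossed, and let every other outcome (lower threshold, budget exhaustion) fall through to the parallel generic estimator. Until the ``output $1$ on $-L$'' semantics is removed, the correctness claim over the composite alternative $p_* \ge \tfrac12 + \eps$---which you correctly flag as the main obstacle---cannot go through.
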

\begin{proof}
We consider only the case $D = B(1/2 + \eps_i)$. The algorithm $A_D$ runs two parallel algorithms $A_{correct}$ and $A_{test}$, meaning that it always samples two independent samples, feeds one sample to each of the two algorithms, and waits until one of them finishes and returns an estimate that $A_D$ returns, too.  

The algorithm $A_{correct}$ is any $(\eps, \delta/2)$-estimator of \median. The algorithm $A_{test}$ works as follows. 
Choose a large enough $C_0$. In the $j$-th phase, the algorithm samples $N_j = C_0^j / \eps_i^2 \cdot \log \delta^{-1}$ fresh samples and computes the frequency of ones in them. 
If it is at least $1/2 + \eps_i/2$, the algorithm returns \texttt{1}. Otherwise, the algorithm $A_{test}$ proceeds to the next phase; in particular, it never returns \texttt{0}. This finishes the description of the algorithm. 

To see that $A_D$ is an $(\eps, \delta)$-estimator, consider any input distribution $D_0$. 
If the probability of sampling \texttt{1} under $D_0$ is at least $1/2$, $A_{test}$ never returns a wrong answer and thus $A_D$ is an $(\eps, \delta/2)$-estimator. Otherwise, each phase of the algorithm $A_{test}$ has some probability of returning a wrong answer since the observed frequency of \texttt{1} can be more than $1/2 + \eps_i/2$ even though the probability of sampling it is less than $1/2$. In particular, the probability that $A_{test}$ returns \texttt{1} in the $j$-th phase can be bounded by Chernoff bound (\Cref{lem:chernoff}) as 
{\iffocs \small \fi
\begin{align*}
    2\exp\left( -2 \cdot \left( \frac{\eps}{2} \cdot N_j \right)^2 / N_j \right) \le \exp \left( - C_0^j/ 10 \cdot \log \delta^{-1}  \right)
\end{align*}
}
Summing up this upper-bound over all $j \ge 0$, we conclude that the overall failure probability is at most $\delta/2$. We union bound with the possibility of failure of $A_{correct}$ to conclude that $A_D$ fails with probability at most $\delta$. 

To see that $T(A_D, D) = O(1/\eps_i^2)$, we notice that while the frequency of \texttt{1} in $D$ is $1/2 + \eps_i$, the algorithm $A_{test}$ stops once the observed frequency is $1/2 + \eps_i/2$; thus we can again use Chernoff bound to compute that the probability that $A_D$ does not finish after phase $j$ is at most $\exp \left( - C_0^j /10\right)$. 
The expected sample complexity of $A_D$ is thus at most
\begin{align*}
     O(1/\eps_i^2) \cdot \sum_{j \ge 0} C_0^j \cdot \exp \left( - C_0^j /10\right) =     O(1/\eps_i^2)
\end{align*}
as needed. 
\end{proof}

\paragraph{Lower bound}
We next need to prove that for every $(\eps, \delta)$-estimator $A$, there is at least one input $D \in \fD$ such that if that input is $B(1/2-\eps_i)$ or $B(1/2+\eps_i)$, then $T(A, D) = \Omega( 1/\eps_i^2 \cdot \log \left( \delta^{-1} \log \eps^{-1} \right))$. 
We will next for contradiction assume that this is not the case. 
That is, we will assume that for any constant $C > 0$  there is an algorithm $A_0$ such that for every $D = B(1/2 \pm \eps_i) \in \fD$, we have
\begin{align}
    \label{eq:contradiction}
T(A_0, D) < \frac{1}{C} \cdot \frac{1}{\eps_i^2} \cdot \log \left( \delta^{-1} \log \eps^{-1} \right)   
\end{align}

We will now show how to derive a contradiction from the above statement, provided that $\eps, \delta$ are small enough. 

We split the run of $A_0$ into phases, where the $i$-th phase finishes after $A_0$ samples $t_i$ elements, where we define $t_i$ as 
\begin{align}
    \label{eq:phase_len}
    t_i = \frac{100}{C} \cdot \frac{1}{\eps_i^2} \cdot \log \left( \delta^{-1} \log \eps^{-1} \right). 
\end{align}

Next, we use Hellinger distance to formalize that $A_0$ cannot distinguish $B(1/2 - \eps_i), B(1/2)$, and $B(1/2 + \eps_i)$ until the $i$-th phase, since the distributions can be coupled with very small error. Importantly, even after the $i$-th phase, there is a non-negligible coupling between the distributions. This is formalized in the next claim. 

\begin{claim}
    \label{cl:coupling_small}
    Fix any $i$ with $1 \le i \le \ell$. For any two different distribution $D_1, D_2 \in \{ B(1/2 - \eps_i), B(1/2), B(1/2 + \eps_i) \} $, we have 
    \begin{align}
        \label{eq:hsmall}
        H^2(D_1^{\otimes t_{i-1}}, D_2^{\otimes t_{i-1}}) \le 10^{-6}/\log^2 \eps^{-1}
    \end{align}
    and
    \begin{align}
        \label{eq:hsmall2}
        H^2(D_1^{\otimes (t_i - t_{i-1})}, D_2^{\otimes (t_i - t_{i-1})}) \le 1 - \left( \frac{\delta}{\log\eps^{-1}} \right)^{0.1}
    \end{align}
    In particular, by \Cref{fact:hellinger_implies_l1} we have
    \begin{align}
        \label{eq:tvsmall}
        d_{TV}(D_1^{\otimes t_{i-1}}, D_2^{\otimes t_{i-1}}) \le 0.01 / \log \eps^{-1}
    \end{align}
    and
    \begin{align}
        \label{eq:tvsmall2}
        d_{TV}(D_1^{\otimes (t_i - t_{i-1})}, D_2^{\otimes (t_i - t_{i-1})}) \le 1 - \frac{\delta}{\log \eps^{-1}}
    \end{align}

\end{claim}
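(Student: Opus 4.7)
The plan is to first compute $H^2(D_1, D_2)$ for a single pair of distributions and then use the tensorization identity (Fact~\ref{fact:hellinger_product}) to obtain the desired bounds on the tensored distributions. The worst case (largest Hellinger distance) among the three pairs is $D_1 = B(1/2 - \eps_i)$, $D_2 = B(1/2 + \eps_i)$, so I would focus on that; the other two pairs can only give a smaller Hellinger distance. By definition,
\[
H^2(B(1/2 - \eps_i), B(1/2 + \eps_i)) = 1 - \sqrt{(1/2 - \eps_i)(1/2 + \eps_i)} - \sqrt{(1/2 + \eps_i)(1/2 - \eps_i)} = 1 - \sqrt{1 - 4\eps_i^2}.
\]
Using Fact~\ref{fact:sqrt} (applied to $\sqrt{1 - 4\eps_i^2}$ for sufficiently small $\eps_i$), this is at most $4\eps_i^2$. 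So in all three cases, $H^2(D_1, D_2) \le 4\eps_i^2$.

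Next, I apply Fact~\ref{fact:hellinger_product}: $H^2(D_1^{\otimes n}, D_2^{\otimes n}) = 1 - (1 - H^2(D_1, D_2))^n \le 1 - (1 - 4\eps_i^2)^n$. For \eqref{eq:hsmall}, I use the elementary bound $1 - (1-x)^n \le nx$ with $n = t_{i-1}$ and $x = 4\eps_i^2$, giving
\[
H^2(D_1^{\otimes t_{i-1}}, D_2^{\otimes t_{i-1}}) \le 4\eps_i^2 \cdot t_{i-1} = \frac{400}{C} \cdot \left(\frac{\eps_i}{\eps_{i-1}}\right)^2 \cdot \log(\delta^{-1} \log \eps^{-1}).
\]
By the definition \eqref{def:ei}, $(\eps_i / \eps_{i-1})^2 = 1/(100 \log^3 \eps^{-1})^2 = 10^{-4}/\log^6 \eps^{-1}$, and by assumption \eqref{eq:eps_delta}, $\log(\delta^{-1} \log \eps^{-1}) \le 2 \log \log \eps^{-1}$. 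Combining these and picking $C$ large enough, the right-hand side is bounded by $10^{-6}/\log^2 \eps^{-1}$, as needed.

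For \eqref{eq:hsmall2}, I want $(1 - 4\eps_i^2)^{t_i - t_{i-1}} \ge (\delta / \log \eps^{-1})^{0.1}$. Since the left-hand side is decreasing in the exponent, it suffices to prove this with $t_i - t_{i-1}$ replaced by $t_i$. Using $\log(1-x) \ge -2x$ for $x \in [0,1/2]$, I get $(1 - 4\eps_i^2)^{t_i} \ge \exp(-8 \eps_i^2 t_i)$, and by the definition of $t_i$ in \eqref{eq:phase_len}, $\eps_i^2 t_i = (100/C) \log(\delta^{-1} \log \eps^{-1})$. Thus $(1 - 4\eps_i^2)^{t_i} \ge (\delta / \log \eps^{-1})^{800/C}$, which is at least $(\delta / \log \eps^{-1})^{0.1}$ for $C \ge 8000$. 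Finally, \eqref{eq:tvsmall} and \eqref{eq:tvsmall2} follow directly from Fact~\ref{fact:hellinger_implies_l1} applied to \eqref{eq:hsmall} and \eqref{eq:hsmall2} respectively. The main ``obstacle'' is purely bookkeeping: verifying that the chain of inequalities closes with the slack left by the $100$'s in \eqref{def:ei} and \eqref{eq:phase_len}, which was designed precisely so that $C$ can be chosen large enough at the end of the argument.
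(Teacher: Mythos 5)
Your proof is correct and follows essentially the same route as the paper: compute a single-copy bound $H^2(D_1,D_2)=O(\eps_i^2)$ via \Cref{fact:sqrt}, tensorize with \Cref{fact:hellinger_product}, plug in the definitions of $\eps_i$ and $t_i$ together with assumption \eqref{eq:eps_delta}, and pass to TV via \Cref{fact:hellinger_implies_l1}. The only cosmetic differences are that you use the slightly cruder single-copy bound $4\eps_i^2$ (the paper's stated bound of $\eps_i^2$ is in fact slightly optimistic—the true value is $\approx 2\eps_i^2$, but this does not affect anything) and you invoke the elementary estimates $1-(1-x)^n \le nx$ and $\log(1-x)\ge -2x$ rather than directly manipulating $(1-\eps_i^2)^n$ as the paper does.
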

\begin{proof}
    We will prove the results just for $D_1 = B(1/2 - \eps_i)$ and $D_2 = B(1/2 + \eps_i)$. 
    We start with \Cref{eq:hsmall}. 
    Using \Cref{def:hellinger} and \Cref{fact:sqrt}, we have
    \begin{align*}
        H^2(D_1, D_2)
        &= \left( \sqrt{1/2 + \eps_i} - \sqrt{1/2 - \eps_i} \right)^2\\
        &= \frac12 \left( \sqrt{1+2\eps_i} - \sqrt{1 - 2 \eps_i}\right)^2\\
        & \le \frac12 \left( (1+\eps_i) - (1 - \eps - 2\eps_i^2)\right)
        = \eps_i^2
    \end{align*}
    Next, note that
    \begin{align*}
        t_{i-1} &= \frac{100}{C} \log \left( \delta^{-1} \log \eps^{-1} \right) \cdot \frac{1}{\eps_{i-1}^2}
        \\&=\frac{100}{C} \log \left( \delta^{-1} \log \eps^{-1} \right) \cdot \frac{1}{\eps_{i}^2} \cdot \frac{1}{(100 \log^3 \eps^{-1})^2}
        \\&\le \frac{1}{10^{10}\eps_i^2 \log^2 \eps^{-1}}
    \end{align*}
    where we used that $C$ is large enough and \cref{eq:eps_delta}. 
    Thus, using this bound on $t_{i-1}$ and the product properties of Hellinger distribution of \Cref{fact:hellinger_product}, we get
    \begin{align*}
        H^2(D_1^{\otimes t_{i-1}}, D_2^{\otimes t_{i-1}}) =
        &= 1 - (1 - H^2(D_1, D_2))^{t_{i-1}}
        \\& \le 1 - (1 - \eps_i^2)^{1/(10^{10}\eps_i^2\log^2 \eps^{-1}) } \\
        &\le 10^{-6} / \log^2 \eps^{-1}
    \end{align*}
    as needed. 

    We continue similarly with \Cref{eq:hsmall2}. We have 
    \begin{align*}
        H^2(&D_1^{\otimes (t_i - t_{i-1})}, D_2^{\otimes (t_i - t_{i-1})})
        \\&= 1 - (1 - H^2(D_1, D_2))^{t_i - t_{i-1}}
        \\&\le 1 - (1 - \eps_i^2)^{\frac{100\log \left( \delta^{-1} \log \eps^{-1} \right)}{C} \frac{1}{\eps_i^2}}\\
        &\le 1 - \exp\left( - 1000/ C \cdot \log \left( \delta^{-1} \log \eps^{-1} \right)  \right)
        \\&\le 1 - \left( \frac{\delta}{\log \eps^{-1}}\right)^{0.1}
    \end{align*}
    using that $C$ is large enough. 
    
    To prove equations \Cref{eq:tvsmall,eq:tvsmall2}, we rewrite the second inequality of \Cref{fact:hellinger_implies_l1} as 
    $$
    d_{TV}(p,q) \le \sqrt{ 1 - (1 - H^2(p,q))^2} 
    $$
    and in particular, write
    \begin{align*}
        d_{TV}(D_1^{\otimes t_{i-1}}, D_2^{\otimes t_{i-1}})
        &\le \sqrt{1 - (1 - 10^{-6} / \log^2 \eps^{-1})^2} \\&\le 0.01 / \log \eps^{-1}
    \end{align*}
    and
    \begin{align*}
        d_{TV}(&D_1^{\otimes (t_i - t_{i-1})}, D_2^{\otimes ((t_i - t_{i-1}))}) \leq
        \\&\le \sqrt{1 - \left(1 - \left(1 - \frac{1}{\left( \frac{\delta}{\log \eps^{-1}}\right)^{0.1}}\right)\right)^2}\\ 
        &= \sqrt{1 - \frac{\delta^{0.2}}{\log^{0.2}\eps^{-1}} } 
    \le  1 - \frac{\delta}{\log \eps^{-1}}
    \end{align*}
    \end{proof}

We will prove by induction that $A_0$ has to always have a nonnegligible chance of terminating in each phase when run on $B(1/2)$. 
We use the following notation. Let $S_i$ be the event that $A_0$ stops in the $i$-th phase and similarly, $S_{\le i}$ the event that it stops before the end of the $i$-th phase. 
We also use the notation $\P_p$ to denote the probability of an event if the input distribution is $B(p)$. 
We will next bound probabilities of various stopping events when $A_0$ is run on $B(1/2 + \eps_i)$ or $B(1/2 - \eps_i)$. 

\begin{claim}
\label{cl:ps}
For every $0 \le i \le \ell$ we have 
\begin{enumerate}
    \item $\P_{1/2 + \eps_i}(S_{\le i}) \ge 0.99$
    \item $\P_{1/2 + \eps_i} ( S_{\le i - 1} ) \le 0.4$
    \item $\P_{1/2 + \eps_i}(S_i ) \ge 0.5$ 
    \item Define $S' \subseteq \neg S_{\le i-1}$ to be the set of all elementary events $e$ for which $\P_{1/2 + \eps_i}(S_i | e) \ge 0.1$. We have $\P_{1/2 + \eps_i}(S' ) \ge 0.4 $. 
\end{enumerate}
The same holds if we replace $1/2 +\eps_i$ by $1/2 - \eps_i$. 
\end{claim}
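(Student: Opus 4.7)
My plan is to prove the four parts sequentially (they chain directly, so no joint induction on $i$ is needed). The tools are: (a) Markov's inequality applied to the expected sample complexity bound that follows from \eqref{eq:contradiction} and \eqref{eq:phase_len}, (b) the total variation inequality \eqref{eq:tvsmall} between $B(1/2+\eps_i)^{\otimes t_{i-1}}$ and $B(1/2-\eps_i)^{\otimes t_{i-1}}$ combined with the assumed $(\eps,\delta)$-correctness of $A_0$, and (c) a simple averaging argument for the last item.

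For Part 1, combining \eqref{eq:contradiction} and \eqref{eq:phase_len} gives $T(A_0, B(1/2+\eps_i)) < t_i/100$, so Markov's inequality yields $\P_{1/2+\eps_i}(\text{samples}\ge t_i) \le 1/100$, whence $\P_{1/2+\eps_i}(S_{\le i}) \ge 0.99$. For Part 2, let $E$ be the event that $A_0$ stops by sample $t_{i-1}$ and outputs the label $1$. Since $E$ depends only on the first $t_{i-1}$ samples (the algorithm's internal randomness is independent of the input distribution), \eqref{eq:tvsmall} gives $|\P_{1/2+\eps_i}(E) - \P_{1/2-\eps_i}(E)| \le 0.01/\log\eps^{-1}$. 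By $(\eps,\delta)$-correctness, under $B(1/2+\eps_i)$ the algorithm must return $1$ except with probability $\delta$, so $\P_{1/2+\eps_i}(E) \ge \P_{1/2+\eps_i}(S_{\le i-1}) - \delta$; under $B(1/2-\eps_i)$ it must return $0$ except with probability $\delta$, so $\P_{1/2-\eps_i}(E) \le \delta$. Chaining these yields $\P_{1/2+\eps_i}(S_{\le i-1}) \le 2\delta + 0.01/\log\eps^{-1} \le 0.4$ for small enough $\eps,\delta$. Part 3 is then the subtraction $\P(S_i) = \P(S_{\le i}) - \P(S_{\le i-1}) \ge 0.99 - 0.4 \ge 0.5$.

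For Part 4, decompose $\P_{1/2+\eps_i}(S_i) = \int \P_{1/2+\eps_i}(S_i \mid e)\, d\P_{1/2+\eps_i}(e)$, where the integral ranges over elementary events $e \subseteq \neg S_{\le i-1}$ (any $e$ in $S_{\le i-1}$ contributes $0$ since $S_i$ and $S_{\le i-1}$ are disjoint). Splitting into $e \in S'$ and $e \in \neg S' \cap \neg S_{\le i-1}$, the second region contributes at most $0.1 \cdot 1 = 0.1$, so $\P(S') \ge \P(S_i) - 0.1 \ge 0.4$. The symmetric claim for $B(1/2-\eps_i)$ is identical, with the roles of $0$ and $1$ swapped in the correctness step of Part 2.

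The only non-routine step is Part 2, which is where the actual structure of the lower bound lives: the small TV distance on $t_{i-1}$ samples forces early-stopping mass under $B(1/2+\eps_i)$ to approximately coincide with that under $B(1/2-\eps_i)$, while correctness forbids both distributions to output the same label; bookkeeping then yields the $0.4$ bound. Parts 1, 3, 4 are mechanical consequences once Part 2 is in hand.
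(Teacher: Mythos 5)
Your proof is correct and follows essentially the same route as the paper's: Markov on the expected sample complexity for Part~1, a total-variation/coupling argument combined with $(\eps,\delta)$-correctness for Part~2 (the paper phrases it as a proof by contradiction, you phrase it as a direct chain of inequalities, but the content is identical), subtraction for Part~3, and the same averaging/splitting argument over $S'$ for Part~4. No gaps.
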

\begin{proof}
We start with the first item. We note that for the expected sample complexity of $A_0$ we have
\begin{align*}
    T(A_0, B(1/2 + \eps_i))
    \ge t_i \cdot \left( 1 - \P_{1/2 + \eps_i}(S_{\le i}) \right)
\end{align*}
Plugging in our assumption from \Cref{eq:contradiction} and the definition of $t_i$ from \Cref{eq:phase_len}, we thus conclude that
\begin{align*}
    \frac{1}{C} \cdot &\frac{1}{\eps_i^2} \cdot \log \left( \delta^{-1} \log \eps^{-1} \right) \geq 
    \\&\ge \frac{100}{C} \cdot \frac{1}{\eps_i^2} \cdot \log \left( \delta^{-1} \log \eps^{-1} \right) \cdot \left( 1 - \P_{1/2 + \eps_i}(S_{\le i}) \right)
\end{align*}
Hence,
\begin{align}
    \label{eq:plarge}
    \P_{1/2 + \eps_i}(S_{\le i}) \ge 0.99. 
\end{align}
We similarly have $\P_{1/2 - \eps_i}(S_{\le i}) \ge 0.99$ which proves the first item. 

We continue with the second inequality. For a contradiction, assume that 
$$
\P_{1/2 + \eps_i}(S_{\le i - 1}) > 0.4. 
$$
Hence, since $A_0$ is $(\eps, \delta)$-estimator on $B(1/2 + \eps_i)$ and since we can assume $\delta < 0.1$, we have
$$
\P_{1/2 + \eps_i}(S_{\le i-1} \cap \text{ $A_0$ outputs 1}) > 0.3. 
$$
Finally, \Cref{cl:coupling_small} provides a coupling between $B(1/2 + \eps)^{t_{i-1}}$ and $B(1/2 - \eps)^{t_{i-1}}$ up to $0.01$ error, thus
$$
\P_{1/2 - \eps_i}(S_{\le i-1} \cap \text{ $A_0$ outputs 1}) > 0.29 > 0.1, 
$$
a contradiction with $A_0$ be $(\eps, 0.1)$-estimator on $B(1/2 - \eps_i)$. The same holds if we swap $1/2 + \eps_i$ with $1/2 - \eps_i$. 

The third item follows from the fact that $\P_{1/2 + \eps_i}(S_i ) = \P_{1/2 + \eps_i}(S_{\le i} ) - \P_{1/2 + \eps_i}(S_{\le i-1} ) > 0.5 $ where we used the previous two inequalities. 

Finally, the last item is proven by the following argument. 
Notice that by the definition of $S'$, we can write
\begin{align*}
    \P_{1/2 + \eps_i}(S_{i} )
    &= \P_{1/2 + \eps_i}(S_{i} \cap S') + \sum_{e \not\in S'} \P(e) \cdot \P(S_i | e)
    \\&\le \P_{1/2 + \eps_i}(S_{i} \cap S') + 0.1
\end{align*}
On the other hand, notice that $\P_{1/2 + \eps_i}(S_{i} ) \ge 0.5$ by the third item. Putting the two inequalities together, we conclude that $\P_{1/2 + \eps_i}(S' ) \ge \P_{1/2 + \eps_i}(S_{i} \cap S') \ge 0.4$, as needed. 

\end{proof}

We can now prove the following proposition: 
\begin{proposition}
\label{prop:loglogmain}
    For every $0 \le i \le \ell$, we have that 
    \begin{align*}
        \P_{1/2}(S_i) \ge \frac{\delta}{100 \sqrt{\log \eps^{-1}}}
    \end{align*}
\end{proposition}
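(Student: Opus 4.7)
The plan is to prove Proposition~\ref{prop:loglogmain} by a second-moment change-of-measure (chi-squared) argument that transports the bound $\P_{1/2+\eps_i}(S_i) \ge 0.5$ (the third item of \Cref{cl:ps}) back to $\P_{1/2}$. Write $p := B(1/2)^{\otimes t_i}$ and $q := B(1/2+\eps_i)^{\otimes t_i}$ for the laws of the first $t_i$ samples, and let $L := dq/dp$ be the likelihood ratio. The event $S_i = \{T \in (t_{i-1}, t_i]\}$ is determined by $X_1,\dots,X_{t_i}$ together with the algorithm's internal randomness (which is independent of the input and has the same distribution under both measures), so the standard change-of-measure identity $\P_{1/2+\eps_i}(S_i) = \E_{1/2}[L \cdot \mathbf{1}_{S_i}]$ holds.

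Next I apply Cauchy--Schwarz:
\[
\P_{1/2+\eps_i}(S_i)^2 \;=\; \bigl(\E_{1/2}[L \cdot \mathbf{1}_{S_i}]\bigr)^2 \;\le\; \E_{1/2}[L^2] \cdot \P_{1/2}(S_i),
\]
which rearranges to $\P_{1/2}(S_i) \ge \P_{1/2+\eps_i}(S_i)^2 / \E_{1/2}[L^2] \ge 0.25 / \E_{1/2}[L^2]$, reducing the problem to an upper bound on the second moment $\E_{1/2}[L^2]$.

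By independence, $\E_{1/2}[L^2] = \bigl(\E_{1/2}[(q/p)^2(X)]\bigr)^{t_i}$, and a direct Bernoulli calculation gives
\[
\E_{1/2}\!\bigl[\bigl(q(X)/p(X)\bigr)^2\bigr] = \tfrac{1}{2}(1-2\eps_i)^2 + \tfrac{1}{2}(1+2\eps_i)^2 = 1 + 4\eps_i^2.
\]
Using $t_i \eps_i^2 = \tfrac{100}{C}\log(\delta^{-1}\log\eps^{-1})$ from \eqref{eq:phase_len}, I obtain $\E_{1/2}[L^2] \le \exp(4 t_i \eps_i^2) = (\delta^{-1}\log\eps^{-1})^{400/C}$. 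Since the contradiction assumption \eqref{eq:contradiction} is made for every $C>0$, I am free to pick $C \ge 800$, after which
\[
\P_{1/2}(S_i) \;\ge\; \frac{0.25}{(\delta^{-1}\log\eps^{-1})^{1/2}} \;=\; \frac{\sqrt{\delta}}{4\sqrt{\log\eps^{-1}}} \;\ge\; \frac{\delta}{100\sqrt{\log\eps^{-1}}},
\]
where the last inequality uses $\delta \le 1$.

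The argument is a clean Cauchy--Schwarz lower bound on a probability via the chi-squared-type second moment of the Bernoulli likelihood ratio, and there is no significant analytic obstacle. The main bookkeeping point is choosing $C$ large enough inside the contradiction setup so that the exponent $400/C$ can be driven below $1/2$; this is what delivers the $\delta/\sqrt{\log\eps^{-1}}$ order in the bound. I expect that attempting the argument only via the Hellinger/Bhattacharyya coupling from \Cref{cl:coupling_small} (rather than squaring the likelihood ratio) would be too lossy, since $(\delta/\log\eps^{-1})^{0.1}$ is smaller than $\sqrt{0.9}$, so that approach is the natural-looking pitfall to avoid.
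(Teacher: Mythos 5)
Your proof is correct and takes a genuinely different route from the paper's. The paper proves Proposition~\ref{prop:loglogmain} by a two-stage coupling: it first couples the prefixes of length $t_{i-1}$ under $B(1/2)$ and $B(1/2+\eps_i)$ with error at most $0.01/\log\eps^{-1}$ (Equation~\eqref{eq:tvsmall}), then for elementary events $e \not\in S_{\le i-1}$ it conditionally couples the remaining $t_i - t_{i-1}$ samples via Equation~\eqref{eq:tvsmall2}, and it needs the refined Item~4 of \Cref{cl:ps} (the set $S'$ of events on which $\P_{1/2+\eps_i}(S_i\mid e)\ge 0.1$) to make the conditional accounting close. You instead apply Cauchy--Schwarz to the change-of-measure identity $\P_{1/2+\eps_i}(S_i)=\E_{1/2}[L\cdot\mathbf{1}_{S_i}]$, where $L$ is the $t_i$-sample likelihood ratio, and control the chi-squared second moment $\E_{1/2}[L^2]=(1+4\eps_i^2)^{t_i}$. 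The identity is legitimate because $S_i$ is measurable with respect to the first $t_i$ samples plus internal coins (which have the same law under both measures), and the Bernoulli computation $\E_{1/2}[(q/p)^2]=1+4\eps_i^2$ with $t_i\eps_i^2 = (100/C)\log(\delta^{-1}\log\eps^{-1})$ gives $\E_{1/2}[L^2]\le(\delta^{-1}\log\eps^{-1})^{400/C}$. This route needs only Item~3 of \Cref{cl:ps} ($\P_{1/2+\eps_i}(S_i)\ge 0.5$), bypasses the set $S'$ and the two-stage conditional coupling entirely, and in fact delivers the \emph{stronger} conclusion $\P_{1/2}(S_i)\ge\sqrt{\delta}/(4\sqrt{\log\eps^{-1}})$, which dominates the stated $\delta/(100\sqrt{\log\eps^{-1}})$ since $\delta\le 1$. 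As a side benefit, your argument sidesteps a small internal inconsistency in the paper's chain of inequalities (Equation~\eqref{eq:tvsmall2} provides coupling probability $\delta/\log\eps^{-1}$, while Equation~\eqref{eq:b} invokes $\delta/\sqrt{\log\eps^{-1}}$).

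One bookkeeping point you should make explicit: you invoke the freedom to take $C\ge 800$ in the contradiction hypothesis~\eqref{eq:contradiction}, but other steps in the section (in particular the derivation of Equation~\eqref{eq:hsmall2}, which needs roughly $1000/C\le 0.1$) also impose lower bounds on $C$. Since all such constraints are of the form ``$C$ at least some absolute constant,'' and~\eqref{eq:contradiction} is assumed for every $C>0$, you can simply fix $C$ to be the maximum of all the required thresholds at the outset; just state that this is what is being done so the reader does not think $C=800$ suffices for everything.
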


\begin{proof}
    We consider the following coupling between $B(1/2)^{\otimes t_i}$ and $B(1/2 + \eps)^{\otimes t_i}$. First, we couple the first $t_{i-1}$ samples via item 3 in \Cref{cl:coupling_small} so that the coupling fails on an event of probability at most $0.01$. 
    Next, for any elementary event $e \in B(1/2)^{\otimes t_{i-1}}$ such that $A_0$ run on $e$ does not terminate in the first $t_{i-1}$ steps, we couple the following $t_i - t_{i-1}$ steps using item 4 in \Cref{cl:coupling_small} so that the coupling works on an event of probability at least $\delta / \log \eps^{-1}$. 

    To compute $\P_{1/2}(S_i)$, 
    we first note that for every elementary event $e \not\in S_{\le i-1}$, we have 
    \begin{align}
    \label{eq:b}
    \P_{1/2}(S_i | e) > \P_{1/2 + \eps_i}(S_i | e)  \cdot \frac{\delta}{\sqrt{\log \eps^{-1}}}    
    \end{align}
    by our coupling of the last $t_i - t_{i-1}$ steps. Next, we define the set $S'$ as in \cref{cl:ps} and write  

    \iffocs
   {\footnotesize
    \begin{align*}
    &\P_{1/2}(S_i) \geq \\
    &\ge    \P_{1/2}(S' \cap S_i)\\
    &= \sum_{e \in S'} \P_{1/2}(e) \cdot \P_{1/2}(S_i | e) \\
    &\ge \sum_{e \in S'} \P_{1/2}(e) \cdot \P_{1/2 + \eps_i}(S_i | e) \cdot \frac{\delta}{\sqrt{\log \eps^{-1}}} && \text{(\Cref{eq:b})}\\
    &\ge \frac{0.1 \delta}{\sqrt{\log \eps^{-1}}} \cdot \sum_{e \in S'} \P_{1/2}(e)&&\!\!\!\!\text{(Definition of $S'$)}\\
    &= \frac{0.1\delta}{\sqrt{\log \eps^{-1}}} \cdot \P_{1/2}(S')\\
    &\ge \frac{0.1 \delta \cdot (\P_{1/2 + \eps_i}(S') - 0.01)}{\sqrt{\log \eps^{-1}}} && \!\!\!\!\!\!\!\!\!\!\!\!\!\!\!\!\!\!\!\!\!\!\!\!\!\!\!\!\!\!\!\!\!\!\!\!\!\!\!\text{(second inequality of \Cref{cl:ps})}\\
    &\ge \frac{\delta}{100\sqrt{\log \eps^{-1}}} && \!\!\!\!\!\!\!\!\!\!\text{(\Cref{cl:ps}, item 4)}
    \end{align*}
    }
  \else
    \begin{align*}
    \P_{1/2}(S_i) 
    &\ge    \P_{1/2}(S' \cap S_i)\\
    &= \sum_{e \in S'} \P_{1/2}(e) \cdot \P_{1/2}(S_i | e) \\
    &\ge \sum_{e \in S'} \P_{1/2}(e) \cdot \P_{1/2 + \eps_i}(S_i | e) \cdot \frac{\delta}{\sqrt{\log \eps^{-1}}} && \text{\Cref{eq:b}}\\
    &\ge \frac{0.1 \delta}{\sqrt{\log \eps^{-1}}} \cdot \sum_{e \in S'} \P_{1/2}(e)&&\text{Definition of $S'$}\\
    &= \frac{0.1\delta}{\sqrt{\log \eps^{-1}}} \cdot \P_{1/2}(S')\\
    &\ge \frac{0.1 \delta \cdot (\P_{1/2 + \eps_i}(S') - 0.01)}{\sqrt{\log \eps^{-1}}} && \text{second inequality of \Cref{cl:ps}}\\
    &\ge \frac{\delta}{100\sqrt{\log \eps^{-1}}} && \text{\Cref{cl:ps}, item 4}
    \end{align*}
    \fi

    as needed. 
\end{proof}

We can now finish the proof by reaching a contradiction with the fact that $A$ is correct with error at most $\delta$. 

\begin{claim}
\label{cl:cont}
    The algorithm $A_0$ returns a wrong answer on either $B(1/2 + \eps_\ell)$ or $B(1/2 - \eps_\ell)$ with probability at least $2\delta$. 
\end{claim}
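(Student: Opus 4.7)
The plan is to use Proposition~\ref{prop:loglogmain} together with the coupling from equation~\eqref{eq:tvsmall} of Claim~\ref{cl:coupling_small}. The guiding intuition is that Proposition~\ref{prop:loglogmain} provides $\P_{1/2}(S_i) \ge \delta/(100\sqrt{\log\eps^{-1}})$ for each $i$, and summing the disjoint events $S_0, \ldots, S_{\ell-1}$ yields a probability of order $\delta \sqrt{\log\eps^{-1}}/\log\log\eps^{-1}$ that $A_0$ stops by the end of phase $\ell-1$ when run on $B(1/2)$. For $\eps$ small in terms of $\delta$, this quantity dwarfs $2\delta$. A coupling transfers this probability (almost intact) to inputs $B(1/2 \pm \eps_\ell)$, producing a failure.

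First, I partition each $S_i$ by the output of $A_0$: let $q_i^b := \P_{1/2}(S_i \cap \{\text{output}=b\})$ for $b \in \{0,1\}$, and set $Q^b := \sum_{i=0}^{\ell-1} q_i^b$. Proposition~\ref{prop:loglogmain} together with $\ell = \Theta(\log\eps^{-1}/\log\log\eps^{-1})$ gives $Q^0 + Q^1 = \Omega(\delta \sqrt{\log\eps^{-1}}/\log\log\eps^{-1})$. Without loss of generality $Q^0 \ge Q^1$, so $Q^0 = \Omega(\delta \sqrt{\log\eps^{-1}}/\log\log\eps^{-1})$.

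Next, I transfer this lower bound to the input $B(1/2 - \eps_\ell)$. Consider the event $W := \bigcup_{i=0}^{\ell-1}(S_i \cap \{\text{output}=0\})$; it is determined by the first $t_{\ell-1}$ samples, since stopping in any phase $i \le \ell-1$ consumes at most $t_i \le t_{\ell-1}$ samples. Applying equation~\eqref{eq:tvsmall} of Claim~\ref{cl:coupling_small} with $i = \ell$, $D_1 = B(1/2)$, $D_2 = B(1/2 - \eps_\ell)$ gives $d_{TV}(B(1/2)^{\otimes t_{\ell-1}}, B(1/2 - \eps_\ell)^{\otimes t_{\ell-1}}) \le 0.01/\log\eps^{-1}$, so $\P_{1/2 - \eps_\ell}(W) \ge Q^0 - 0.01/\log\eps^{-1} \ge 2\delta$ for $\eps$ small enough.

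Finally, the median of $B(1/2 - \eps_\ell)$ is $1$, so any output of $0$ within additive error $\eps < \eps_\ell$ is wrong; hence $A_0$ errs on $B(1/2 - \eps_\ell)$ with probability at least $2\delta$. In the symmetric case $Q^1 \ge Q^0$, the analogous argument using $B(1/2 + \eps_\ell)$ (whose median is $0$) yields the same failure probability. The main obstacle is purely quantitative: ensuring that the $\Omega(\delta \sqrt{\log\eps^{-1}}/\log\log\eps^{-1})$ lower bound really exceeds $2\delta + 0.01/\log\eps^{-1}$, which holds for $\eps$ sufficiently small in terms of $\delta$ (consistent with the assumption in~\eqref{eq:eps_delta}).
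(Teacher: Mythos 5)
Your proof is correct and takes essentially the same route as the paper: sum the per-phase lower bounds from Proposition~\ref{prop:loglogmain} to get a much-larger-than-$\delta$ probability of early stopping on $B(1/2)$, then transfer this via the TV bound~\eqref{eq:tvsmall} (applied with $i=\ell$) to conclude one of $B(1/2 \pm \eps_\ell)$ errs. The only stylistic difference is that you make the pigeonhole explicit by partitioning the early-stopping event by the output bit and transferring only the larger piece to the matching distribution, whereas the paper couples all three distributions simultaneously and argues that the shared early output must be wrong for one of the two extreme inputs; these are equivalent bookkeeping choices.
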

\begin{proof}
    Recall that $\ell = \Omega(\log \eps^{-1} / \log\log\eps^{-1})$, thus for $\eps$ small enough we have by \Cref{prop:loglogmain} that 
    \begin{align}
    \label{eq:whyme}
    \P_{1/2}(S_1 \cup \dots \cup S_{\ell-1}) > 100 \delta.        
    \end{align}
    Using \Cref{eq:tvsmall} of \Cref{cl:coupling_small}, we can construct a coupling of $B(1/2)^{\otimes t_{\ell-1}}, B(1/2 + \eps_t)^{\otimes t_{\ell-1}}, B(1/2 - \eps_t)^{\otimes t_{\ell-1}} $ on an event of probability $1 - 0.02/\log \eps^{-1}$. 
   Using \cref{eq:eps_delta}, we conclude that our coupling fails on an event of probability at most $\delta$. 
    In particular, consider the event $E$ of $A_0$ terminating before the beginning of the $\ell$-th phase. The probability of $E$ under our coupling is at least $100\delta - \delta = 99\delta$ for both $B(1/2 + \eps_{\ell-1})$ and $B(1/2 -  \eps_{\ell-1})$ by \cref{eq:whyme} and the coupling. But this means that on at least one of the two distributions, $A_0$ returns a wrong answer with probability at least $99\delta/2 > 2\delta$.     
\end{proof}

\end{document}

\end{document}